\documentclass[11pt,a4paper]{amsart}
\usepackage[english]{babel}
\usepackage[utf8]{inputenc}
\usepackage[foot]{amsaddr}
\usepackage{amsmath,amsfonts}
\usepackage{bbm}
\usepackage{amssymb,amsthm}
\usepackage{thmtools}
\usepackage[format=plain, indention=.5cm]{caption}
\usepackage{subcaption}
\usepackage{graphicx}
\usepackage{color}
\usepackage[dvipsnames]{xcolor}
\usepackage{textcomp}
\usepackage[a4paper]{geometry} 
\geometry{left=25mm, right=25mm, bindingoffset=0mm, top=25mm, bottom=25mm}
\usepackage{mathtools}
\usepackage[
final,
colorlinks,
linkcolor = BrickRed,
citecolor = OliveGreen,
filecolor = black,
urlcolor = BrickRed,
bookmarks=true,
bookmarksnumbered=true,
bookmarksopen=true,
bookmarksopenlevel = 0,
hypertexnames=false,
]{hyperref}

\usepackage{lmodern} 
\usepackage{microtype}

\usepackage{csquotes}

\usepackage{siunitx}
\sisetup{group-minimum-digits = 4}
\newtheorem{lemma}{Lemma}
\newtheorem{remark}[lemma]{Remark}

\usepackage{float}
\usepackage{algorithm}
\makeatletter
\def\plist@algorithm{Alg.\space}
\makeatother
\usepackage{algpseudocodex}
\algrenewcommand\algorithmicrequire{\textbf{Input:}}
\algrenewcommand\algorithmicensure{\textbf{Output:}}

\usepackage{makecell}
\usepackage{multirow}
\usepackage{booktabs}

\usepackage{xcolor}
\usepackage{tikz}
\usetikzlibrary{patterns}
\usepackage{pgfplots}

\definecolor{unia-green}{RGB}{0, 101, 97}
\definecolor{unia-pink}{RGB}{173, 0, 124}
\definecolor{unia-yellow}{RGB}{246, 168, 0}
\definecolor{unia-orange}{RGB}{235, 105, 11}
\definecolor{unia-red}{RGB}{212, 0, 45}
\definecolor{unia-lightblue}{RGB}{0, 174, 207}
\definecolor{unia-blue}{RGB}{0, 135, 193}
\definecolor{unia-lightgreen}{RGB}{72, 147, 36}

\AddToHook{cmd/appendix/before}{\crefalias{section}{appendix}}

\usepackage[capitalise,noabbrev,nameinlink]{cleveref}

\title[Effective anisotropic permeabilities]{Effective permeabilities for flow through anisotropic microscopic geometries}

\author[]{L.~Balazi$^{*}$, F.~Holzberger$^{\S}$, S.~B.~Lunowa$^{\S}$, M.~A.~Peter$^{\dagger}$, D.~Peterseim$^{\dagger}$, B.~Wohlmuth$^{\S}$}

\address{${}^{*}$ Institute of Mathematics, University of Augsburg, Universit\"atsstr.~12a, 86159 Augsburg, Germany}
\address{${}^{\S}$ Department of Mathematics, School of Computation, Information and Technology, Technical University of Munich, 85748 Garching b.~München, Germany}
\address{${}^{\dagger}$ Institute of Mathematics \& Centre for Advanced Analytics and Predictive Sciences (CAAPS), University of Augsburg, Universit\"atsstr.~12a, 86159 Augsburg, Germany}

\email{loic.balazi@uni-a.de}
\email{malte.peter@uni-a.de}
\email{daniel.peterseim@uni-a.de}
\email{fabian.holzberger@tum.de}
\email{stephan.lunowa@tum.de}
\email{wohlmuth@cit.tum.de}

\date{\today}


\begin{document}

\thanks{F.~Holzberger, S.~B.~Lunowa and B.~Wohlmuth gratefully acknowledge the financial support provided by the German Research Foundation (Deutsche Forschungsgemeinschaft, \href{http://dx.doi.org/10.13039/501100001659}{DOI 10.13039/501100001659}) under project number 465242983 within the priority programme \enquote{SPP 2311: Robust coupling of continuum-biomechanical in silico models to establish active biological system models for later use in clinical applications -- Co-design of modeling, numerics and usability}. \\ 
\indent L.~Balazi, M.~A.~Peter and D.~Peterseim gratefully acknowledge the support of the Centre for Future Production of University of Augsburg.}

\begin{abstract}
This work develops a computational and theoretical framework for determining effective permeabilities in anisotropic microscopic geometries containing dense, fibre-like obstacles, motivated by the need to model flow in coiled aneurysm domains accurately. Building on homogenisation theory and fully resolved simulations in Representative Elementary Volumes (REVs), we validate the permeability model introduced in [C. Boutin, Study of permeability by periodic and self-consistent homogenisation. Eur. J. Mech. A Solids, 19(4):603–632, 2000] and propose a systematic methodology for capturing the directional variations induced by fibre orientation. The resulting permeability tensors are incorporated into macroscopic flow simulations based on the Darcy equation, enabling direct comparison of anisotropic and isotropic permeability models across several benchmark configurations.

Our findings show that anisotropy has a significant impact on local flow direction and magnitude, generating directional permeability contrasts which cannot be reproduced by classical isotropic approximations. By integrating coil-induced microstructural effects into continuum-scale hemodynamic models, the proposed approach enables more realistic assessment of post-treatment aneurysm flow behaviour. Beyond this clinical application, the framework is broadly applicable to other biomedical and engineering systems involving fibrous or filamentous porous microstructures.
\end{abstract}

\maketitle

{\scriptsize{\bf Key words.} Effective permeability, Anisotropic microstructures, Homogenisation, Darcy equation, Coiled aneurysms}\\
\indent
{\scriptsize{\bf AMS subject classifications.} {\bf 76M50}, {\bf 35B27}, {\bf 76Z05}, {\bf 92C35}, {\bf 92C50}}

\section{Introduction} 
The precise computation of flow through porous media is essential for many engineering applications, such as groundwater transport through soil \cite{bear_modeling_2018}. In this work, we consider the underlying concepts in the context of a medical application, that is blood flow through thin wired devices such as endovascular coils. Endovascular coiling is one of the common treatment methods for cerebral aneurysms. Understanding and predicting the flow through such coils is of high current interest \cite{adamou2021endovascular, hu2019advances}. 

A cerebral aneurysm is a saccular malformation of a brain artery, which develops due to the gradual dilation of the vessel wall over time. The primary clinical risk is that these walls become progressively thinner and weaker, up to the point where they can rupture even under relatively low stresses. In this case, a subarachnoid haemorrhage occurs, leading to blood entering the subarachnoid space.
The sudden increase in intracranial pressure and disturbance of the normal circulation of the cerebrospinal fluid can lead to severe brain damage and death. 
To mitigate these risks, minimally invasive endovascular procedures have seen significant advancement in the last decades. Among these, endovascular coiling involves the insertion of thin metallic wires via a micro catheter into the aneurysm, where they coil up, creating a form of plug. This procedure directly occludes the aneurysm from the blood flow, lowering the risk of rupture due to sudden haemodynamic changes and is often even applied when rupture has already occurred. Furthermore, the presence of the metallic coils promotes blood clotting within the aneurysm sac, leading to even better occlusion and, in the best case, to the gradual healing and reconstruction of the vessel wall \cite{brinjikji2015mechanisms}.
The choice of coil length, shape and stiffness currently relies largely on the experience of the medical professional. Consequently, it often remains unclear which specific configuration provides the most favourable conditions for healing, or whether an alternative treatment would be superior \cite{seibert2011intracranial}.
Therefore, it is of utmost importance to gain a better understanding of the post-insertion hemodynamics through numerical flow simulations, allowing the effects of different coil designs to be assessed a priori. 

The first approach to study the flow in these kind of media is the direct computation of flow problems (e.g.~the associated Stokes or Navier--Stokes problems). However, simulating the flow in a multiscale medium with many obstacles is very challenging.
Typically, resolving such fine structures with mesh-based solvers results either in poor element quality or high computational cost, due to the large element number needed to resolve the device sufficiently.  In order to overcome these difficulties, a popular class of numerical methods has been developed in the literature, which is based on the strategy of averaging or \enquote{upscaling}, i.e.~it approximates microscale models by macroscopic models which incorporate local informations about the unresolved microscale features. Possibly the most popular approach to achieve this is based on Representative Elementary Volumes (REVs) and spatial averaging \cite{quintard_two-phase_1988}. The idea of this approach can be summarised as follows, see e.g.~\cite{hornung_homogenization_1997}. Let $u$ be a real-valued function on a domain $\Omega$ which describes certain physical quantities with rapid spatial oscillations. To smoothen this function, one considers local averages 
in a small neighbourhood $V(x)$ about the point $x$ representative for the material --- the (local) REV. If the oscillations of $u$ reflect the behaviour of the physical quantity in question on a \enquote{microscale}, the averaged
function $\langle u \rangle$ is assumed to describe its properties on the (much larger) observation scale, i.e.~on the \enquote{macroscale}. Mathematical homogenisation is another method allowing to \enquote{upscale} differential equations \cite{hornung_homogenization_1997}. While the ratio of characteristic microscopic length and characteristic macroscopic length, $\varepsilon > 0$, is a small number determined by the geometry of the problem, the idea of homogenisation is to consider a family of functions $u_\varepsilon$ and to determine the limit 
\begin{equation}
    u = \lim_{\varepsilon \to 0} u_\varepsilon.
\label{eq:lim}
\end{equation}
This limit is then considered the upscaled unknown and the objective of the homogenisation procedure is to determine the system of (differential) equations satisfied by the limit $u$ and proving that \eqref{eq:lim} in fact holds.
In this context, the well-known Darcy law for flow problems, which was first introduced empirically by \cite{Darcy}, plays an important role, by giving a relation between the flow and the characteristics of the porous medium via the permeability tensor $\mathbf{K}$, i.e. 
\begin{equation*}
  v = \frac{1}{\mu} \mathbf{K} (f-\nabla p),
\end{equation*}
where $v$ is the Darcy velocity, $p$ the pressure, $f$ an external force and $\mu>0$ is the viscosity.
As a result of periodic homogenisation, the entries of the permeability tensor $\mathbf{K}$ are given by certain averages over the periodicity cells involving the solutions to auxiliary flow problems stated on these cells.
Analytical expressions for the permeabilities for an arrangement of cylinders (representing fibrous media with parallel cylindrical fibres) were derived in \cite{Boutin} using homogenisation in periodic media and the self-consistent method. 
It has to be noted that the empirical Darcy law does not rely on any periodicity assumption. 
Nonetheless, in what follows, we give a mathematical justification of the Darcy law in periodic porous media, which allows also to give an explicit formulation for the permeability tensor $\mathbf{K}$ in terms of rotation matrices and transversal and longitudinal permeabilities. 

An approach based on rotation matrices and analytical expressions for transversal and longitudinal permeabilities was employed for simulation of blood flow through coiled aneurysms in \cite{romero_bhathal_modeling_2023,romero_bhathal_towards_2023,Horvat2024}. However, in these works only the diagonal terms of the permeability matrix $\mathbf{K}$ were considered. The objective of this work is first to assess if the effective permeability can be modelled by certain explicit formulas for the transversal and longitudinal permeabilities, and to what extent the rotational transformation associated with analytical expressions for permeability allows to model effectively the permeability tensor of  anisotropic fibre-like microscopic geometries, with a particular emphasis on coiled aneurysms. The validation permeabilities are computed from expressions for the permeability based on solutions to cell problems (potentially with slight modifications, namely oversampling, see below) assuming that the fibre-like anisotropic media exhibit a quasi-periodic behaviour. In \cref{sec:permeability}, we propose a numerical validation of the permeability for anisotropic microscopic geometries, and apply it to a realistic coil in \cref{sec.TrueCoil}, while in \cref{sec:Numerical}, we use the modelled permeability matrix $\mathbf{K}$ to compute the flow in complex geometries using the Darcy equation.

\section{Derivation and numerical validation of the permeability for anisotropic microscopic geometries}
\label{sec:permeability}

In this section, we validate tensorial permeabilities resulting from anisotropic microstructures. As test case, we employ a coil packing which is one of the standard medical devices used  for  cerebral aneurysm treatment.  
\subsection{Flow in porous media}

Consider the Stokes problem in a heterogenous domain $\Omega_\varepsilon$.
The steady-state Stokes problem in the domain $\Omega_\varepsilon$ with homogeneous Dirichlet boundary condition is to find the velocity $u_\varepsilon: \Omega_{\varepsilon} \to \mathbb{R}^N$ and the pressure $p_\varepsilon:\Omega_{\varepsilon} \to \mathbb{R}$, solution to
\begin{equation}
    \begin{array}{rcll}
     - \mu \Delta v_\varepsilon + \nabla p_\varepsilon &= &f &\text{in}  \ \Omega_\varepsilon,\\
    \nabla \cdot v_\varepsilon &=& 0 &\text{in} \ \Omega_\varepsilon,\\
    v_\varepsilon &=& 0 & \text{on} \ \partial\Omega_\varepsilon,
    \end{array}
\label{eq:eps_probl}
\end{equation}
where $\mu>0$ is the viscosity and $f$ an applied force with the usual regularity. In the literature, some authors consider a different scaling of \eqref{eq:eps_probl}, replacing the velocity $v_\varepsilon$ by $v_\varepsilon = \varepsilon^2 v_\varepsilon$; however, this does not change the methodology nor the results, up to an $\varepsilon^2$ factor in certain places. To proceed, let $\Omega \subset\mathbb{R}^N$ be a regular bounded open set divided into a fixed solid part $\mathfrak{B}_\varepsilon$ and its complementary fluid part $\Omega_\varepsilon$.
We denote by $(\cdot,\cdot)$ the usual scalar product in $L^2(\Omega_\varepsilon)$; we use the same notation for vector-valued functions.
We introduce the classical velocity space $ V=[H^1_0(\Omega_\varepsilon)]^N = \{u \in [H^1(\Omega_\varepsilon)]^N {\rm such \ that} \ u \lvert_{\partial \Omega_\varepsilon}= 0 \}$ and pressure space $M = L^2_0(\Omega_\varepsilon) = \{q \in L^2(\Omega_\varepsilon) \ {\rm such \ that} \ \int_{\Omega_\varepsilon} q =0 \}$ as well as the bilinear forms $a:V \times V \to \mathbb{R}$ and $b:M \times V \to \mathbb{R}$,
\begin{equation*}
a(u, v) = \mu(\nabla u, \nabla v) , \qquad b(p,v) = -(p, \nabla \cdot v).
\end{equation*}
Then, assuming that $f \in [L^2(\Omega)]^N$, a weak formulation of the Stokes problem \eqref{eq:eps_probl} reads as follows: find $u_\varepsilon \in V$ and $p_\varepsilon \in M$ such that
\begin{equation}
\left \{
\begin{array}{rclrl}
 a(u_\varepsilon, v) + b(p_\varepsilon, v) &=& (f, v), 	&&  v \in V, \\
b(q,u_\varepsilon) &=& 0,  && q \in M .
\end{array}
\right.
\label{eq:WeakStokes}
\end{equation}
It is well-known that if $\Omega_\varepsilon$ is connected (to ensure that the zero-average condition of the pressure is sufficient to remove the undetermined constant of the pressure) there exists a unique weak solution to \eqref{eq:WeakStokes} \cite{girault_finite_1986}.

Denoting the unit cell (periodicity cell of unit side length) by $Y = (0,1)^N$, we assume that the domain $\Omega$ is perforated by a periodically arranged set of closed obstacles $\mathfrak{B}_\varepsilon$ of size $\varepsilon$, i.e.~$\mathfrak{B}_\varepsilon = \Omega \cap \cup_{k\in\mathbb{Z}} \varepsilon (\mathcal{O}+k)$, where $\mathcal{O}$ is the obstacle in the unit cell $Y$, so that $\Omega_\varepsilon = \Omega \setminus {\mathfrak{B}_\varepsilon}$.
The homogenisation limit of the Stokes equations, i.e.~finding the limit system satisfied by the limit of $(u_\varepsilon,p_\varepsilon)$ as $\varepsilon$ tends to zero, was first investigated by \cite{sanchez-palencia_non-homogeneous_1980, tartar:80, allaire_homogenization_1989}.
A review of these results can be found in \cite{hornung_homogenization_1997,CPS}.
There exists an extension of the solution $(v_\varepsilon, p_\varepsilon)$ of \eqref{eq:eps_probl} such that the velocity converges weakly in $[L^2(\Omega)]^N$ to $v$ and the pressure converges strongly in $L^2_0(\Omega)$ to $p$, where $(v, p)$ is the unique solution to the homogenised (Darcy) problem,
\begin{equation}
\left\{
    \begin{array}{rcll}
    v &=& \displaystyle \frac{1}{\mu} \mathbf{K} (f-\nabla p)& \text{in} \ \Omega,\\
    \nabla \cdot v &=& 0 & \text{in} \ \Omega,\\
    v \cdot\nu &=& 0 & \text{on} \ \partial\Omega,
    \end{array}
\right. 
\label{eq:DarcyEquations}
\end{equation}
with $\nu$ the outward unit normal vector to $\partial \Omega$.  Moreover, the sharp convergence rate $O(\sqrt{\varepsilon})$ for the energy norm of the difference between the Stokes and Darcy velocities was shown in \cite{Balazi}. Defining the cell problems for $i=1, \dotsc, N$ by finding $w_i : Y \to \mathbb{R}^N$ and $\pi_i : Y \to \mathbb{R}$, $Y$-periodic and solutions to 
\begin{equation}
\left\{
    \begin{array}{rcll}
     - \Delta w_i +\nabla \pi_i &=& e_i  & \text{in} \ Y\setminus{\mathcal{O}},\\
    \nabla \cdot w_i &=& 0 & \text{in} \ Y\setminus{\mathcal{O}},\\
    w_i&=&0 & \text{in} \ \mathcal{O},
    \end{array}
\right.
\label{eq:cellpb}
\end{equation}
it follows that the permeability tensor $\mathbf{K}$ is given by
\begin{equation}
    \mathbf{K}_{ij} = \frac{1}{\lvert Y \rvert} \int_{Y\setminus{\mathcal{O}}} \nabla w_i \cdot \nabla w_j \,\mathrm{d}y =  \frac{1}{\lvert Y \rvert} \int_{Y\setminus{\mathcal{O}}} w_i \cdot e_j \, \mathrm{d}y, \quad 1 \leq i,j\leq N,
\label{eq:MatrixPermeability}
\end{equation}
where the last equality is obtained by integration by parts, and where $\{e_j\}_{j=1,\dotsc, N}$, is the canonical basis of $\mathbb{R}^N$.
It has to be noted that such an approach holds beyond the periodic case, for example in the case of perforated domains defined as a local perturbation of the periodically perforated domain, see e.g.~\cite{Wolf}, or even for non-periodically evolving microstructures \cite{wiedemann2024,DarcyMemory2024}.
Thus, various engineering applications involving flow through complex porous media require accurate knowledge of the permeability tensor $\mathbf{K}$.
However, the computation of $\mathbf{K}$ is often non-trivial owing to the heterogeneous and anisotropic nature of such materials.
The modelling of the permeability tensor therefore represents a significant area of research in engineering and geosciences.
A common engineering approach is to express $\mathbf{K}$ as a function of the porosity, which is the volume fraction of the fluid phase within the medium.
A benchmark and comparison of different methods to estimate the permeability of regular porous structures can be found in \cite{benchmark} from the well-known Kozeny--Carman relation \cite{Kozeny, Carman} to more advanced approaches such as pore-scale simulations, although limited to a relatively simple geometry, that is circular obstacles in two dimensions and cylinders in three dimensions.

In this present work, motivated by the study of flow in coiled aneurysms which exhibit a fibre-like structure, we pay particular attention to the derivation of the permeabilities of \cite{Boutin}.
Indeed, analytical permeabilities for an arrangement of parallel cylinders (representing fibrous media) were derived in \cite{Boutin} using homogenisation of periodic media and the self-consistent method.
Let $R$ be the radius of the cylinders, let $\phi$ be the porosity (the volume fraction of the fluid phase) of the fibrous medium and let $\rho=1-\phi$ (the volume fraction of the solid phase).
Using the subscripts $\perp$ and $\parallel$ for the transversal and longitudinal directions, respectively, and the subscripts p and v for the methods based on statically continuous fields and  kinematically continuous fields, which act as upper and lower bounds for the actual permeabilities, respectively, it was shown that the transversal and longitudinal permeabilities are given by
\begin{equation}
    K_{\perp, \rm p} = -\frac{R}{8\sqrt{\rho}}\left(\log(\rho) + \frac{1-\rho^2}{1+\rho^2}\right), \quad  K_{\perp,\rm v} = -\frac{R}{8\sqrt{\rho}}\left(\log(\rho)+ \frac{2 (1-\rho)}{1+\rho} \right),
   \label{eq:Kperp} 
\end{equation}
and
\begin{equation}
       K_{\parallel,\rm p}= -\frac{R^2}{4\rho}\left(\log(\rho)+\frac{(1-\rho)(3-\rho)}{2} \right),  \quad  K_{\parallel,\rm v} = -\frac{R^2}{4\rho}\left(\log(\rho) + \frac{2 (1-\rho)}{1+\rho} \right).
    \label{eq:Klong}
\end{equation} 
As noted in \cite{Boutin}, in case of transverse flow, the gradient is perpendicular to the cylinder axis, and the problem is fully determined in a $(r, \theta)$-plane of zero thickness in parallel direction.
Then, the intrinsic permeability is related to a flow per length perpendicular to the pressure gradient in the $(r, \theta)$-plane, instead of a usual flow per area.
This is the reason why $K_{\perp,\rm v}$ and $K_{\perp,\rm p}$ in \eqref{eq:Kperp} have the dimension of a length instead of a squared length. In order to recover the usual dimension of the intrinsic permeability, we have to consider a unit length of material in the cylinder direction.
However, we may replace this length by an arbitrary length, such as $R/\sqrt{\rho}$, which leads to 
\begin{equation}
    K_{\perp,\mathrm{p}}^\prime  = -\frac{R^2}{8\rho}\left(\log(\rho) + \frac{1-\rho^2}{1+\rho^2}\right), \quad 
    K_{\perp,\mathrm{v}}^\prime = -\frac{R^2}{8\rho}\left(\log(\rho) + \frac{2 (1-\rho)}{1+\rho}\right).
    \label{eq:Kperp_mod}
\end{equation}
Given these dimensional considerations, we henceforth restrict our attention to the permeabilities defined in \eqref{eq:Klong} and \eqref{eq:Kperp_mod}. 

For a given fibre (modelled as a cylinder) within a unit cell, let $\mathbf{R}_\theta$ be the rotation matrix which maps the vertical axis $e_z$ to the fibre direction. Now, assuming that the direction given by the angle $\theta$ represents the principal direction of the flow, we propose to estimate the permeability tensor as
\begin{equation*}
    \mathbf{K} = \mathbf{R}_\theta \begin{pmatrix}
        K_{\perp} & 0 & 0 \\
        0 & K_{\perp} & 0  \\
        0 & 0  & K_{\parallel}
    \end{pmatrix} \mathbf{R}_\theta^\top, 
\end{equation*}
where the superscript $\top$ denotes the transpose, and where we propose to model the permeabilities $K_{\parallel}$ and $K_{\perp}$ using \eqref{eq:Klong} and \eqref{eq:Kperp_mod}.
 
For the sake of comparison, in this paper, we also consider an isotropic Kozeny--Carman permeability, i.e.~$\mathbf{K}_{\rm iso}=  K_{\rm iso}\mathbf{I}_3$, with $\mathbf{I}_3$ the identity matrix of $\mathbb{R}^3$ and $K_{\rm iso}$  given by \cite{Schulz} as (using the same notations as before)
\begin{equation}
    K_{\rm iso} = \frac{(2R)^2 (1-\rho)^3}{150 \rho^2}.
\label{eq:Kiso}
\end{equation}

\subsection{Preliminary results}
In this section, we introduce some theoretical results linked to scaling and rotation of the unit cell. 

\begin{lemma}\label{lemma:over}
For any real number $\kappa>0$, consider a scaled unit cell $Y^\kappa =(0, \kappa)^N$.
Introducing the scaled variable $z:=\kappa y$, $y \in Y$, the scaled permeability matrix satisfies
\begin{equation*}
    \mathbf{K}^\kappa = \kappa^2 \mathbf{K}, 
\end{equation*}
where  $\mathbf{K}$ is given by \eqref{eq:MatrixPermeability} and $\mathbf{K}^\kappa$ is defined by the same relation but replacing $Y$ by $Y^\kappa$ in \eqref{eq:cellpb} and \eqref{eq:MatrixPermeability}.
\end{lemma}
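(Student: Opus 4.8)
The plan is to prove the result by an explicit scaling change of variables, reducing the cell problem on $Y^\kappa$ to the one on $Y$ via a suitable ansatz, then invoking uniqueness of the cell-problem solution together with the change of variables in the integral defining the permeability.

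First I would write down the cell problem on the scaled cell: find $w_i^\kappa$ and $\pi_i^\kappa$, $Y^\kappa$-periodic, solving $-\Delta_z w_i^\kappa + \nabla_z \pi_i^\kappa = e_i$ and $\nabla_z \cdot w_i^\kappa = 0$ in $Y^\kappa \setminus \mathcal{O}^\kappa$, with $w_i^\kappa = 0$ in the scaled obstacle $\mathcal{O}^\kappa = \kappa \mathcal{O}$, where all differential operators act on the scaled variable $z = \kappa y$. Since $\nabla_z = \kappa^{-1}\nabla_y$ and hence $\Delta_z = \kappa^{-2}\Delta_y$, the natural ansatz is
\begin{equation*}
 w_i^\kappa(z) = \kappa^2\, w_i(z/\kappa), \qquad \pi_i^\kappa(z) = \kappa\, \pi_i(z/\kappa),
\end{equation*}
with $w_i,\pi_i$ the unit-cell solutions of \eqref{eq:cellpb}. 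Substituting and using the chain rule, the powers of $\kappa$ are chosen precisely so that $-\Delta_z w_i^\kappa = -\Delta_y w_i$ and $\nabla_z \pi_i^\kappa = \nabla_y \pi_i$, whence the momentum equation reduces to $-\Delta_y w_i + \nabla_y \pi_i = e_i$; the incompressibility constraint becomes $\nabla_z \cdot w_i^\kappa = \kappa\,\nabla_y \cdot w_i = 0$, the Dirichlet condition on $\mathcal{O}^\kappa$ reduces to $w_i = 0$ on $\mathcal{O}$, and $Y^\kappa$-periodicity of $w_i^\kappa$ follows from $Y$-periodicity of $w_i$. Thus the ansatz solves the scaled cell problem.

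Next I would invoke the well-posedness of the Stokes-type cell problem --- unique velocity and pressure unique up to an additive constant, the latter being immaterial since the permeability depends only on the velocity --- to conclude that the ansatz is indeed the scaled cell solution. Finally, inserting it into the permeability formula \eqref{eq:MatrixPermeability} posed on $Y^\kappa$ and changing variables $z = \kappa y$ (so $\mathrm{d}z = \kappa^N\,\mathrm{d}y$ and $|Y^\kappa| = \kappa^N$) gives
\begin{equation*}
 \mathbf{K}^\kappa_{ij} = \frac{1}{|Y^\kappa|}\int_{Y^\kappa\setminus\mathcal{O}^\kappa} w_i^\kappa \cdot e_j\,\mathrm{d}z = \frac{1}{\kappa^N}\int_{Y\setminus\mathcal{O}} \kappa^2\, w_i\cdot e_j\,\kappa^N\,\mathrm{d}y = \kappa^2\,\mathbf{K}_{ij},
\end{equation*}
using $|Y| = 1$; equivalently one may start from the gradient form, where $\nabla_z w_i^\kappa = \kappa\,\nabla_y w_i$ produces the same $\kappa^2$ factor.

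This argument is essentially a routine dimensional/scaling computation, so I do not expect a genuine obstacle. The only points requiring care are identifying the correct powers of $\kappa$ in the ansatz for velocity and pressure (forced by balancing the second-order operator against the first-order pressure gradient and the right-hand side), confirming that the obstacle scales as $\mathcal{O}^\kappa = \kappa\mathcal{O}$ under the change of variables, and tracking the Jacobian $\kappa^N$ against the normalisation $|Y^\kappa| = \kappa^N$, which cancel so that only the $\kappa^2$ from the ansatz survives.
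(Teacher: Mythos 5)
Your proof is correct and follows essentially the same route as the paper's (very terse) proof: a change of variables $z=\kappa y$ rescaling the cell solutions of \eqref{eq:cellpb}, followed by averaging over the scaled cell of volume $\kappa^N$ in \eqref{eq:MatrixPermeability}. In fact your version is the more careful one: your ansatz $w_i^\kappa(z)=\kappa^2\,w_i(z/\kappa)$, $\pi_i^\kappa(z)=\kappa\,\pi_i(z/\kappa)$ carries the correct exponents (forced, as you note, by balancing the second-order operator against the unit right-hand side, and confirmed by the uniqueness argument and the cancellation of the Jacobian $\kappa^N$ against $\lvert Y^\kappa\rvert=\kappa^N$), whereas the paper's proof writes $\omega_i^\kappa(z)=\kappa\,\omega_i(z/\kappa)$, which would yield $\mathbf{K}^\kappa=\kappa\,\mathbf{K}$ and is inconsistent with the lemma's own conclusion --- evidently a typographical slip that your computation corrects.
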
 

\begin{proof}
When computing the homogenised coefficients in the domain $Y^\kappa$, one must average over the periodicity cell $Y^\kappa$, whose volume is $\kappa^N$.
Let $\omega_i^\kappa$ denote the cell solution associated with the periodicity cell $Y^\kappa$.
By introducing the change of variable $z:=\kappa y$, $y \in Y$, it follows that the rescaled function satisfies $\omega_i^\kappa(z) = \kappa \omega_i\left(\frac{z}{\kappa}\right)$, where $\omega_i$ denotes the solution to the reference cell problem defined on the unit cell. 
\end{proof}

\begin{remark}
The scaling law of \Cref{lemma:over} also applies to the next order Stokes cell problems, namely $\gamma_{ij}^\kappa(z) = \kappa^2 \gamma_{ij}\left(\frac{z}{\kappa}\right)$, $1 \leq i,j \leq N$.
\end{remark}

\begin{lemma}\label{lemma:preservation_symmetry}
Let $Y =(0,1)^N$ be the unit cell and $\mathcal{O}$ be a unit cylinder oriented along an axis $e_i$, $1\leq i \leq N$, of $\mathbb{R}^N$. Furthermore, let 
\begin{equation*}
    \mathfrak{B}_\varepsilon = \bigcup_{k \in \mathbb{Z}^N} \varepsilon(\mathcal{O}+k)
\end{equation*}
be the periodic arrangements of obstacles and let $\mathbf{R}_\theta\in SO(N)$ be a rotation.
If and only if there exists a scalar $\lambda>0$ such that $\lambda \mathbf{R}_\theta \in \mathbb{Z}^{N \times N}$, then there exists an invertible integer matrix $\mathbf{M} \in \mathbb{Z}^{N \times N}$ such that 
\begin{equation*}
    \mathbf{R}_\theta(\mathbb{Z}^N) = \frac{1}{\lambda}\mathbf{M} (\mathbb{Z}^N)
\end{equation*}
and there holds the identity
\begin{equation*}
    \mathbf{R}_\theta(\mathfrak{B}_\varepsilon) = \bigcup_{\ell \in \frac{1}{\lambda}\mathbf{M}(\mathbb{Z}^N)} \varepsilon(\mathbf{R}_\theta(\mathcal{O})+\ell) .
\end{equation*}
\end{lemma}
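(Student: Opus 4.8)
The plan is to reduce the entire statement to an elementary fact about lattices, since a rotation carries the periodic family $\mathfrak{B}_\varepsilon$ onto a periodic family over the \emph{rotated} lattice $\mathbf{R}_\theta(\mathbb{Z}^N)$, and the whole question is whether this rotated lattice coincides with a rescaled copy of $\mathbb{Z}^N$. The natural candidate throughout is $\mathbf{M} := \lambda\mathbf{R}_\theta$, and I would establish the two implications of the equivalence separately, treating the identity for $\mathfrak{B}_\varepsilon$ as a geometric corollary of the identity for the lattice.

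For the forward implication I would assume $\lambda\mathbf{R}_\theta \in \mathbb{Z}^{N\times N}$ for some $\lambda>0$ and simply set $\mathbf{M} := \lambda\mathbf{R}_\theta$. This $\mathbf{M}$ is an integer matrix by hypothesis, and it is invertible because $\det\mathbf{M} = \lambda^N \det\mathbf{R}_\theta = \lambda^N > 0$, using $\mathbf{R}_\theta \in SO(N)$. The lattice identity $\mathbf{R}_\theta(\mathbb{Z}^N) = \frac{1}{\lambda}\mathbf{M}(\mathbb{Z}^N)$ is then immediate, since $\frac{1}{\lambda}\mathbf{M} = \mathbf{R}_\theta$. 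To obtain the obstacle identity I would push the rotation through the union defining $\mathfrak{B}_\varepsilon$, using linearity to write $\mathbf{R}_\theta(\varepsilon(\mathcal{O}+k)) = \varepsilon(\mathbf{R}_\theta(\mathcal{O}) + \mathbf{R}_\theta k)$, and then relabel the index by $\ell := \mathbf{R}_\theta k$. Since $k \mapsto \mathbf{R}_\theta k$ is a bijection of $\mathbb{Z}^N$ onto $\mathbf{R}_\theta(\mathbb{Z}^N) = \frac{1}{\lambda}\mathbf{M}(\mathbb{Z}^N)$, the index set transforms exactly as claimed.

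For the converse I would start from the lattice identity $\mathbf{R}_\theta(\mathbb{Z}^N) = \frac{1}{\lambda}\mathbf{M}(\mathbb{Z}^N)$ with $\mathbf{M}$ an invertible integer matrix and multiply by $\lambda$ to obtain $\lambda\mathbf{R}_\theta(\mathbb{Z}^N) = \mathbf{M}(\mathbb{Z}^N)$. Because $\mathbf{M} \in \mathbb{Z}^{N\times N}$, the right-hand lattice is contained in $\mathbb{Z}^N$. Evaluating the left-hand side at the canonical basis vectors $e_1,\dots,e_N \in \mathbb{Z}^N$ then shows that every column $\lambda\mathbf{R}_\theta e_j$ lies in $\mathbb{Z}^N$, whence $\lambda\mathbf{R}_\theta \in \mathbb{Z}^{N\times N}$, which is the sought condition.

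The arguments are elementary and no step poses a genuine analytical difficulty; the only point requiring care is the bookkeeping in the converse, where one must remember that equality of the two lattices is a \emph{set} equality (the generating matrices agree only up to right multiplication by a unimodular matrix) rather than an equality of matrices. The cleanest way around this is precisely the evaluation at the basis vectors above, which extracts integrality of $\lambda\mathbf{R}_\theta$ directly from the inclusion $\lambda\mathbf{R}_\theta(\mathbb{Z}^N) \subseteq \mathbb{Z}^N$ without having to track any change of basis. I would also note explicitly that the cylinder hypothesis on $\mathcal{O}$ plays no role in the set-theoretic identities — it is only relevant for the downstream use of the lemma in modelling the permeability tensor — so $\mathcal{O}$ may be taken to be an arbitrary set throughout the proof.
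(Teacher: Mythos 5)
Your proposal is correct and follows the same overall route as the paper: the forward direction (setting $\mathbf{M} := \lambda\mathbf{R}_\theta$, reading off the lattice identity, and pushing the rotation through the union by linearity and index relabelling) is exactly the paper's argument, merely with the invertibility of $\mathbf{M}$ and the bijectivity of the relabelling spelled out. Where you genuinely differ is in the converse, and there your version is the more careful one. The paper simply asserts that the existence of $\mathbf{M}$ forces the \emph{matrix} equality $\lambda\mathbf{R}_\theta = \mathbf{M}$; as you correctly observe, the hypothesis is only the \emph{set} equality $\mathbf{R}_\theta(\mathbb{Z}^N) = \frac{1}{\lambda}\mathbf{M}(\mathbb{Z}^N)$, and two integer matrices generate the same lattice precisely when they differ by right multiplication by a unimodular matrix, so $\lambda\mathbf{R}_\theta$ need not equal $\mathbf{M}$ itself. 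Your fix --- deducing $\lambda\mathbf{R}_\theta(\mathbb{Z}^N) \subseteq \mathbb{Z}^N$ and evaluating at the canonical basis vectors to get integrality of each column of $\lambda\mathbf{R}_\theta$ --- closes this gap cleanly without tracking any change of basis (equivalently, one could note $\lambda\mathbf{R}_\theta = \mathbf{M}\mathbf{U}$ for a unimodular $\mathbf{U}$, which is again integer). Your closing remark that the cylinder hypothesis on $\mathcal{O}$ is irrelevant to these set-theoretic identities is also accurate; it matters only for the lemma's downstream use.
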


\begin{proof}
If $\lambda \mathbf{R}_\theta \in \mathbb{Z}^{N \times N}$ for some $\lambda>0$, then setting $\mathbf{M} := \lambda \mathbf{R}_\theta$ gives an invertible integer matrix such that 
\begin{equation*}
  \mathbf{R}_\theta(\mathbb{Z}^N) = \frac{1}{\lambda} \mathbf{M}(\mathbb{Z}^N).
\end{equation*}
Consequently, the rotation of the periodic arrangement satisfies
\begin{equation*}
    \mathbf{R}_\theta(\mathfrak{B}_\varepsilon) = \bigcup_{\ell \in \frac{1}{\lambda}\mathbf{M}(\mathbb{Z}^N)} \varepsilon (\mathbf{R}_\theta(\mathcal{O}) + \ell).
\end{equation*}
Conversely, if such an integer matrix $\mathbf{M}$ exists, then $\lambda \mathbf{R}_\theta = \mathbf{M} \in \mathbb{Z}^{N \times N}$, establishing the equivalence.
\end{proof}

\begin{remark}
\cref{lemma:preservation_symmetry} is essentially a lattice preservation property under a rotation of the cylinders and ensures that the unit cell is changed by a single uniform scaling $\lambda$ and defines $\frac{1}{\lambda}\mathbf{M}(\mathbb{Z}^N)$ as the new lattice. 
Intuitively, this condition restricts $\mathbf{R}_\theta$ to \enquote{rational rotations} that map the integer lattice $\mathbb{Z}^N$ onto a scaled copy of itself, ensuring the periodic structure is preserved.
For the rotation of an angle $\theta$ in a two-dimensional plane, this condition is equivalent to the existence of a $\lambda>0$ such that $\lambda \cos(\theta)$ and $\lambda \sin(\theta) \in \mathbb{Z}$.
So all angles $\theta$ satisfying this condition are exactly those for which $\tan(\theta)$ is a rational number.
For example, with $\theta=\frac{\pi}{4}$ with $\lambda = \sqrt{2}$ gives
\begin{equation*}
    \lambda \mathbf{R}_{\pi/4} = \sqrt{2}\begin{pmatrix}
        \frac{1}{\sqrt{2}} & -\frac{1}{\sqrt{2}} \\
        \frac{1}{\sqrt{2}} & \frac{1}{\sqrt{2}}
    \end{pmatrix}
= \begin{pmatrix}
        1 & -1 \\
        1 &1
    \end{pmatrix} \in \mathbb{Z}^{2\times 2}.
\end{equation*}
\label{rem:Rational}
\end{remark}

From \eqref{eq:MatrixPermeability}, it follows that the tensor $\mathbf{K}$ is symmetric and positive semi-definite.
Consequently, there exists an orthogonal matrix $\mathbf{O}$ and a diagonal tensor $\mathbf{K}_{\rm eff}$, such that
\begin{equation}
    \mathbf{K} = \mathbf{O} \mathbf{K}_{\rm eff}\mathbf{O}^\top.
\label{eq:O}
\end{equation}
$\mathbf{K}_{\rm eff}$ is called the effective permeability tensor.
In the following lemma, we give a characterisation of the matrix $\mathbf{O}$ for a particular case.

\begin{lemma}
Let $\mathbf{R}_\theta$ be a rotation satisfying \cref{lemma:preservation_symmetry}, i.e.~there exists a scalar $\lambda>0$ such that $\lambda \mathbf{R}_\theta \in \mathbb{Z}^{N \times N}$, and consider the rotated variable $z := \mathbf{R}_\theta y$.
Let us consider an inclined fibre $\tilde{\mathcal{O}}$ inside the unit cell $Y$, i.e.~$\tilde{\mathcal{O}} = \mathbf{R}_\theta(\mathcal{O})$, where $\mathcal{O}$ is the reference fibre oriented along the vertical axis $e_N$.
Then, there exists a diagonal tensor $\mathbf{K}_{\rm eff}$ such that
\begin{equation}
   \mathbf{K}  = \mathbf{R}_\theta \mathbf{K}_{\rm eff}\mathbf{R}_\theta^\top, 
 \label{eq:Rotation}   
\end{equation}
that is $\textbf{O}= \mathbf{R}_\theta$, where the matrix $\textbf{O}$ is defined in \eqref{eq:O}.
\label{lemma:Rotation}
\end{lemma}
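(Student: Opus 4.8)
The plan is to exploit the rotation-covariance of the Stokes system, so that the permeability behaves as a genuine second-order tensor: rotating an axis-aligned fibre should rotate its permeability by conjugation with $\mathbf{R}_\theta$. Accordingly, I would take $\mathbf{K}_{\rm eff}$ to be the permeability tensor of the reference configuration, in which $\mathcal{O}$ is oriented along $e_N$, and first show that $\mathbf{K}_{\rm eff}$ is diagonal. This follows from the symmetries of the straight-cylinder cell: each transversal reflection $y_k \mapsto 1-y_k$ ($k<N$) leaves $Y\setminus\mathcal{O}$ and the lattice invariant while, by uniqueness of the cell solution, forcing the $k$-th velocity component to be odd and the others even; hence every integrand $(w_i)_k$ with $i\neq k$ in \eqref{eq:MatrixPermeability} is odd and integrates to zero, killing all off-diagonal entries. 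The invariance of the cubic lattice under permutation of the transversal axes then equates the transversal diagonal entries, giving $\mathbf{K}_{\rm eff}=\mathrm{diag}(K_\perp,\dots,K_\perp,K_\parallel)$.

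The core step is to relate the cell problem \eqref{eq:cellpb} for the inclined fibre $\tilde{\mathcal{O}}=\mathbf{R}_\theta(\mathcal{O})$ to the reference problem through the change of variables $z:=\mathbf{R}_\theta y$. Given the reference solutions $(w_i,\pi_i)$, I would introduce the pushed-forward fields $W_i(z):=\mathbf{R}_\theta\, w_i(\mathbf{R}_\theta^\top z)$ and $\Pi_i(z):=\pi_i(\mathbf{R}_\theta^\top z)$ and verify, using orthogonality of $\mathbf{R}_\theta$, the covariance identities $\Delta_z W_i = \mathbf{R}_\theta (\Delta_y w_i)(\mathbf{R}_\theta^\top z)$, $\nabla_z\Pi_i=\mathbf{R}_\theta(\nabla_y\pi_i)(\mathbf{R}_\theta^\top z)$ and $\nabla_z\cdot W_i=(\nabla_y\cdot w_i)(\mathbf{R}_\theta^\top z)$. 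Hence the Stokes operator is covariant and the forcing $e_i$ is mapped to $\mathbf{R}_\theta e_i$, so that $W_i$ solves the inclined cell problem with right-hand side $\mathbf{R}_\theta e_i$; by linearity the solution with forcing $e_j$ is $\tilde{w}_j=\sum_i (\mathbf{R}_\theta)_{ji}\,W_i$. Here I rely on \cref{lemma:preservation_symmetry}: the rationality of $\mathbf{R}_\theta$ makes the rotated arrangement periodic with respect to the lattice $\tfrac{1}{\lambda}\mathbf{M}(\mathbb{Z}^N)=\mathbf{R}_\theta(\mathbb{Z}^N)$, so that the inclined cell problem and the averaging in \eqref{eq:MatrixPermeability} are well posed on a fundamental domain of this lattice.

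It then remains to substitute this correspondence into \eqref{eq:MatrixPermeability}. Transforming the integral by $z=\mathbf{R}_\theta y$, using $\nabla_z=\mathbf{R}_\theta\nabla_y$ and the invariance of the volume measure under the rotation (so that the fundamental domain keeps the measure $|Y|$), the gradient products reorganise into $\mathbf{K}_{jk}=\sum_{i,\ell}(\mathbf{R}_\theta)_{ji}(\mathbf{K}_{\rm eff})_{i\ell}(\mathbf{R}_\theta)_{k\ell}$, that is $\mathbf{K}=\mathbf{R}_\theta\mathbf{K}_{\rm eff}\mathbf{R}_\theta^\top$. Since $\mathbf{R}_\theta\in SO(N)$ and $\mathbf{K}_{\rm eff}$ is diagonal, this is an orthogonal diagonalisation of $\mathbf{K}$, whence the matrix $\mathbf{O}$ of \eqref{eq:O} may be chosen as $\mathbf{R}_\theta$, as claimed.

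I expect the main obstacle to lie in the bookkeeping of periodicity and normalisation rather than in the covariance computation, which is routine. One must confirm that the pushed-forward fields inherit periodicity with respect to the rotated lattice $\mathbf{R}_\theta(\mathbb{Z}^N)$ and identify the correct fundamental cell for the averaging; if one prefers to re-express the problem on an axis-aligned cell commensurate with $\mathbb{Z}^N$ --- as the rationality of \cref{lemma:preservation_symmetry} permits --- then the attendant change of cell size must be compensated using the scaling law of \cref{lemma:over}, so that no spurious factor enters the transformation. Once periodicity and volume normalisation are handled consistently, the tensorial transformation law together with the diagonality of $\mathbf{K}_{\rm eff}$ yields the result.
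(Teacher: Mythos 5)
Your proposal is correct and takes essentially the same approach as the paper, whose proof is a one-line appeal to ``the classical transformation of the cell problems \eqref{eq:cellpb} under scaling and rotation'': you simply carry that transformation out explicitly, via the push-forward $W_i(z)=\mathbf{R}_\theta\, w_i(\mathbf{R}_\theta^\top z)$ and the covariance of the Stokes operator, together with the reflection-symmetry argument for the diagonality of the reference tensor and the lattice/scaling bookkeeping through \cref{lemma:preservation_symmetry} and \cref{lemma:over}. Your write-up is a faithful, fully detailed expansion of the paper's terse proof, with no gaps.
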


\begin{proof}
Equation \eqref{eq:Rotation} follows directly from the classical transformation of the cell problems \eqref{eq:cellpb} under scaling and rotation.
\end{proof}

\begin{remark}
Although \eqref{eq:Rotation} is strictly valid only for certain rotation angles due to \cref{lemma:preservation_symmetry}, numerical experiments indicate that it remains a reliable, albeit approximate, description even for angles which break the cell symmetry.
Up to a reordering of the eigenvalues of $\mathbf{K}_{\rm eff}$, the effective permeability tensor can be identified with a scaled version of the reference permeability $\mathbf{K}_0$, where $\mathbf{K}_0$ corresponds to the case in which the fibres are aligned with the vertical axis.
\end{remark}

\subsection{Numerical validation of the permeabilities}
This section has two main objectives.
The first is to validate the use of the analytical expressions for the permeabilities \eqref{eq:Klong} and \eqref{eq:Kperp_mod} using numerical experiments.
The second objective is motivated by the study of the actual coil geometry, in which the coil within the REVs, namely the unit cell, do not align on opposite faces of the unit cube, preventing the use of periodic boundary conditions.
To address this, we embed the REV within a larger computational cell (oversampling), enabling the application of periodic boundary conditions. Consequently, we propose a calibration procedure to determine the true permeability corresponding to the oversampled configuration.

\subsection*{Experiment 1} 
In order to validate the use of analytical expressions  \eqref{eq:Klong} and \eqref{eq:Kperp_mod} for the permeabilities, we perform two studies.
First, we consider configurations within a unit cell $Y$ containing between 1 and 9 cylinders, each of radius $R=0.15$ and aligned with the  $e_z$-axis. This choice of radius is motivated by its close correspondence to the coil radius analysed in subsequent sections.  Each configuration is built from a regular $3\times 3$ grid of potential cylinder locations, giving nine subsquares in total. For a given number of cylinders $n \in \{1,\dots,9 \}$, a configuration is obtained by randomly choosing $n$ subsquares from the grid. Each selected subsquare contains exactly one cylinder. For each selected subsquare, the position of the cylinder centre is chosen randomly (uniform distribution), subject to the constraint that the cylinder boundary keeps a distance from the boundary of the subsquare of at least $\delta=5\cdot 10^{-3}$. To ensure a representative sampling of the configuration space, 40 random configurations are generated for each value of $n$ (except for $n=1$, in which case only a single configuration is enough), noting that no further visible changes have been observed any more by the inclusion of the last samples. For each configuration, the analytical permeabilities  \eqref{eq:Klong}, \eqref{eq:Kperp_mod} and \eqref{eq:Kiso}, are compared with the entries of the computed effective permeability tensor. To this end, the cell problems \eqref{eq:cellpb} are solved using the Finite Element method with a Taylor–Hood $\mathbb{P}_2$–$\mathbb{P}_1$ discretization on a mesh sufficiently fine to resolve the geometry. The computations are carried out in FreeFEM \cite{FreeFEM}, a finite element software, with careful verification of mesh convergence to ensure the accuracy of the numerical results. The permeability tensor is then evaluated using \eqref{eq:MatrixPermeability}. The resulting permeability matrix $\mathbf{K}$ is either diagonal, in which case $\mathbf{K}_{\rm eff} =\mathbf{K}$, or symmetric positive semi-definite. In the latter case, $\mathbf{K}_{\rm eff}$ is computed as in \eqref{eq:O}. Finally, we obtain a tensor of the form 
\begin{equation*}
    \mathbf{K}_{\rm eff} \approx \begin{pmatrix}
        K_{\perp, \rm exp} & 0 & 0 \\
        0 & K_{\perp, \rm exp} & 0  \\
        0 & 0  & K_{\parallel, \rm exp}
    \end{pmatrix},
\end{equation*}
where, as previously, the subscripts $\perp$ and $\parallel$ denote the transversal and longitudinal permeability, respectively (we use the same notation for simplicity, although the two transversal components may differ). The subscript \enquote{exp} denotes the experimentally obtained permeabilities, emphasising their numerical (as opposed to analytical) aspect. We then average the computed permeabilities over all realisations. Second, we repeat the same procedure, but considering only one cylinder of different radii, oriented along the $e_z$-axis, in a unit cell (in this case, the computed permeability tensor is diagonal). The resulting effective permeabilities are presented in \cref{fig:PermBoutin0} and in \cref{fig:PermBoutin01}, in which the region of interest for the application to a coil in an aneurysm is shaded in grey.

\begin{figure}[tbp]
    \centering
    \begin{subfigure}[b]{0.49\textwidth}
        \begin{tikzpicture}[scale=0.9]

\begin{axis}[
name = plot1, 
log basis y={10},
tick align=outside,
tick pos=left,
xlabel={Porosity \(\displaystyle \phi\)},
xmajorgrids,
xmin=0.31, xmax=1,
xminorgrids,
xtick style={color=black},
ylabel={Permeability},
ymin=1e-5, ymax=1,
ymajorgrids,
yminorgrids,
axis on top,
minor y grid style={
    opacity=0.5,    
    line width=0.5pt
},
ymode=log,
ytick style={color=black},
legend cell align={left},
legend style={
anchor=north west,
at={(0.02,0.98)},
fill opacity=1, 
draw opacity=1, 
text opacity=1
}
]

\fill[gray!20, draw=none]
    (axis cs:0.63,1e-5) rectangle (axis cs:0.95,1);

\addplot [unia-orange, line width=1]
table {%
0.328163265306122 4.69779055612832e-05
0.342244897959184 5.55947455867969e-05
0.356326530612245 6.55185851583973e-05
0.370408163265306 7.69262782985158e-05
0.384489795918367 9.0019208764214e-05
0.398571428571429 0.00010502716156452
0.41265306122449 0.000122212898673989
0.426734693877551 0.0001418775859734
0.440816326530612 0.000164367253428091
0.454897959183673 0.000190080514904766
0.468979591836735 0.000219477830672125
0.483061224489796 0.000253092668331955
0.497142857142857 0.000291545011806375
0.511224489795918 0.000335557790011198
0.52530612244898 0.000385976956473804
0.539387755102041 0.000443796161500189
0.553469387755102 0.000510187237765207
0.567551020408163 0.000586538093936535
0.581632653061224 0.000674500115334652
0.595714285714286 0.000776047857472847
0.609795918367347 0.000893554762553819
0.623877551020408 0.00102988994031398
0.637959183673469 0.00118854289554844
0.652040816326531 0.00137378569927984
0.666122448979592 0.00159088586051406
0.680204081632653 0.00184638863547035
0.694285714285714 0.00214849561408981
0.708367346938775 0.00250757859382163
0.722448979591837 0.0029368863427724
0.736530612244898 0.00345353078748012
0.750612244897959 0.00407988510779337
0.76469387755102 0.00484560083002051
0.778775510204082 0.00579057517390748
0.792857142857143 0.00696941226431119
0.806938775510204 0.00845829625655892
0.821020408163265 0.01036587771792
0.835102040816327 0.0128510710274503
0.849183673469388 0.0161532323828661
0.863265306122449 0.0206455599856338
0.87734693877551 0.0269344779083657
0.891428571428571 0.0360561867827464
0.905510204081633 0.0498955109654998
0.919591836734694 0.07216660732704
0.933673469387755 0.111009908223644
0.947755102040816 0.1871336432956
0.961836734693877 0.366576209821991
};
\addlegendentry{$K_{\rm iso}$}
\addplot [unia-green, dashed, line width =2]
table {%
0.328163265306122 8.25804521053592e-05
0.342244897959184 9.79191716080374e-05
0.356326530612245 0.000115581253714309
0.370408163265306 0.000135867925837757
0.384489795918367 0.000159117751744345
0.398571428571429 0.00018571165576078
0.41265306122449 0.000216078730277413
0.426734693877551 0.0002507029706084
0.440816326530612 0.000290131112184859
0.454897959183673 0.000334981783572674
0.468979591836735 0.00038595623704123
0.483061224489796 0.000443850979139423
0.497142857142857 0.000509572700629004
0.511224489795918 0.000584156003062158
0.52530612244898 0.000668784544808238
0.539387755102041 0.000764816391261372
0.553469387755102 0.000873814564312004
0.567551020408163 0.000997584061436967
0.581632653061224 0.00113821697780527
0.595714285714286 0.00129814784757004
0.609795918367347 0.00148022196817733
0.623877551020408 0.00168778034856415
0.637959183673469 0.0019247661217269
0.652040816326531 0.00219585892062236
0.666122448979592 0.00250664603574545
0.680204081632653 0.0028638424567212
0.694285714285714 0.00327557661243555
0.708367346938775 0.00375176548396123
0.722448979591837 0.00430461290811836
0.736530612244898 0.00494928014699151
0.750612244897959 0.00570480118039264
0.76469387755102 0.00659535175179397
0.778775510204082 0.00765203971510073
0.792857142857143 0.00891548023503732
0.806938775510204 0.0104395813813917
0.821020408163265 0.0122972477227459
0.835102040816327 0.0145892185289877
0.849183673469388 0.017458213980829
0.863265306122449 0.0211124481238763
0.87734693877551 0.0258664930307445
0.891428571428571 0.0322162063893561
0.905510204081633 0.0409854391094471
0.919591836734694 0.0536380049661095
0.933673469387755 0.0730162505293543
0.947755102040816 0.105365044197498
0.961836734693877 0.16720197372748
0.975918367346939 0.318541335148184
};
\addlegendentry{$K^\prime_{\perp, \mathrm{p}}$}
\addplot [unia-pink, dashed, line width=2]
table {%
0.328163265306122 2.16087542724322e-05
0.342244897959184 2.57450894557819e-05
0.356326530612245 3.05439379347899e-05
0.370408163265306 3.61000326104448e-05
0.384489795918367 4.25216273771818e-05
0.398571428571429 4.99326113208234e-05
0.41265306122449 5.84749958314748e-05
0.426734693877551 6.83118493174542e-05
0.440816326530612 7.96307711623967e-05
0.454897959183673 9.2648017867896e-05
0.468979591836735 0.000107613421218043
0.483061224489796 0.000124816272440605
0.497142857142857 0.000144592389917912
0.511224489795918 0.000167332643958732
0.52530612244898 0.000193493284446055
0.539387755102041 0.000223608511215378
0.553469387755102 0.000258305850179976
0.567551020408163 0.000298325060712216
0.581632653061224 0.000344541515849274
0.595714285714286 0.000397995286562703
0.609795918367347 0.000459927553125825
0.623877551020408 0.000531826501528319
0.637959183673469 0.000615485600557934
0.652040816326531 0.000713078183415831
0.666122448979592 0.000827253707605246
0.680204081632653 0.000961263136687323
0.694285714285714 0.00111912388217951
0.708367346938775 0.00130583913968623
0.722448979591837 0.00152769300791871
0.736530612244898 0.00179265272130847
0.750612244897959 0.0021109246919263
0.76469387755102 0.00249573529556149
0.778775510204082 0.00296444645318153
0.792857142857143 0.00354018079429615
0.806938775510204 0.00425424137897051
0.821020408163265 0.00514980454602214
0.835102040816327 0.00628771698568626
0.849183673469388 0.00775589696309406
0.863265306122449 0.00968517009645217
0.87734693877551 0.0122771676317478
0.891428571428571 0.0158561977963938
0.905510204081633 0.0209722805584064
0.919591836734694 0.028623554084953
0.933673469387755 0.0407911909729228
0.947755102040816 0.0619303862610072
0.961836734693877 0.104127539334098
0.975918367346939 0.212601152609326
};
\addlegendentry{$K^\prime_{\perp, \mathrm{v}}$}
\addplot[
    color=black,
    mark=triangle,
    only marks,
    mark size=3pt,
    every mark/.append style={solid},
    error bars/.cd,
        y dir=both,
        y explicit
] table[
    x=PHI ,
    y=mean_kperp,
    y error=std_kperp
]{figures/PLOTS/outputPertubation2.dat};
\addlegendentry{$K_{\perp, \mathrm{exp}}$}
\addplot[
    color=red,
    mark=triangle,
    only marks,
    mark size=3pt,
    every mark/.append style={solid},
    error bars/.cd,
        y dir=both,
        y explicit
] table[
    x=PHI ,
    y=mean_kperp_over,
    y error=std_kperp_over
]{figures/PLOTS/outputPertubation2.dat};
\addlegendentry{$\widetilde{K}_{\perp, \mathrm{exp}}$}
\end{axis}

\node at (plot1.south) [yshift=-45pt] {(a) Transversal Permeability};

\end{tikzpicture}
    \end{subfigure}
    \begin{subfigure}[b]{0.49\textwidth}
        \begin{tikzpicture}[scale=0.9]

\begin{axis}[
name = plot2, 
log basis y={10},
tick align=outside,
tick pos=left,
xlabel={Porosity \(\displaystyle \phi\)},
xmajorgrids,
xmin=0.31, xmax=1,
xminorgrids,
xtick style={color=black},
ylabel={Permeability},
ymin=1e-5, ymax=1,
ymajorgrids,
yminorgrids,
axis on top,
minor y grid style={
    opacity=0.5,    
    line width=0.5pt
},
ymode=log,
ytick style={color=black},
legend cell align={left},
legend style={
anchor=north west,
at={(0.02,0.98)},
fill opacity=1, 
draw opacity=1, 
text opacity=1
}
]

\fill[gray!20, draw=none]
  (axis cs:0.63,1e-5) rectangle (axis cs:0.95,1);

\addplot [unia-orange, line width=1]
table {%
0.328163265306122 4.69779055612832e-05
0.342244897959184 5.55947455867969e-05
0.356326530612245 6.55185851583973e-05
0.370408163265306 7.69262782985158e-05
0.384489795918367 9.0019208764214e-05
0.398571428571429 0.00010502716156452
0.41265306122449 0.000122212898673989
0.426734693877551 0.0001418775859734
0.440816326530612 0.000164367253428091
0.454897959183673 0.000190080514904766
0.468979591836735 0.000219477830672125
0.483061224489796 0.000253092668331955
0.497142857142857 0.000291545011806375
0.511224489795918 0.000335557790011198
0.52530612244898 0.000385976956473804
0.539387755102041 0.000443796161500189
0.553469387755102 0.000510187237765207
0.567551020408163 0.000586538093936535
0.581632653061224 0.000674500115334652
0.595714285714286 0.000776047857472847
0.609795918367347 0.000893554762553819
0.623877551020408 0.00102988994031398
0.637959183673469 0.00118854289554844
0.652040816326531 0.00137378569927984
0.666122448979592 0.00159088586051406
0.680204081632653 0.00184638863547035
0.694285714285714 0.00214849561408981
0.708367346938775 0.00250757859382163
0.722448979591837 0.0029368863427724
0.736530612244898 0.00345353078748012
0.750612244897959 0.00407988510779337
0.76469387755102 0.00484560083002051
0.778775510204082 0.00579057517390748
0.792857142857143 0.00696941226431119
0.806938775510204 0.00845829625655892
0.821020408163265 0.01036587771792
0.835102040816327 0.0128510710274503
0.849183673469388 0.0161532323828661
0.863265306122449 0.0206455599856338
0.87734693877551 0.0269344779083657
0.891428571428571 0.0360561867827464
0.905510204081633 0.0498955109654998
0.919591836734694 0.07216660732704
0.933673469387755 0.111009908223644
0.947755102040816 0.1871336432956
0.961836734693877 0.366576209821991
};
\addlegendentry{$K_{\rm iso}$}
\addplot [unia-green, dotted, line width=2]
table {%
0.328163265306122 0.000131709672264658
0.342244897959184 0.00015488978402181
0.356326530612245 0.000181357472647725
0.370408163265306 0.000211514542688948
0.384489795918367 0.000245812148628047
0.398571428571429 0.000284757764256261
0.41265306122449 0.000328923299219363
0.426734693877551 0.000378954581571623
0.440816326530612 0.000435582473112364
0.454897959183673 0.000499635944171741
0.468979591836735 0.000572057509722019
0.483061224489796 0.000653921523644522
0.497142857142857 0.000746455948543542
0.511224489795918 0.000851068372485331
0.52530612244898 0.000969377241955137
0.539387755102041 0.00110324953638514
0.553469387755102 0.00125484644321117
0.567551020408163 0.00142667903026885
0.581632653061224 0.00162167649152501
0.595714285714286 0.00184327031466032
0.609795918367347 0.00209549875854005
0.623877551020408 0.0023831374406206
0.637959183673469 0.00271186377165753
0.652040816326531 0.00308846566182986
0.666122448979592 0.00352110869165131
0.680204081632653 0.00401968129486503
0.694285714285714 0.00459624520733915
0.708367346938775 0.00526562969145211
0.722448979591837 0.00604622474350101
0.736530612244898 0.00696105369105987
0.750612244897959 0.00803924433371849
0.76469387755102 0.00931807859310571
0.778775510204082 0.0108458982710718
0.792857142857143 0.0126863052553367
0.806938775510204 0.0149243666892921
0.821020408163265 0.0176760112784753
0.835102040816327 0.0211026644402702
0.849183673469388 0.0254347956199764
0.863265306122449 0.0310112669128112
0.87734693877551 0.0383480953948307
0.891428571428571 0.048265251960511
0.905510204081633 0.062136403582282
0.919591836734694 0.082423289300833
0.933673469387755 0.113949206263445
0.947755102040816 0.167413986632705
0.961836734693877 0.271421376864006
0.975918367346939 0.531203925121178
};
\addlegendentry{$K_{\parallel, \mathrm{p}}$}
\addplot [unia-pink, dotted, line width=2]
table {%
0.328163265306122 4.32175085448644e-05
0.342244897959184 5.14901789115637e-05
0.356326530612245 6.10878758695797e-05
0.370408163265306 7.22000652208895e-05
0.384489795918367 8.50432547543635e-05
0.398571428571429 9.98652226416468e-05
0.41265306122449 0.00011694999166295
0.426734693877551 0.000136623698634908
0.440816326530612 0.000159261542324793
0.454897959183673 0.000185296035735792
0.468979591836735 0.000215226842436086
0.483061224489796 0.00024963254488121
0.497142857142857 0.000289184779835824
0.511224489795918 0.000334665287917464
0.52530612244898 0.000386986568892109
0.539387755102041 0.000447217022430756
0.553469387755102 0.000516611700359952
0.567551020408163 0.000596650121424433
0.581632653061224 0.000689083031698549
0.595714285714286 0.000795990573125407
0.609795918367347 0.00091985510625165
0.623877551020408 0.00106365300305664
0.637959183673469 0.00123097120111587
0.652040816326531 0.00142615636683166
0.666122448979592 0.00165450741521049
0.680204081632653 0.00192252627337465
0.694285714285714 0.00223824776435901
0.708367346938775 0.00261167827937246
0.722448979591837 0.00305538601583742
0.736530612244898 0.00358530544261693
0.750612244897959 0.00422184938385261
0.76469387755102 0.00499147059112298
0.778775510204082 0.00592889290636305
0.792857142857143 0.00708036158859231
0.806938775510204 0.00850848275794102
0.821020408163265 0.0102996090920443
0.835102040816327 0.0125754339713725
0.849183673469388 0.0155117939261881
0.863265306122449 0.0193703401929043
0.87734693877551 0.0245543352634956
0.891428571428571 0.0317123955927876
0.905510204081633 0.0419445611168128
0.919591836734694 0.057247108169906
0.933673469387755 0.0815823819458457
0.947755102040816 0.123860772522014
0.961836734693877 0.208255078668195
0.975918367346939 0.425202305218653
};
\addlegendentry{$K_{\parallel, \mathrm{v}}$}

\addplot[
    color=black,
    mark=o,
    only marks,
    mark size=2pt,
    every mark/.append style={solid},
    error bars/.cd,
        y dir=both,
        y explicit
] table[
    x=PHI ,
    y=mean_klong,
    y error=std_klong
]{figures/PLOTS/outputPertubation2.dat};
\addlegendentry{$K_{\parallel, \mathrm{exp}}$}
\addplot[
    color=red,
    mark=o,
    only marks,
    mark size=2pt,
    every mark/.append style={solid},
    error bars/.cd,
        y dir=both,
        y explicit
] table[
    x=PHI ,
    y=mean_klong_over,
    y error=std_klong_over,
]{figures/PLOTS/outputPertubation2.dat};
\addlegendentry{$\widetilde{K}_{\parallel, \mathrm{exp}}$}
\end{axis}

\node at (plot2.south) [yshift=-45pt] {(b) Longitudinal Permeability};

\end{tikzpicture}
    \end{subfigure}
\caption{Comparisons between the averaged computed effective permeabilities and analytical permeabilities \eqref{eq:Klong}, \eqref{eq:Kperp_mod} and~\eqref{eq:Kiso} for arrangements of cylinders of radius $R=0.15$. The standard deviation of the averaged permeabilities are also represented. The number of cylinders vary from 1 to 9, from the right to the left. The region of interest for the application to a coil in an aneurysm is shaded in grey.}
  \label{fig:PermBoutin0}
\end{figure}

\begin{figure}[tbp]
    \centering
    \begin{subfigure}[b]{0.49\textwidth}
        \begin{tikzpicture}[scale=0.9]
\begin{axis}[
name=plot1,
log basis y={10},
tick align=outside,
tick pos=left,
enlarge x limits = true,
xlabel={Radius \(\displaystyle R\)},
xmajorgrids,
xmin=0.1, xmax=0.4,
xminorgrids,
xtick style={color=black},
ylabel={Permeability},
ymin=8e-4, ymax=0.4,
ymajorgrids,
yminorgrids,
minor y grid style={
    opacity=0.5,    
    line width=0.5pt
},
ymode=log,
ytick style={color=black},
legend cell align={left},
legend style={
anchor=south west,
at={(0.02,0.02)},
fill opacity=1, 
draw opacity=1, 
text opacity=1
}
]
\addplot [unia-orange, line width=1pt]
table {%
0.09 0.308745395347919
0.0982051282051282 0.255455424433351
0.106410256410256 0.214047663411465
0.114615384615385 0.181247467411453
0.122820512820513 0.154835963143059
0.131025641025641 0.133266421145126
0.139230769230769 0.115433928190816
0.147435897435897 0.100532145248355
0.155641025641026 0.0879614566700972
0.163846153846154 0.077268470142316
0.172051282051282 0.068105208645072
0.18025641025641 0.060200995881195
0.188461538461538 0.0533427154066235
0.196666666666667 0.0473607098335565
0.204871794871795 0.0421185509488224
0.213076923076923 0.0375055122750276
0.221282051282051 0.0334309579362632
0.229487179487179 0.0298201099109335
0.237692307692308 0.0266108198512992
0.245897435897436 0.0237510819530587
0.254102564102564 0.0211970986490752
0.262307692307692 0.0189117630274671
0.27051282051282 0.0168634584409181
0.278717948717949 0.0150251017416996
0.286923076923077 0.0133733752295138
0.295128205128205 0.0118881059413051
0.303333333333333 0.0105517608427423
0.311538461538461 0.00934903383201354
0.31974358974359 0.00826650595631058
0.327948717948718 0.00729236437544183
0.336153846153846 0.00641616874477596
0.344358974358974 0.00562865608906646
0.352564102564103 0.00492157708638864
0.360769230769231 0.00428755811381908
0.368974358974359 0.00371998452401359
0.377179487179487 0.00321290149899123
0.385384615384615 0.00276092951988275
0.393589743589744 0.00235919204104894
0.401794871794872 0.00200325339555887
0.41 0.00168906531073422
};
\addlegendentry{$K_{\rm iso}$}
\addplot [unia-green, dashed, line width=2pt]
table {%
0.09 0.106333629051352
0.0982051282051282 0.0994120803192762
0.106410256410256 0.09305409037392
0.114615384615385 0.0871778061270903
0.122820512820513 0.0817186723412519
0.131025641025641 0.0766249671911025
0.139230769230769 0.0718546948688095
0.147435897435897 0.067373367752987
0.155641025641026 0.0631523882543093
0.163846153846154 0.0591678448832086
0.172051282051282 0.0553996006112697
0.18025641025641 0.0518305914077351
0.188461538461538 0.0484462784470351
0.196666666666667 0.045234214357862
0.204871794871795 0.0421836952405297
0.213076923076923 0.0392854779715284
0.221282051282051 0.0365315477570091
0.229487179487179 0.0339149247621077
0.237692307692308 0.0314295014306019
0.245897435897436 0.0290699041492873
0.254102564102564 0.0268313744248247
0.262307692307692 0.0247096658780928
0.27051282051282 0.0227009542258919
0.278717948717949 0.0208017580843775
0.286923076923077 0.0190088689436002
0.295128205128205 0.0173192890639691
0.303333333333333 0.0157301763590305
0.311538461538461 0.0142387955731416
0.31974358974359 0.0128424752508469
0.327948717948718 0.0115385701369326
0.336153846153846 0.0103244287496888
0.344358974358974 0.00919736594069587
0.352564102564103 0.00815464029722391
0.360769230769231 0.00719343626224952
0.368974358974359 0.00631085084593948
0.377179487179487 0.00550388478488995
0.385384615384615 0.00476943797504498
0.393589743589744 0.00410430896465875
0.401794871794872 0.00350519824849212
0.41 0.0029687150571383
};
\addlegendentry{$K^\prime_{\perp, \mathrm{p}}$}
\addplot [color=unia-pink, dashed, line width=2pt]
table {%
0.09 0.0704428941806166
0.0982051282051282 0.0642306784238219
0.106410256410256 0.0586318752688871
0.114615384615385 0.0535617735123482
0.122820512820513 0.0489527886504484
0.131025641025641 0.0447500111744279
0.139230769230769 0.0409081126593468
0.147435897435897 0.0373891421493384
0.155641025641026 0.0341609229382107
0.163846153846154 0.0311958642792872
0.172051282051282 0.0284700660857339
0.18025641025641 0.0259626344919558
0.188461538461538 0.0236551517585902
0.196666666666667 0.0215312608732127
0.204871794871795 0.0195763365472621
0.213076923076923 0.0177772220909985
0.221282051282051 0.016122017077081
0.229487179487179 0.0145999045513034
0.237692307692308 0.0132010093163678
0.245897435897436 0.0119162808315467
0.254102564102564 0.0107373957593835
0.262307692307692 0.00965667630138324
0.27051282051282 0.00866702130248745
0.278717948717949 0.00776184774232016
0.286923076923077 0.00693504072171881
0.295128205128205 0.00618091043328789
0.303333333333333 0.00549415490172391
0.311538461538461 0.00486982751333968
0.31974358974359 0.00430330853927792
0.327948717948718 0.0037902800043508
0.336153846153846 0.00332670337156095
0.344358974358974 0.00290879960745685
0.352564102564103 0.00253303127038031
0.360769230769231 0.00219608632610351
0.368974358974359 0.0018948634462306
0.377179487179487 0.00162645858632092
0.385384615384615 0.00138815267477168
0.393589743589744 0.00117740027149192
0.401794871794872 0.000991819078443221
0.41 0.000829180203120002
};
\addlegendentry{$K^\prime_{\perp, \mathrm{v}}$}
\addplot [black, mark=triangle*, mark size=4, mark options={solid}, only marks]
table {%
0.1 0.0814749
0.15 0.052092
0.2 0.0330274
0.25 0.0199697
0.3 0.0110303
0.35 0.00521688
0.4 0.00185181
};
\addlegendentry{$K_{\perp, \mathrm{exp}}$}
\addplot [red, mark=triangle*, mark size=4, mark options={solid}, only marks]
table {%
0.1 0.0885525987957792
0.15 0.0564019199594727
0.2 0.0353635791633911
0.25 0.0210193117156016
0.3 0.0114795446686115
0.35 0.00548967283057093
0.4 0.00212954572506773
};
\addlegendentry{$\widetilde{K}_{\perp, \mathrm{exp}}$}
\draw (axis cs:0.11,0.08) node[
  scale=0.75,
  anchor=west,
  text=black,
  rotate=0.0,
  font=\bfseries
]{$\phi=$0.97};
\draw (axis cs:0.16,0.056) node[
  scale=0.75,
  anchor=west,
  text=black,
  rotate=0.0,
  font=\bfseries
]{$\phi=$0.93};

\draw (axis cs:0.205,0.035) node[
  scale=0.75,
  anchor=west,
  text=black,
  rotate=0.0,
  font=\bfseries
]{$\phi=$0.87};

\draw (axis cs:0.26,0.021) node[
  scale=0.75,
  anchor=west,
  text=black,
  rotate=0.0,
  font=\bfseries
]{$\phi=$0.80};

\draw (axis cs:0.31,0.011) node[
  scale=0.75,
  anchor=west,
  text=black,
  rotate=0.0,
  font=\bfseries
]{$\phi=$0.72};

\draw (axis cs:0.35,0.006) node[
  scale=0.75,
  anchor=west,
  text=black,
  rotate=0.0,
  font=\bfseries
]{$\phi=$0.62};

\draw (axis cs:0.37,0.0014) node[
  scale=0.75,
  anchor=west,
  text=black,
  rotate=0.0,
  font=\bfseries
]{$\phi=$0.50};
\end{axis}

\node at (plot1.south) [yshift=-45pt] {(a) Transversal Permeability};

\end{tikzpicture}
    \end{subfigure}
    \begin{subfigure}[b]{0.49\textwidth}
\begin{tikzpicture}[scale=0.9]
\begin{axis}[
name=plot2,
log basis y={10},
tick align=outside,
tick pos=left,
enlarge x limits = true,
xlabel={Radius \(\displaystyle R\)},
xmajorgrids,
xmin=0.1, xmax=0.4,
xminorgrids,
xtick style={color=black},
ylabel={Permeability},
ymin=8e-4, ymax=0.4,
ymajorgrids,
yminorgrids,
minor y grid style={
    opacity=0.5,    
    line width=0.5pt
},
ymode=log,
ytick style={color=black},
legend cell align={left},
legend style={
anchor=south west,
at={(0.02,0.02)},
fill opacity=1, 
draw opacity=1, 
text opacity=1
}
]
\addplot [unia-orange, line width=1]
table {%
0.09 0.308745395347919
0.0982051282051282 0.255455424433351
0.106410256410256 0.214047663411465
0.114615384615385 0.181247467411453
0.122820512820513 0.154835963143059
0.131025641025641 0.133266421145126
0.139230769230769 0.115433928190816
0.147435897435897 0.100532145248355
0.155641025641026 0.0879614566700972
0.163846153846154 0.077268470142316
0.172051282051282 0.068105208645072
0.18025641025641 0.060200995881195
0.188461538461538 0.0533427154066235
0.196666666666667 0.0473607098335565
0.204871794871795 0.0421185509488224
0.213076923076923 0.0375055122750276
0.221282051282051 0.0334309579362632
0.229487179487179 0.0298201099109335
0.237692307692308 0.0266108198512992
0.245897435897436 0.0237510819530587
0.254102564102564 0.0211970986490752
0.262307692307692 0.0189117630274671
0.27051282051282 0.0168634584409181
0.278717948717949 0.0150251017416996
0.286923076923077 0.0133733752295138
0.295128205128205 0.0118881059413051
0.303333333333333 0.0105517608427423
0.311538461538461 0.00934903383201354
0.31974358974359 0.00826650595631058
0.327948717948718 0.00729236437544183
0.336153846153846 0.00641616874477596
0.344358974358974 0.00562865608906646
0.352564102564103 0.00492157708638864
0.360769230769231 0.00428755811381908
0.368974358974359 0.00371998452401359
0.377179487179487 0.00321290149899123
0.385384615384615 0.00276092951988275
0.393589743589744 0.00235919204104894
0.401794871794872 0.00200325339555887
0.41 0.00168906531073422
};
\addlegendentry{$K_{\rm iso}$}
\addplot [unia-green, dotted, line width=2]
table {%
0.09 0.17679976408872
0.0982051282051282 0.163675054803915
0.106410256410256 0.151729524102943
0.114615384615385 0.140796835225739
0.122820512820513 0.130745048696289
0.131025641025641 0.121467698598036
0.139230769230769 0.112877584120719
0.147435897435897 0.104902341568621
0.155641025641026 0.0974812172050581
0.163846153846154 0.0905626701685706
0.172051282051282 0.0841025617398436
0.18025641025641 0.0780627668444716
0.188461538461538 0.0724100948867721
0.196666666666667 0.067115440734387
0.204871794871795 0.0621531093550107
0.213076923076923 0.0575002731537523
0.221282051282051 0.0531365319017779
0.229487179487179 0.0490435528276361
0.237692307692308 0.0452047739627104
0.245897435897436 0.0416051578522231
0.254102564102564 0.0382309857067012
0.262307692307692 0.0350696842782634
0.27051282051282 0.0321096794107062
0.278717948717949 0.0293402714787812
0.286923076923077 0.0267515289043429
0.295128205128205 0.0243341966898701
0.303333333333333 0.0220796174974555
0.311538461538461 0.0199796632634003
0.31974358974359 0.0180266757044658
0.327948717948718 0.0162134143635288
0.336153846153846 0.0145330110764018
0.344358974358974 0.0129789299304125
0.352564102564103 0.0115449319385977
0.360769230769231 0.0102250437783964
0.368974358974359 0.0090135300462607
0.377179487179487 0.00790486856408816
0.385384615384615 0.00689372834331388
0.393589743589744 0.0059749498706427
0.401794871794872 0.00514352742794808
0.41 0.00439459319955273
};
\addlegendentry{$K_{\parallel, \mathrm{p}}$}
\addplot [unia-pink, dotted, line width=2]
table {%
0.09 0.140885788361233
0.0982051282051282 0.128461356847644
0.106410256410256 0.117263750537774
0.114615384615385 0.107123547024696
0.122820512820513 0.0979055773008969
0.131025641025641 0.0895000223488558
0.139230769230769 0.0818162253186936
0.147435897435897 0.0747782842986769
0.155641025641026 0.0683218458764215
0.163846153846154 0.0623917285585743
0.172051282051282 0.0569401321714678
0.18025641025641 0.0519252689839116
0.188461538461538 0.0473103035171805
0.196666666666667 0.0430625217464253
0.204871794871795 0.0391526730945242
0.213076923076923 0.0355544441819971
0.221282051282051 0.0322440341541621
0.229487179487179 0.0291998091026068
0.237692307692308 0.0264020186327357
0.245897435897436 0.0238325616630933
0.254102564102564 0.0214747915187671
0.262307692307692 0.0193133526027665
0.27051282051282 0.0173340426049749
0.278717948717949 0.0155236954846403
0.286923076923077 0.0138700814434376
0.295128205128205 0.0123618208665758
0.303333333333333 0.0109883098034478
0.311538461538461 0.00973965502667935
0.31974358974359 0.00860661707855583
0.327948717948718 0.0075805600087016
0.336153846153846 0.0066534067431219
0.344358974358974 0.0058175992149137
0.352564102564103 0.00506606254076062
0.360769230769231 0.00439217265220702
0.368974358974359 0.00378972689246121
0.377179487179487 0.00325291717264185
0.385384615384615 0.00277630534954336
0.393589743589744 0.00235480054298383
0.401794871794872 0.00198363815688644
0.41 0.00165836040624
};
\addlegendentry{$K_{\parallel, \mathrm{v}}$}
\addplot [black, mark=*, mark size=3, mark options={solid}, only marks]
table {%
0.1 0.16304
0.15 0.104584
0.2 0.0670975
0.25 0.0418986
0.3 0.0249384
0.35 0.0138552
0.4 0.00701791
};
\addlegendentry{$K_{\parallel, \mathrm{exp}}$}
\addplot [red, mark=*, mark size=3, mark options={solid}, only marks]
table {%
0.1 0.172737763951211
0.15 0.110092239429699
0.2 0.0694273598760476
0.25 0.0421351279036947
0.3 0.0240523426101228
0.35 0.0125865776451545
0.4 0.00582761126586956
};
\addlegendentry{$\widetilde{K}_{\parallel, \mathrm{exp}}$}
\draw (axis cs:0.11,0.165787) node[
  scale=0.75,
  anchor=west,
  text=black,
  rotate=0.0,
   font=\bfseries
]{$\phi=$0.97};
\draw (axis cs:0.16,0.106449) node[
  scale=0.75,
  anchor=west,
  text=black,
  rotate=0.0,
   font=\bfseries
]{$\phi=$0.93};
\draw (axis cs:0.21,0.0683593) node[
  scale=0.75,
  anchor=west,
  text=black,
  rotate=0.0,
   font=\bfseries
]{$\phi=$0.87};
\draw (axis cs:0.26,0.0425507) node[
  scale=0.75,
  anchor=west,
  text=black,
  rotate=0.0,
   font=\bfseries
]{$\phi=$0.80};
\draw (axis cs:0.31,0.0252411) node[
  scale=0.75,
  anchor=west,
  text=black,
  rotate=0.0,
   font=\bfseries
]{$\phi=$0.72};
\draw (axis cs:0.36,0.0139536) node[
  scale=0.75,
  anchor=west,
  text=black,
  rotate=0.0,
   font=\bfseries
]{$\phi=$0.62};
\draw (axis cs:0.37,0.004) node[
  scale=0.75,
  anchor=west,
  text=black,
  rotate=0.0,
   font=\bfseries
]{$\phi=$0.50};
\end{axis}

\node at (plot2.south) [yshift=-45pt] {(b) Longitudinal Permeability};

\end{tikzpicture}
    \end{subfigure}
\caption{
Comparisons between computed permeabilities and analytical permeabilities \eqref{eq:Klong}, \eqref{eq:Kperp_mod} and \eqref{eq:Kiso} for one cylinder of different radii, ranging from $R=0.1$ to $R=0.4$.}
  \label{fig:PermBoutin01}
\end{figure}

We observe that the experimental permeabilities are well approximated by the analytical expressions \eqref{eq:Klong} and \eqref{eq:Kperp_mod} in general, with minor deviations in the longitudinal permeability for the case when the number of cylinders increases, for which the one based on statically continuous fields is closer to the experimental permeabilities compared to ones based on kinematically continuous fields or isotropy, both of which almost coincide for all values of porosity. This deviation can be attributed to several reasons. First, it should be noted that the permeability expressions derived in \cite{Boutin} are based on an idealised arrangement of parallel cylinders that completely fill space, and the bounds are obtained by interpreting the periodic cell problem as an equivalent inclusion in a Darcy medium using the self-consistent method. This approach provides rigorous estimates of the effective permeability while accounting for the influence of microstructural geometry.  Second, the expression we use to compute permeability, namely \eqref{eq:MatrixPermeability}, is rigorous only in the limit of a periodic arrangement of cylinders with vanishing size and period. While it would be possible, as is common in permeability modelling, to introduce a heuristic correction factor to match the analytical model to numerical results more closely, we do not pursue this direction but follow a different avenue in Section \ref{sec:Numerical}, noting that we are not interested in the precise values of the permeabilities but the resulting fluid flow in particular.
Nevertheless, this experiment confirms the usefulness of the widely used model of Boutin~\cite{Boutin} for fibre-like media to our case.

\subsection*{Experiment 2}

Afterwards, as explained previously, to investigate the actual coil geometry (see \cref{sec.TrueCoil}), where the obstacles within the REVs do not align on the opposite faces of the unit cube, thus preventing the use of periodic boundary conditions, we embed the REV within a larger computational cell to enable their application. We then propose a procedure to establish the relationship between the true permeability and that obtained from the oversampled configuration. To find this relation, we repeat the first two studies, but now the cylinders are embedded in a larger cell of size $\kappa=1.1$, denoted by $Y^\kappa$. In this configuration, to compute the permeability tensor $\mathbf{K}^\kappa$, we solve the cell problems \eqref{eq:cellpb} in the cell $Y^\kappa$ and compute the permeability as 
\begin{equation}
    \mathbf{K}^\kappa_{ij} = \frac{1}{\lvert Y^\kappa \rvert} \int_{Y^\kappa} w_i \cdot e_j \, \mathrm{d}y, \quad 1\leq i,j \leq N.
\label{eq:MatrixPermeabilityOver}
\end{equation}
As previously, the resulting permeability matrix $\mathbf{K}^\kappa$ is either diagonal, in which case $\mathbf{K}_{\rm eff}^\kappa =\mathbf{K}^\kappa$, or symmetric positive semi-definite. In the latter case, $\mathbf{K}_{\rm eff}^\kappa$ is computed as in \eqref{eq:O}. Finally, we obtain permeability matrix is thus of the form 
\begin{equation*}
    \mathbf{K}_{\rm eff}^\kappa \approx \begin{pmatrix}
        K_{\perp, \rm exp}^\kappa & 0 & 0 \\
        0 & K_{\perp, \rm exp}^\kappa & 0  \\
        0 & 0  & K_{\parallel, \rm exp}^\kappa
    \end{pmatrix}.
\end{equation*}

At this stage, the objective is to establish a relationship between $\mathbf{K}^\kappa_{\rm eff}$ (the oversampled permeability) and $\mathbf{K}_{\rm eff}$ (the true permeability in the unit cell) for this specific setting, namely an REV composed of an arrangement of cylinders. Building on \cref{lemma:over}, we introduce the factor $\kappa^2$ to account for the scaled domain. Unlike in \cref{lemma:over}, where the obstacles are also scaled by 
$\kappa$, in our oversampling approach the obstacles remain unchanged, which requires an additional correction. Heuristically, for moderate values of the oversampling parameter, we find that the suitable scaling for this particular case is given by 
\begin{equation}
    \widetilde{\mathbf{K}}_{\rm exp} \approx \frac{\phi}{\kappa^2} \mathbf{K}^\kappa.
\label{eq:relation_over}
\end{equation} 
The results from this ``oversampling and then scaling'' procedure are also shown in \cref{fig:PermBoutin0} and \cref{fig:PermBoutin01}.  First, in the case where only a single cylinder is present within the cell, we observe that, under this scaling, the \enquote{oversampled} and \enquote{unit} permeabilities are in good agreement. A similarly good agreement is observed for the longitudinal permeability in configurations with several cylinders.
However, the situation becomes more complex for the transversal permeability in the presence of multiple cylinders. In this case, good agreement is maintained only for porosities larger than 0.6 (corresponding to fewer than six cylinders). We note also that the variance of the \enquote{unit} transversal permeability is larger for the oversampled one. The deterioration of the agreement for smaller porosities can be explained by the oversampling procedure, which introduces excessive void space compared to the amount initially present in the unit cell. In particular, additional void regions are added upstream and downstream along the cylinder direction, which are not part of the original configuration. If this regime is of interest, an alternative scaling would be required in this region. However, as will be shown below, this regime is not relevant for our purposes, and we therefore do not pursue further adjustments to the scaling in this range. Finally, it should be noted that in the proposed scaling, the factor $\kappa^2$ is supported by homogenisation theory, whereas the factor $\phi$ is configuration-dependent and may require adjustment for other geometries.

\subsection*{Experiment 3}
Lastly, we validate \cref{lemma:Rotation} numerically.
To this end, we compute the permeability, using again \eqref{eq:MatrixPermeability}, in a rotated cell (of unit side length) obtained by rotation of the whole periodic pattern by an angle $\theta$, which is \enquote{rational} in the sense of \cref{rem:Rational}, about the $e_z$-axis (note that we use different cylinder radii for the different cases for ease of generation of the unit cell).
As result, we obtain a semi-definite positive permeability matrix $\mathbf{K}$, which we diagonalise as $\mathbf{K}= \mathbf{O}\mathbf{K}_{\rm eff}\mathbf{O}^\top$, with $\mathbf{O}$ an orthogonal matrix.
The objective is now to compare the obtained matrix $\mathbf{O}$ with the theoretical rotation matrix of the angle $\theta$ about the $e_z$-axis, denoted by $\mathbf{R}_\theta$. Numerically, we find that $\mathbf{O}$ is of the form $\begin{psmallmatrix}
    0 & -c & s \\
    0 &  s & c \\
    1 & 0 &  0 
\end{psmallmatrix}$.
A comparison between $c$, $s$ and $\cos(\theta)$, $\sin(\theta)$, respectively, is performed. The corresponding results are presented in \cref{table:Rotations}.

\begin{table}[htpb]
    \caption{Validation of \cref{lemma:Rotation} for different \enquote{rational rotations}.}
    \label{table:Rotations}
    \centering
    \begin{tabular}{ccccc}
    \toprule
    \multirow{2}{*}{Angle of rotation $\theta$} & \multicolumn{2}{c}{Experimental values} & \multicolumn{2}{c}{Theoretical values} \\
                                                & $c$ & $s$                               & $\cos(\theta)$ & $\sin(\theta)$ \\
    \midrule
    $\arctan(1)$ & $0.707$ & $0.707$ & $1/\sqrt{2} \approx 0.707$ & $1/\sqrt{2} \approx 0.707$ \\[1ex]
    $\arctan(\frac{4}{5})$ & $0.780$ & $0.624$ & $5/\sqrt{41} \approx 0.780$ & $4/\sqrt{41} \approx 0.624$ \\[1ex]
    $\arctan(\frac{3}{4})$ & $0.800$ & $0.599$ & $4/5 = 0.8$ & $3/5 = 0.6$ \\[1ex]
    $\arctan(\frac{1}{2})$ & $0.894$ & $0.447$ & $2/\sqrt{5} \approx 0.894$ & $1/\sqrt{5} \approx 0.447$ \\[1ex]
    $\arctan(\frac{1}{4})$ & $0.970$ & $0.242$ & $4/\sqrt{17} \approx 0.970$ & $1/\sqrt{17} \approx 0.242$ \\[1ex]
    $\arctan(\frac{1}{5})$ & $0.980$ & $0.196$ & $5/\sqrt{26} \approx 0.980$ & $1/\sqrt{26} \approx 0.196$ \\
    \bottomrule
    \end{tabular}
\end{table}

After validating the analytical expressions for the permeabilities and calibrating the oversampling approach, the next section focuses on the actual geometry of a coil inside an aneurysm.
In this context, we employ the relation \eqref{eq:relation_over} to determine the true permeability from the oversampled configuration.

\section{Application to a coil in an aneurysm}

\label{sec.TrueCoil}
In this section, we investigate the permeability of a coil packing.
To this end, we compute the permeability of several REVs using homogenisation theory and compare the results with the analytical permeability expressions given in \eqref{eq:Klong} and \eqref{eq:Kperp_mod}.
The goal is to assess whether the permeability of the coil packing can be accurately described by these analytical models.
In the present work, as benchmark example, we consider the coil presented in \cref{fig:coil}.
It was generated using a mechanical simulation of the insertion process in a cerebral aneurysm, as described in \cite{holzberger2024comprehensive}. 
In the simulation, we first insert a stiffer framing coil which provides structural stability.
Subsequently, a soft filling coil is inserted into the cage formed by the framing coil, to increase packing density, which often correlates to better occlusion of the aneurysm.
The parameters of the inserted coils can be found in \cref{tab:coil_parameters} and closely correspond to those of coils commonly used in practise. With these parameters, we reach a packing density of approximately 30\% in the aneurysm.

\begin{figure}[H]
\centering
\includegraphics[width=0.2\linewidth]{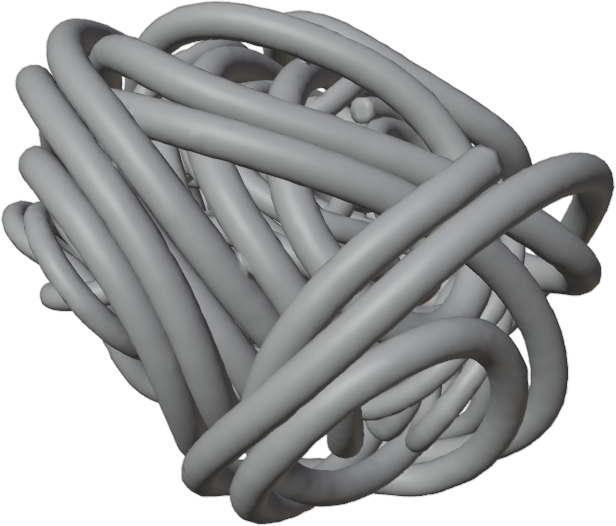}
\caption{The coil under study.}
\label{fig:coil}
 \end{figure}

\begin{table}[htpb]
    \caption{Parameters of the framing and filling coils.}
    \label{tab:coil_parameters}
    \begin{tabular}{lccc}
    \toprule
    Parameter & Unit & Framing Coil & Filling Coil \\
    \midrule
    Insertion velocity         & [mm/s] & 30   & 30   \\
    Internal wire radius       & [mm]   & 0.0381 & 0.025 \\
    Coil radius $R_{\rm coil}$ & [mm]   & 0.1778 & 0.127  \\
    Coil length                & [mm]   & 150  & 200 \\
    Tertiary coil diameter $D$ & [mm]   & 4.6  & 1.75 \\
    Shape (in $D$)             & ---    & Complex & Complex \\
    \bottomrule
    \end{tabular}
\end{table}

At this stage, the idea is to select several REVs within the coil of \cref{fig:coil}. In this work, the coil is provided as a surface mesh in \texttt{.obj} format with quadrilateral elements. Our goal is to extract multiple REVs and generate corresponding volume meshes in a format compatible with FreeFEM (the finite element software used for the permeability computation), namely the \texttt{.mesh} (Medit) format. The main steps of the REV generation process are summarised in \cref{alg:rev_generatiom} (see \cref{app.Algo}). 

From the coil geometry illustrated in \cref{fig:coil}, a total of 12 representative elementary volumes (REVs) of unit size have been selected.
The complete set of selected REVs is presented in \cref{fig:AllREVs}.  These REVs have been sampled from different regions of the packing: some are located near the boundary, where the porosity is relatively high, while others are taken from the interior, which is characterised by lower porosity. Also, we include some REVs from transitional regions which exhibit an intermediate porosity. In more complex porous media, it would be necessary to have a finer discretisation of the porosity from transitional zones to represent spatial variability completely.  However, in the present case, the clear distinction between high and low-porosity regions justifies the chosen sampling strategy, which adequately captures the dominant range of porosity within the system, even with a limited number of samples.

\begin{figure}[htbp]
    \centering
    
      \begin{subfigure}[b]{0.2\textwidth}
        \includegraphics[width=0.9\linewidth]{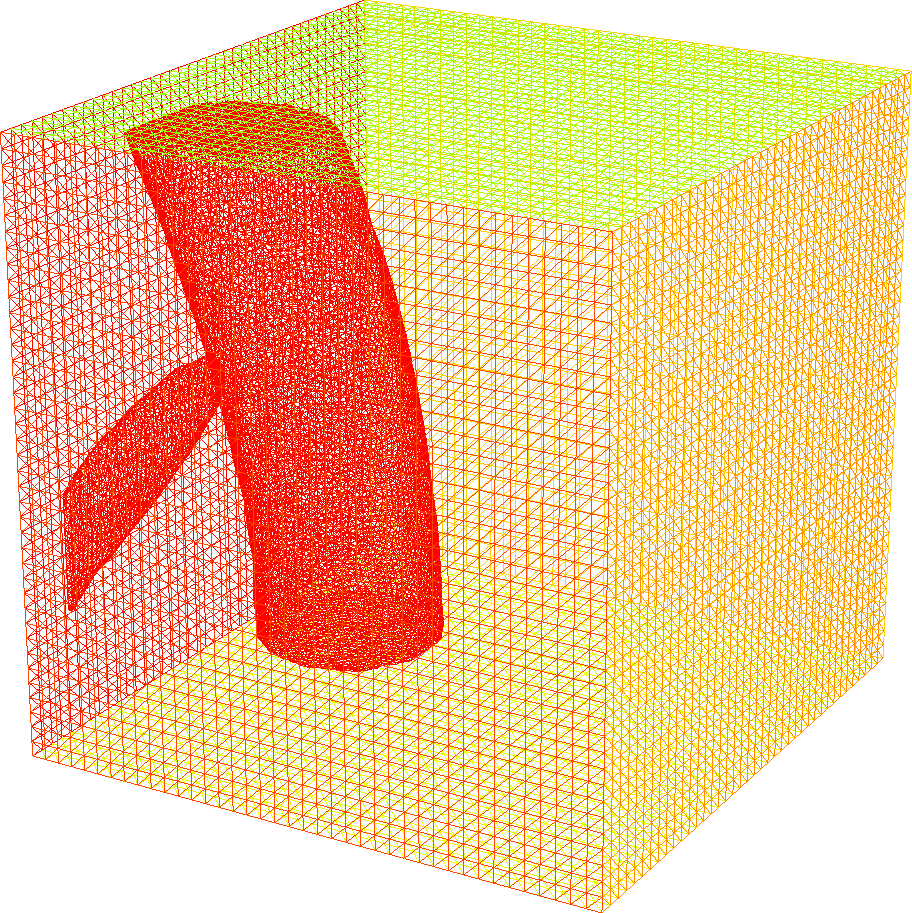}
        \caption{ \tiny REV 1 ($\phi \approx 0.91$).}
    \end{subfigure}\hfill
     \begin{subfigure}[b]{0.2\textwidth}
        \includegraphics[width=0.9\linewidth]{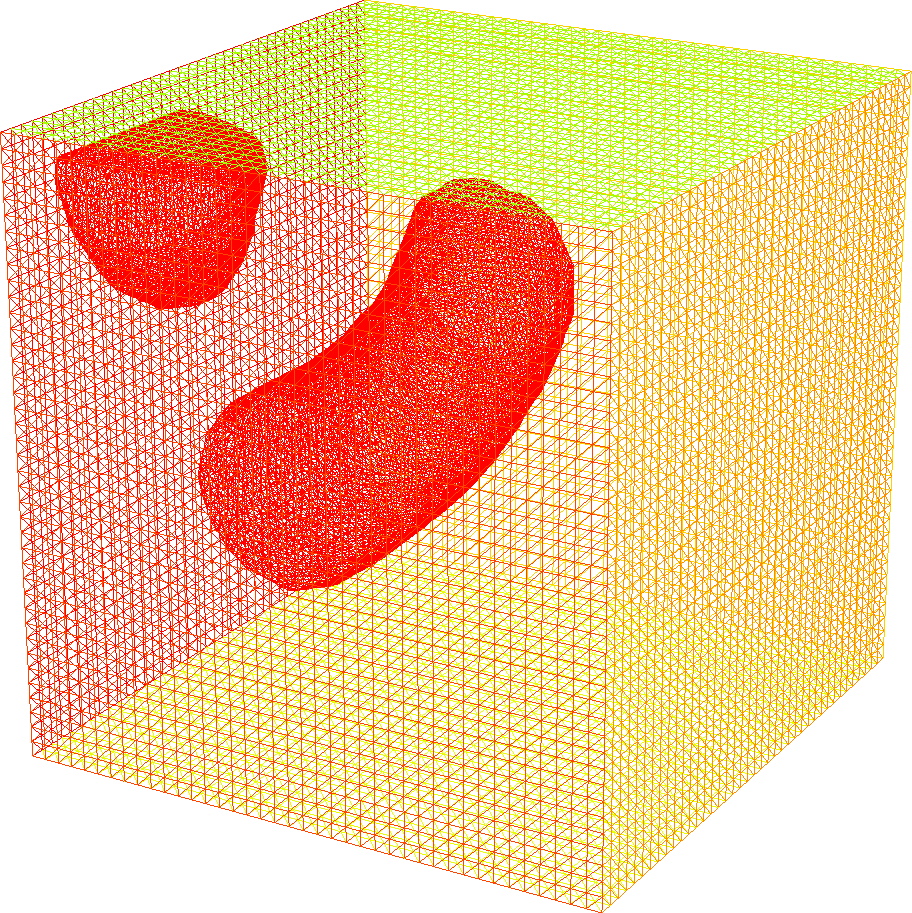}
        \caption{\tiny REV 2 ($\phi \approx 0.89$).}
    \end{subfigure}\hfill
     \begin{subfigure}[b]{0.2\textwidth}
        \includegraphics[width=0.9\linewidth]{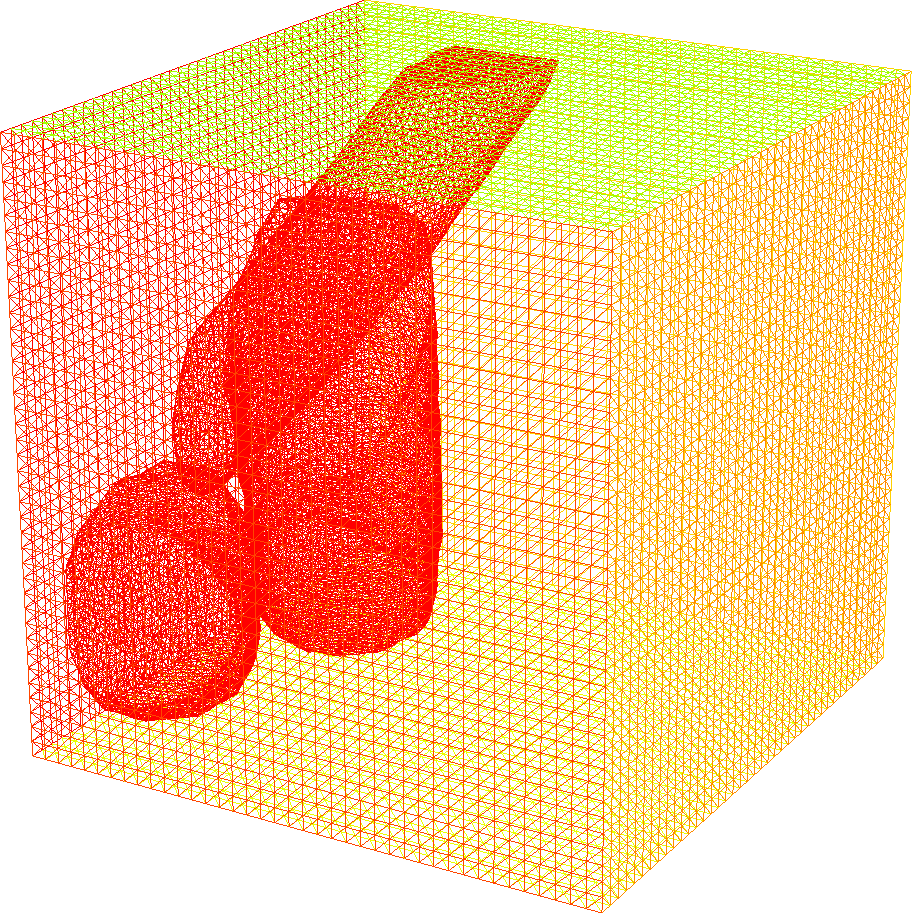}
        \caption{\tiny REV 3 ($\phi \approx 0.84$).}
    \end{subfigure}\hfill
   \begin{subfigure}[b]{0.2\textwidth}
        \includegraphics[width=0.9\linewidth]{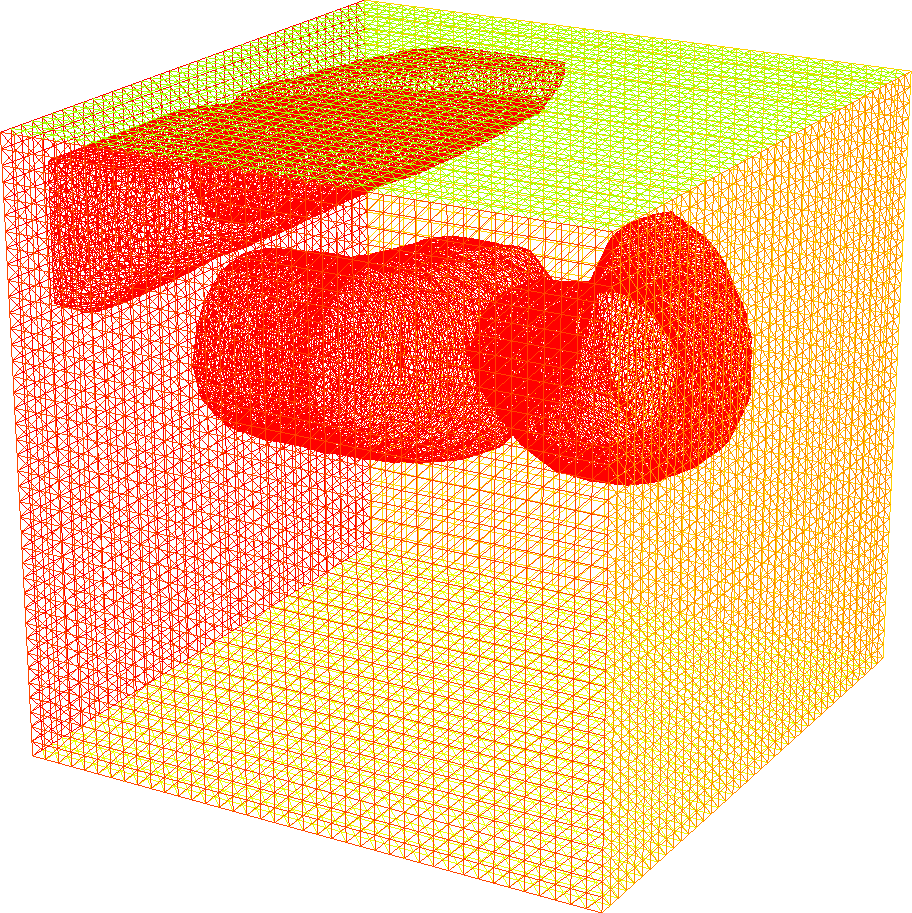}
        \caption{\tiny REV 4 ($\phi \approx 0.80$).}
    \end{subfigure}
\hspace{0.5\linewidth}
      
      \begin{subfigure}[b]{0.2\textwidth}
        \includegraphics[width=0.9\linewidth]{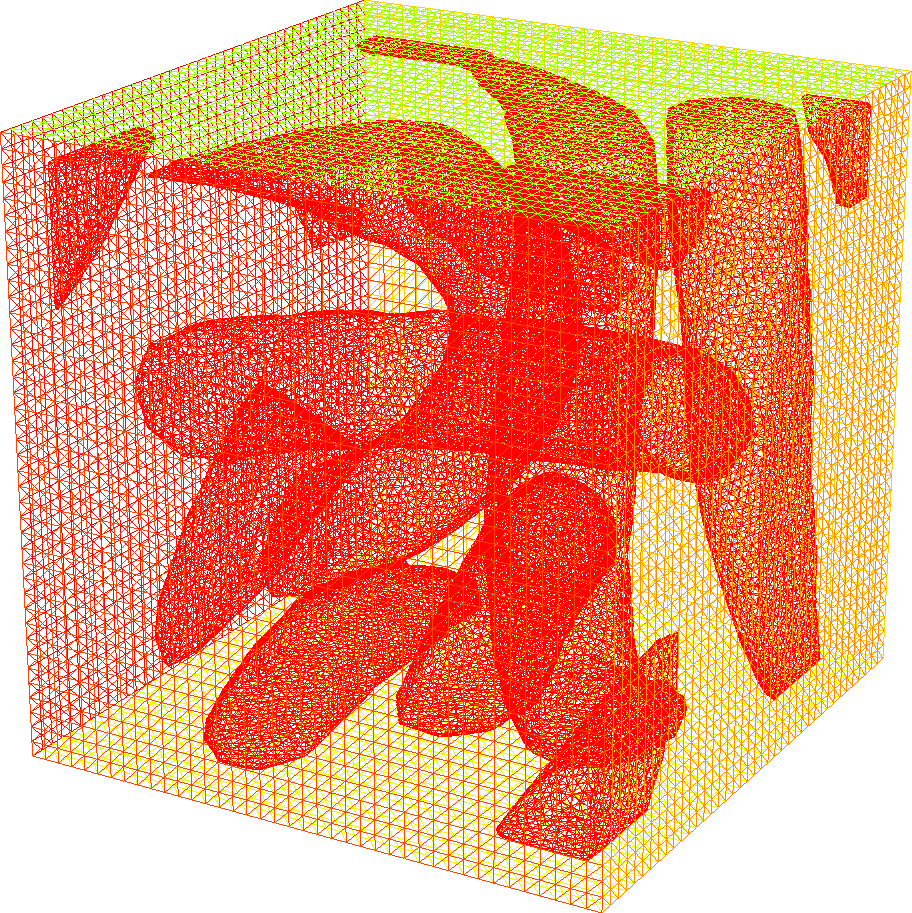}
        \caption{\tiny REV 5 ($\phi \approx 0.75$).}
    \end{subfigure} \hfill
    \begin{subfigure}[b]{0.2\textwidth}
         \includegraphics[width=0.9\linewidth]{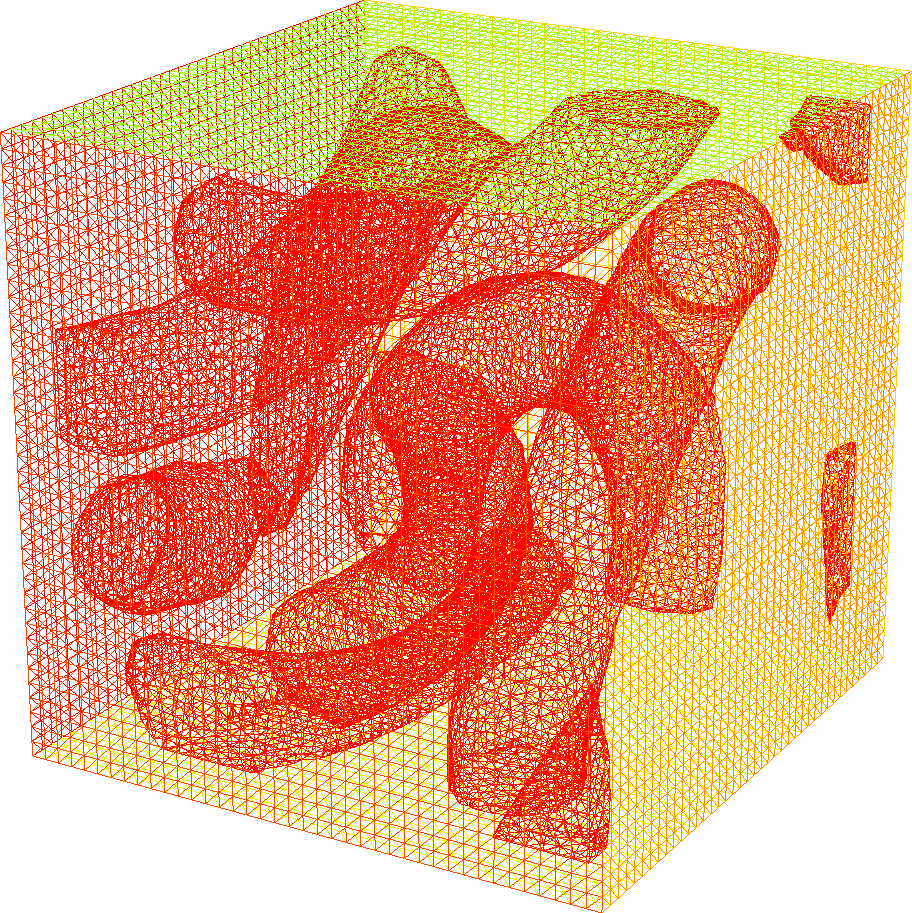}
        \caption{\tiny REV 6 ($\phi \approx 0.73$)}
    \end{subfigure} \hfill 
    \begin{subfigure}[b]{0.2\textwidth}
    \includegraphics[width=0.9\linewidth]{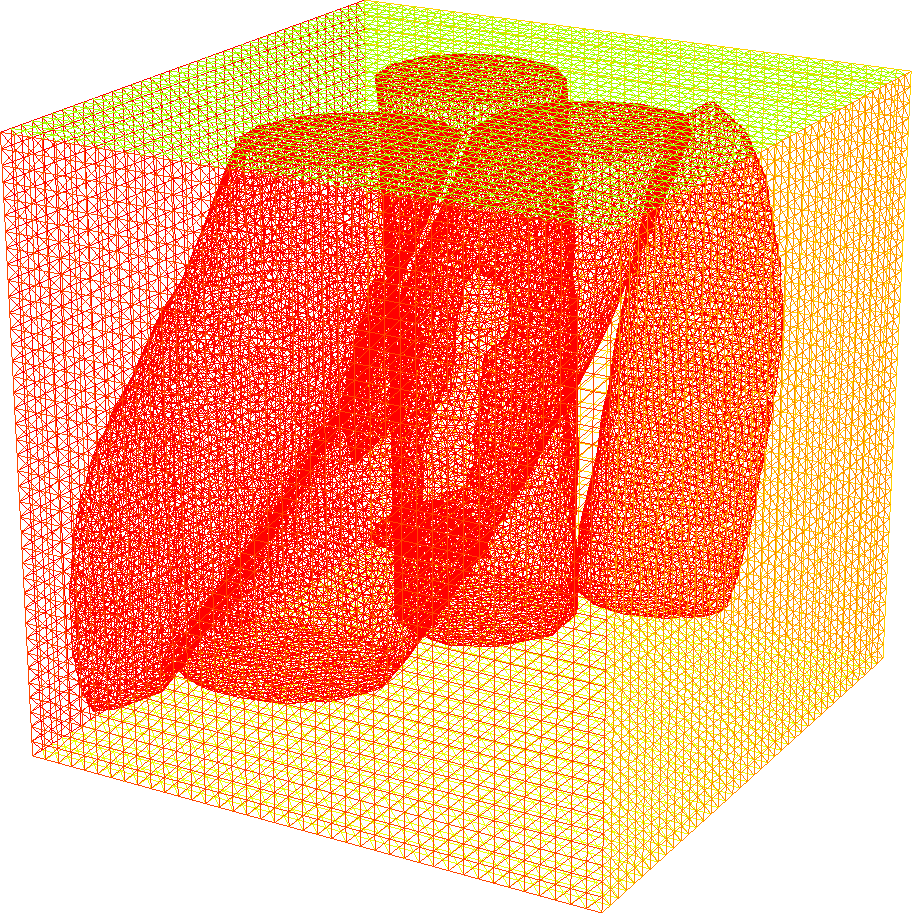}
    \caption{\tiny REV 7 ($\phi \approx 0.71$).}
    \end{subfigure}\hfill
    \begin{subfigure}[b]{0.2\textwidth}
    \includegraphics[width=0.9\linewidth]{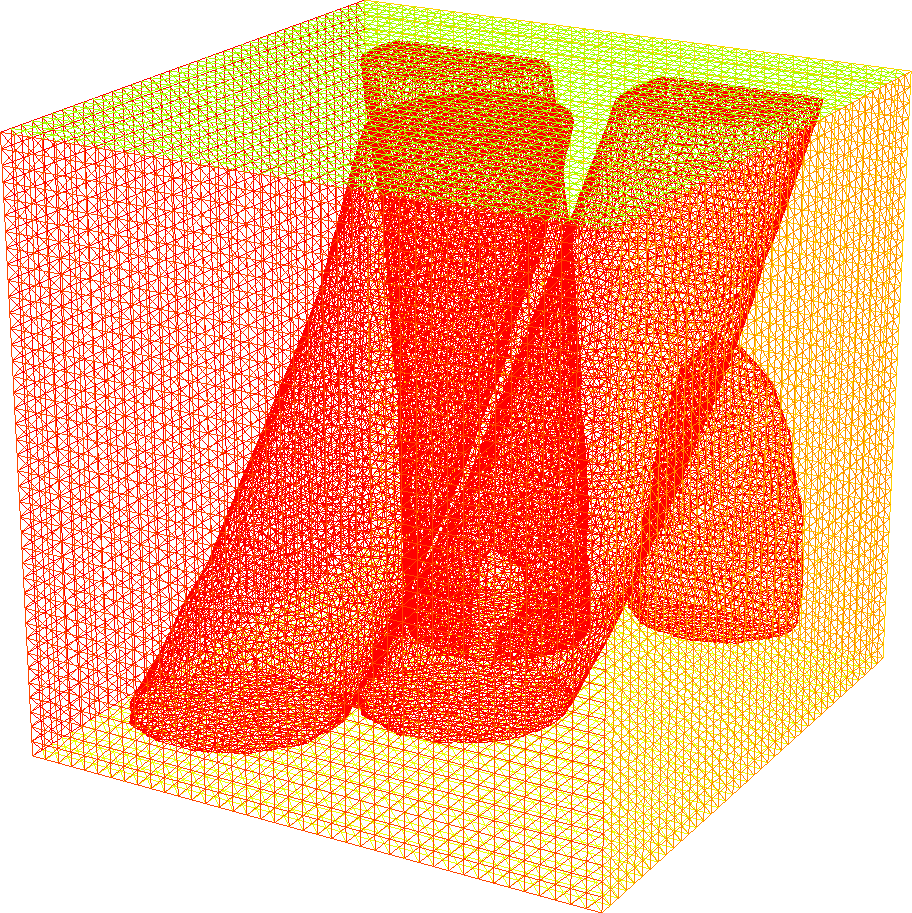}
    \caption{\tiny REV 8 ($\phi \approx 0.70$).}
    \end{subfigure}
\hspace{0.5\linewidth}
 
    \begin{subfigure}[b]{0.2\textwidth}
    \includegraphics[width=0.9\linewidth]{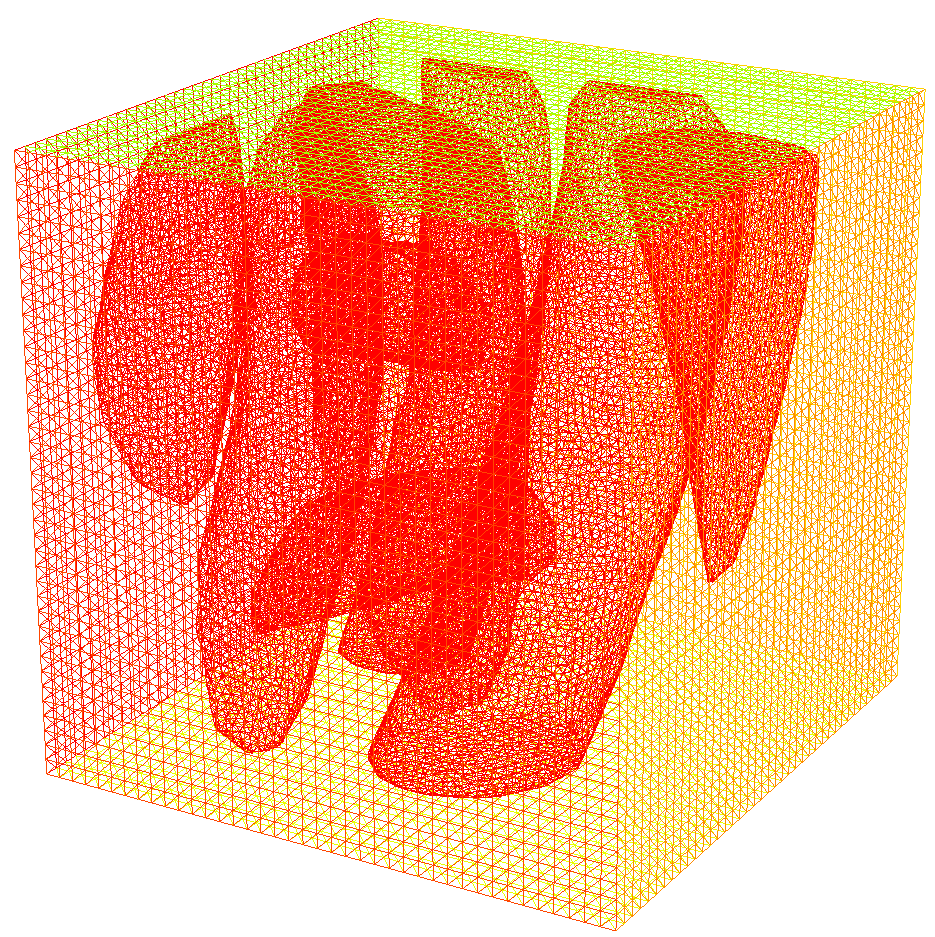}
    \caption{\tiny REV 9 ($\phi \approx 0.69$).}
    \end{subfigure}\hfill
    \begin{subfigure}[b]{0.2\textwidth}
     \includegraphics[width=0.9\linewidth]{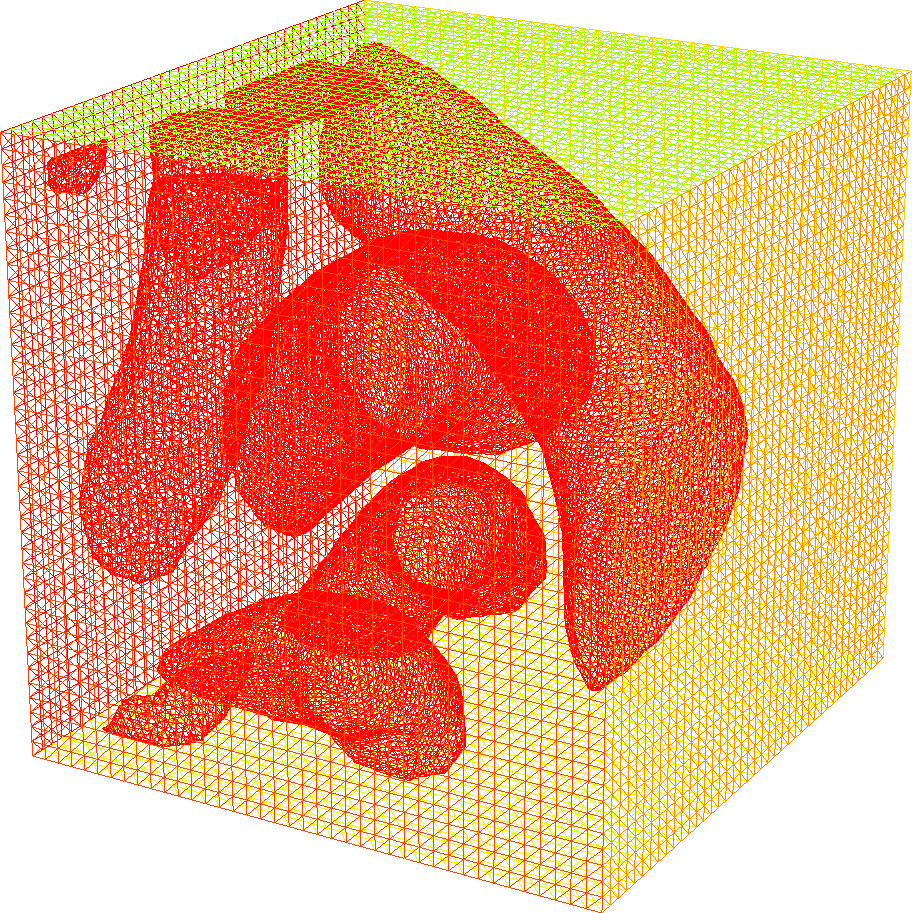}
    \caption{\tiny REV 10 ($\phi \approx 0.69$).}
    \end{subfigure}\hfill
    \begin{subfigure}[b]{0.2\textwidth}
    \includegraphics[width=0.9\linewidth]{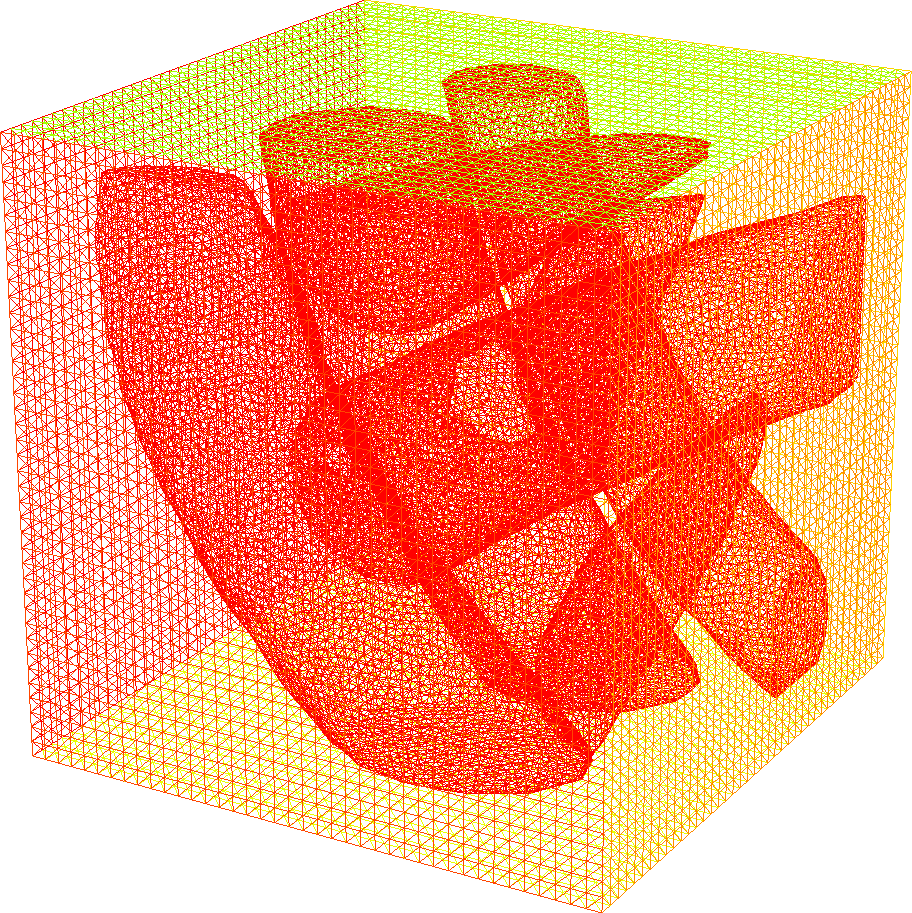}
    \caption{\tiny REV 11 ($\phi \approx 0.65$).}
    \end{subfigure} \hfill
    \begin{subfigure}[b]{0.2\textwidth}
    \includegraphics[width=0.9\linewidth]{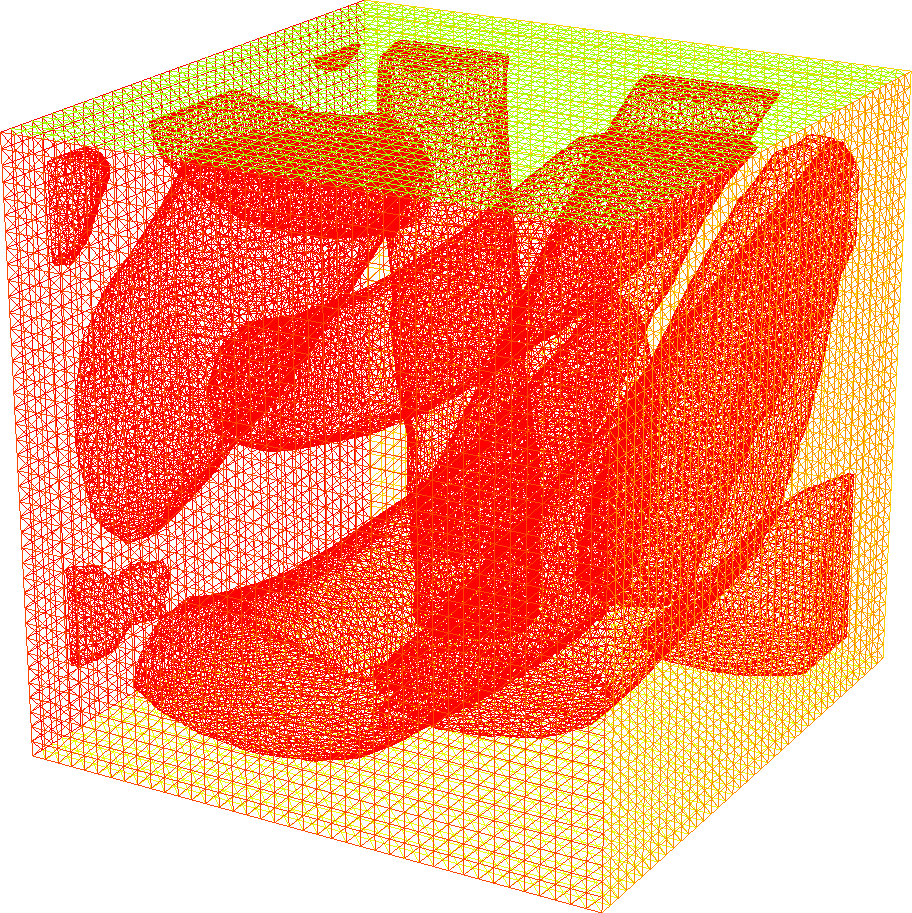}
    \caption{\tiny REV 12 ($\phi \approx 0.63$).}
    \end{subfigure}    
    
    \caption{The 12 chosen REVs and their respective porosities $\phi$.}
    \label{fig:AllREVs}
\end{figure}

For each REV, we compute the cell problems \eqref{eq:cellpb} in the cell $Y^\kappa$ (we recall that the applications of the periodic boundary conditions are made possible using the oversampled cell $Y^\kappa$, with $\kappa=1.1$ in this work), and compute the permeability using \eqref{eq:MatrixPermeabilityOver}.
As a result, we obtain a semi-definite positive permeability matrix, $\mathbf{K}^\kappa_{\rm coil}$, which we diagonalise to find the effective permeabilities, i.e.
\begin{equation}
    \mathbf{K}^\kappa_\mathrm{coil} =  \mathbf{O} \mathbf{K}^\kappa_\mathrm{eff, coil} \mathbf{O}^\top
\label{eq:matrixM}
\end{equation}
with $\mathbf{O}$ an orthogonal matrix and 
\begin{equation*}
    \mathbf{K}^\kappa_\mathrm{eff,coil} = \begin{pmatrix}
        K_{1, \rm coil}^\kappa & 0 & 0 \\
        0 & K_{2, \rm coil}^\kappa & 0  \\
        0 & 0 & K_{3, \rm coil}^\kappa  \\
    \end{pmatrix}
\end{equation*}
in which $0<K_{1, \rm coil}^\kappa \leq K_{2, \rm coil}^\kappa  \leq K_{3, \rm coil}^\kappa$.
Given this ordering, the two smaller effective permeabilities are assumed to be transversal permeabilities and the largest one is assumed to be the longitudinal permeability.
Then, to recover the true permeabilities, we apply the scaling \eqref{eq:relation_over}, i.e.
\begin{equation*}
    \widetilde{\mathbf{K}}_{\rm eff, coil} \approx \frac{\phi}{\kappa^2} \mathbf{K}^\kappa_{\rm eff, coil},
\end{equation*}
where, as previously, $\phi$ is the porosity in the unit cell. We note that the scaling \eqref{eq:relation_over} is valid in this case, as the porosity in the REVs is greater than 0.6 and each REV contains at most approximately five fibres. All corresponding results are presented in \cref{fig:PermBoutin1}.

\begin{figure}[tbp]
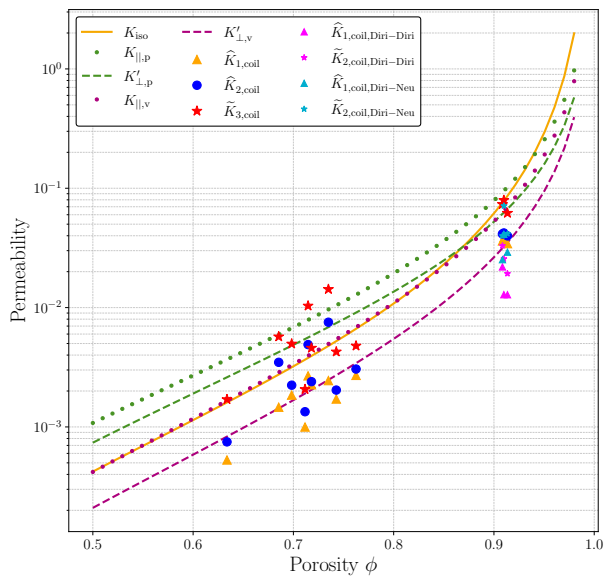

    \centering
    \include{figures/PLOTS/PermRVEsPaper0.178}
    \caption{Comparison of permeabilities for REVs from the real coil geometry. The analytical permeabilities are computed with $R = R_{\rm coil} = 0.178$ (Framing Coil).}
    \label{fig:PermBoutin1}
\end{figure}

From \cref{fig:PermBoutin1}, we clearly see that the analytical permeabilities \eqref{eq:Klong} and \eqref{eq:Kperp_mod} accurately capture the effective permeabilities of the coil packing, namely $\widetilde{K}_{1, \rm coil}$, $\widetilde{K}_{2, \rm coil}$, $\widetilde{K}_{3, \rm coil}$.

In particular, the analytical permeabilities $\mathbf{K}_{\rm v}$ based on kinematically continuous fields match well for most of the REVs.
\medskip

The final step of our analysis consists in assessing whether the relation \eqref{eq:Rotation} from \cref{lemma:Rotation} remains valid for the coil REVs, even though the assumptions of the lemma are not satisfied. For each REV in \cref{fig:AllREVs}, we compute the corresponding rotation matrix $\mathbf{R}_\theta$  using  \cref{alg:direction_wire} (see \cref{app.Algo}), with a sampling rate $N_s=40$, which provides the average direction of the coil packing inside the REV.
Note that a sensitivity analysis of \cref{alg:direction_wire} with respect to the sampling rate $N_s \in \{30,40,50,60\}$ is provided in \cref{sec.SensWire}. We then compare the computed permeability matrix $\mathbf{K}^\kappa_\mathrm{coil}=  \mathbf{O} \mathbf{K}^\kappa_\mathrm{eff, coil} \mathbf{O}^\top $ defined in \eqref{eq:matrixM} to
\begin{equation}
\mathbf{K}^\kappa_\mathrm{coil,algo} :=\mathbf{R}_\theta \mathbf{K}^\kappa_\mathrm{eff, coil} \mathbf{R}^\top_\theta
\label{eq:Kalgo}
\end{equation}
in order to evaluate how accurately the matrix $\mathbf{O}$ can be approximated by $\mathbf{R}_\theta$, noting that a sensitivity analysis with respect to perturbations of the rotation matrix $\mathbf{R}_\theta$ given by \cref{alg:direction_wire} is given in \cref{sec.Sens}. 
To quantify the discrepancy, we use the Frobenius norm, defined as $\| A \|_F = (\sum_{i,j} A_{ij}^2)^{1/2}$, and introduce the relative error
\begin{equation*}
{\rm err}_{\mathbf{K}} = \frac{\lVert \mathbf{K}^\kappa_\mathrm{coil} -  \mathbf{K}^\kappa_\mathrm{coil,algo} \rVert_F} {\lVert \mathbf{K}^\kappa_\mathrm{coil} \rVert_F}.    
\end{equation*}
The relative errors obtained for all considered REVs are reported in \cref{table:Comparison}.

\begin{table}[htbp]
    \caption{Comparison of the action of the orthogonal matrix $\mathbf{O}$ of \eqref{eq:matrixM} and the action of the rotation matrix $\mathbf{R}_\theta$ obtained with \cref{alg:direction_wire} (using a sampling rate $N_s = 40$) for the 12 coil REVs presented in \cref{fig:AllREVs}.}
    \label{table:Comparison}
    \centering
    \setlength{\tabcolsep}{3pt} 
    \begin{tabular}{c|ccccccccccccc}
      \toprule
    REV & 1 & 2 & 3 & 4 & 5 & 6 & 7 & 8 & 9 & 10 & 11 & 12 & Mean\\
    \midrule
    Porosity $\phi$ 
    & 0.91 & 0.89 & 0.84 &  0.80 &  0.75  & 0.73  &  0.71 &  0.70 &  0.69  & 0.69 &  0.65 &  0.63 & -- \\
    \midrule
    ${\rm err}_{\mathbf{K}}$ [\%] 
    & 11.63 & 13.59 & 24.35 & 18.82 & 17.36 & 16.86 & 45.69 & 23.77 & 10.90 & 11.77 & 13.43 & 20.63& 19.07  \\
    \bottomrule
    \end{tabular}
\end{table}

From \cref{table:Comparison}, it can be observed that the permeability matrices $\mathbf{K}^\kappa_\mathrm{coil}$ and 
$\mathbf{K}^\kappa_\mathrm{coil,algo}$ are reasonably close. Although the relative error is on average about 19\%, this level of discrepancy remains acceptable for our purposes and supports the use of $\mathbf{R}_\theta$ as a suitable approximation of $\mathbf{O}$, even though the assumption of \cref{lemma:preservation_symmetry} is not fulfilled. The largest error is measured for REV 7. Upon closer inspection, one observes that the dominant fibre direction in $\mathbf{K}^\kappa_\mathrm{coil}$ and $\mathbf{K}^\kappa_\mathrm{coil, algo}$ agrees reasonably well, whereas the two remaining minor directions do not; their eigenvectors are approximately exchanged. Since the two smallest eigenvalues of $\mathbf{K}^\kappa_\mathrm{coil}$ are not equal, this mismatch contributes significantly to the relative error. However, in the permeability model employed later, the two transverse permeabilities are assumed equal, resulting in a permeability tensor that is independent of the orientation of the two minor eigenvectors. It should also be noted that the oversampling procedure inevitably results in the loss of certain directional information, which may partly explain these discrepancies.

\medskip

In the following section, we employ the permeability model established here and evaluate it across various test cases within the framework of the Darcy equation \eqref{eq:DarcyEquations}.

\section{Numerical validation of the upscaled macroscopic model}
\label{sec:Numerical}

In this section, we consider three settings of increasing complexity to asses the quality of the effective Darcy flow problem using the permeability parametrisations discussed above.
We begin with a cube diagonally cut by cylinders, which is a typical geometry considered in homogenisation, followed by a cylinder cut by spirals, and finally an artificial aneurysm occluded by a coil.
All settings contain two main challenges present in most practical applications: the effects of finite size and of boundary conditions.
The latter reduce the flow in an aneurysm due to no slip at close-by walls, while the former is due to the fact that the ratio of aneurysm radius to coil radius is quite small (often in the range below 10), which requires caution since direct application of homogenisation results might result in large errors.

In all three settings, the boundary conditions for Darcy flow are $p = 1$ at $x_1=0$ and $p = 0$ at $x_1 = L$, where $L$ denotes the length of the domain (in $x_1$-direction), and homogeneous Neumann conditions (no flow) at the remaining boundary $\partial\Omega \setminus \{ x \in \partial\Omega \text{ such that } x_1 = 0,L \}$.
In the first two cases, we compare to the resolved Stokes flow with the corresponding boundary conditions being Neumann conditions $(\nabla v + p\mathbf{I}_3) \nu = \nu$ at $x_1 = 0$ (inflow) and $(\nabla v + p\mathbf{I}_3) \nu = 0$ at $x_1 = L$ (outflow), where $\nu$ denotes the outward normal vector, and a homogeneous Dirichlet condition (no-slip) at the remaining boundary $\partial\Omega_\varepsilon \setminus \{ x \in \partial\Omega_\varepsilon \ : \ x_1 = 0,L \}$.
Then, we compare the resulting average velocities $\bar{v} = |\Omega|^{-1} \int_{\Omega_\varepsilon} v_\varepsilon \,\mathrm{d}x$ (Stokes) and $\bar{v} = |\Omega|^{-1} \int_{\Omega} v \,\mathrm{d}x$ (Darcy), respectively.
Note that we set $\mu = 1$ for simplicity as all problems are linear.

Both the Stokes and Darcy flow problems are discretised using divergence-free discontinuous finite elements of degree $k=1$.
In case of Stokes flow, we use the $\mathbb{BDM}_1$-$\mathbb{P}_1$ element with symmetric interior penalty approach, see e.g.~\cite{Rhebergen2025} for details.
For the Darcy flow problems, we use the mixed discretisation by $\mathbb{RT}_1$-$\mathbb{P}_1$ elements.
The implementation is done in Python using the FEniCSx library \cite{DOLFINx} and is available in the GitHub repository \url{https://github.com/s-lunowa/fenicsx-homogenization}.

\begin{figure}[htbp]
    \centering
    \includegraphics[width=0.49\linewidth]{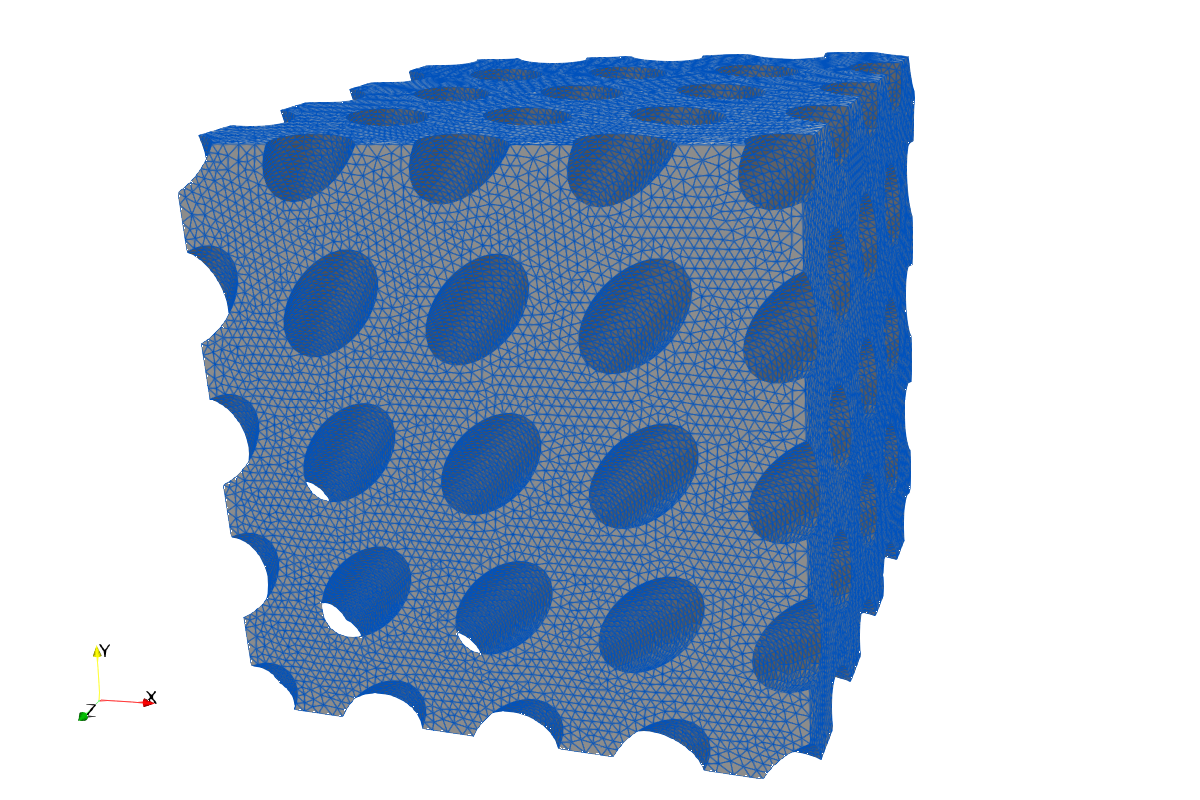}
    \includegraphics[width=0.49\linewidth]{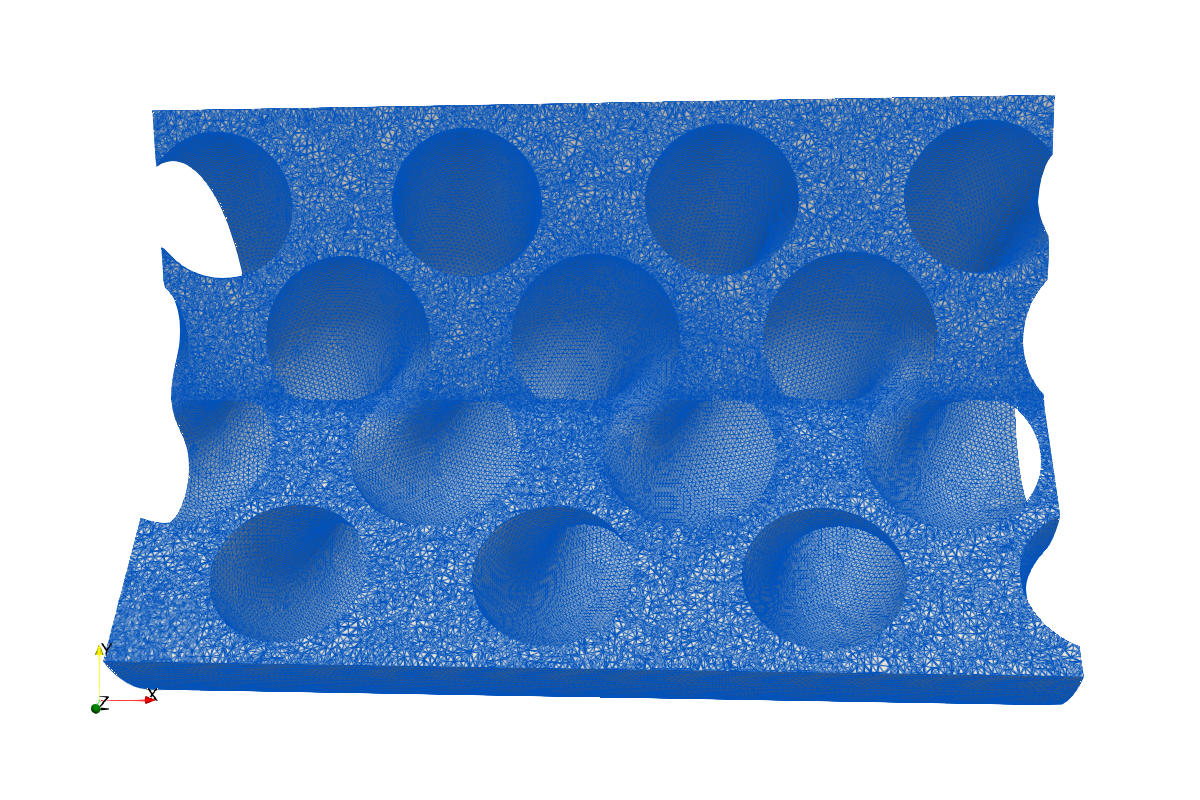}
    \caption{Mesh for the Stokes flow simulation of setting 1 (left) and cut through the mesh in setting 2 (right).}
    \label{fig:stokes:mesh}
\end{figure}

\begin{table}[htpb]
    \caption{Mesh sizes $h$, number of mesh elements $|\mathcal{T}_h|$, number of degrees of freedom (DOFs) in velocity and pressure for the Stokes and Darcy flow simulations of the different settings in \cref{sec:Numerical}.}
    \label{tab:mesh}
    \centering
    \def\z{\phantom{0}}
    \small
    \begin{tabular}{l|cccc|cccc}
        \toprule
        & \multicolumn{4}{c|}{\textbf{Stokes}} & \multicolumn{4}{c}{\textbf{Darcy}} \\
        Setting & $\boldsymbol{h}$ & $\boldsymbol{|\mathcal{T}_h|}$ & Vel. DOFs & Pr. DOFs & $\boldsymbol{h}$ & $\boldsymbol{|\mathcal{T}_h|}$ & Vel. DOFs & Pr. DOFs \\
        \midrule
        1) Cube     & 0.035  & 711,583 & 11,001,444 & 2,364,452 & 0.03125    & 196,608 & 1,787,904 &\z\,786,432 \\
        2) Cylinder & 0.035  & 734,118 & 11,459,484 & 2,468,568 & 0.05\z\z\z & 364,710 & 3,105,465 &  1,364,480 \\
        3) Aneurysm & \multicolumn{4}{c|}{---}                  & 0.15\z\z\z & 255,190 & 2,143,995 &\z\,939,312 \\
        \bottomrule
    \end{tabular}
\end{table}

\subsection{Setting 1: Cube cut by cylinders}

We consider a cube of side length $L=1$. The perforated domain $\Omega_\varepsilon$ is constructed from a $4 \times 4 \times 4$ array of identical subdomains, each containing  a diagonally oriented cylindrical obstacle $\mathcal{O}$ of radius $R_\varepsilon = 1/4$, cf.~\cref{fig:stokes:mesh} (left).
Hence, the cylinder's length is $L_\varepsilon = \sqrt{3}$, leading to a solid fraction of $\rho = \pi R_\varepsilon^2 L_\varepsilon \approx 0.34$ and a porosity of $\phi = 1 - \rho \approx 0.66$.
The effective permeability matrices are obtained using the cylinder direction $t_{\rm c} = 3^{-1/2} (1, 1, 1)^\top$ and effective radius $R = \varepsilon R_\varepsilon$ with $\varepsilon=1/4$:
\begin{align*}
    \mathbf{K}_{\rm iso} &= K_{\rm iso} \mathbf{I}_3 \approx 2.59 \cdot 10^{-4} \, \mathbf{I}_3, \\
    \mathbf{K}_{\rm p} &= K_{\parallel,\rm p} t_{\rm c} \otimes t_{\rm c} + K_{\perp, \rm p}' \left(\mathbf{I}_3 - t_{\rm c} \otimes t_{\rm c}\right)
        \approx 10^{-4} \begin{pmatrix} 4.66 & 0.55 & 0.55 \\ 0.55 & 4.66 & 0.55 \\ 0.55 & 0.55 & 4.66 \end{pmatrix},\\
    \mathbf{K}_{\rm v} &= K_{\parallel,\rm v} t_{\rm c} \otimes t_{\rm c} + K_{\perp, \rm v}' \left(\mathbf{I}_3 - t_{\rm c} \otimes t_{\rm c}\right)
        \approx 10^{-4} \begin{pmatrix} 1.79 & 0.45 & 0.45 \\ 0.45 & 1.79 & 0.45 \\ 0.45 & 0.45 & 1.79 \end{pmatrix}.
\end{align*}
By computing the permeability in the unit cell for this setting (cf.~\cref{sec.TrueCoil}), we find
\begin{equation*}
    \mathbf{K}_{\rm hom}^\varepsilon = \varepsilon^2 \mathbf{K}_{\rm hom}
        \approx 5.9 \cdot 10^{-4} \, t_{\rm c} \otimes t_{\rm c} + 2.9 \cdot 10^{-4} \left(\mathbf{I}_3 - t_{\rm c} \otimes t_{\rm c}\right)
        \approx 10^{-4} \begin{pmatrix} 3.9 & 0.99 & 0.99 \\ 0.99 & 3.9 & 0.99 \\ 0.99 & 0.99 & 3.9 \end{pmatrix}.
\end{equation*}
Finally, based on the results of \cref{sec:permeability}, especially \cref{fig:PermBoutin01}, we also consider the weighted average
\begin{align*}
    \mathbf{K}_{\phi} &= \frac{3 K_{\parallel,\rm p} + K_{\parallel,\rm v}}{4} t_{\rm c} \otimes t_{\rm c} + \frac{K_{\perp, \rm p}' + K_{\perp, \rm v}'}{2} \left(\mathbf{I}_3 - t_{\rm c} \otimes t_{\rm c}\right)
        \approx 10^{-4} \begin{pmatrix} 3.48 & 0.76 & 0.76 \\ 0.76 & 3.48 & 0.76 \\ 0.76 & 0.76 & 3.48 \end{pmatrix}.
\end{align*}

The tetrahedral meshes for the Stokes problem are generated using gmsh, while those for the Darcy problems are uniform, see \cref{tab:mesh} for the details.
The resulting average velocities are summarised in \cref{tab:case1} and the diagonal cut through the pressure solutions is depicted in \cref{fig:case1:pressure}.
The parametric cases ($\mathbf{K}_{\rm p}$ and $\mathbf{K}_{\rm v}$) significantly over- and under-predict the values (similar to the findings in \cref{sec:permeability}).
As expected, the isotropic Darcy flow ($\textbf{K}_{\rm iso}$) cannot replicate the transversal flow, while the homogenised case ($\mathbf{K}_{\rm hom}^\varepsilon$) over-predicts the transversal flow.
However, the values coincide almost exactly in the averaged case ($\mathbf{K}_{\phi}$).
Also, visually, the pressures of the homogenised and averaged cases closely correspond to the Stokes pressure.

\begin{table}[htpb]
    \caption{Average velocity components $\bar{v}_{i}$ ($i=1,2,3$) in setting 1 (cube cut by cylinders) for the Stokes simulation and for the Darcy simulations with different permeability parametrizations $\textbf{K}_{\bullet}$ (and the relative deviation from Stokes).}
    \label{tab:case1}
    \centering\small

    \begin{tabular}{c|c
        |c@{\ }c
        |c@{\ }c
        |c@{\ }c
        |c@{\ }c
        |c@{\ }c
    }
        \toprule
        Quantity & Stokes
        & \multicolumn{2}{c|}{$\mathbf{K}_{\rm iso}$}
        & \multicolumn{2}{c|}{$\mathbf{K}_{\rm p}$}
        & \multicolumn{2}{c|}{$\mathbf{K}_{\rm v}$}
        & \multicolumn{2}{c|}{$\mathbf{K}_{\phi}$}
        & \multicolumn{2}{c}{$\mathbf{K}_{\rm hom}^{\varepsilon}$}
        \\
        \midrule
        $\bar{v}_{1}$ [{\small$\times10^{-4}$}]
        & 3.285
        & 2.588 & {\small(-21\%)}
        & 4.597 & {\small(+40\%)}
        & 1.693 & {\small(-48\%)}
        & 3.333 & {\small(+1\%)}
        & 3.673 & {\small(+12\%)}
        \\
        $\bar{v}_{2}$ [{\small$\times10^{-4}$}]
        & 0.334
        & 0.000 & {\small(-100\%)}
        & 0.262 & {\small(-22\%)}
        & 0.197 & {\small(-41\%)}
        & 0.339 & {\small(+2\%)}
        & 0.435 & {\small(+30\%)}
        \\
        $\bar{v}_{3}$ [{\small$\times10^{-4}$}]
        & 0.334
        & 0.000 & {\small(-100\%)}
        & 0.262 & {\small(-21\%)}
        & 0.197 & {\small(-41\%)}
        & 0.339 & {\small(+2\%)}
        & 0.435 & {\small(+30\%)}
        \\
        \bottomrule
    \end{tabular}

\end{table}

\begin{figure}[htpb]
    \small
    \parbox{0.3\textwidth}{\centering Stokes}
    \parbox{0.3\textwidth}{\centering Isotropic Darcy ($\mathbf{K}_{\rm iso}$)}
    \parbox{0.3\textwidth}{\centering Anisotropic Darcy ($\mathbf{K}_{\phi}$)}\\    
    \includegraphics[width=0.3\linewidth, trim=70mm 40mm 70mm 40mm, clip]{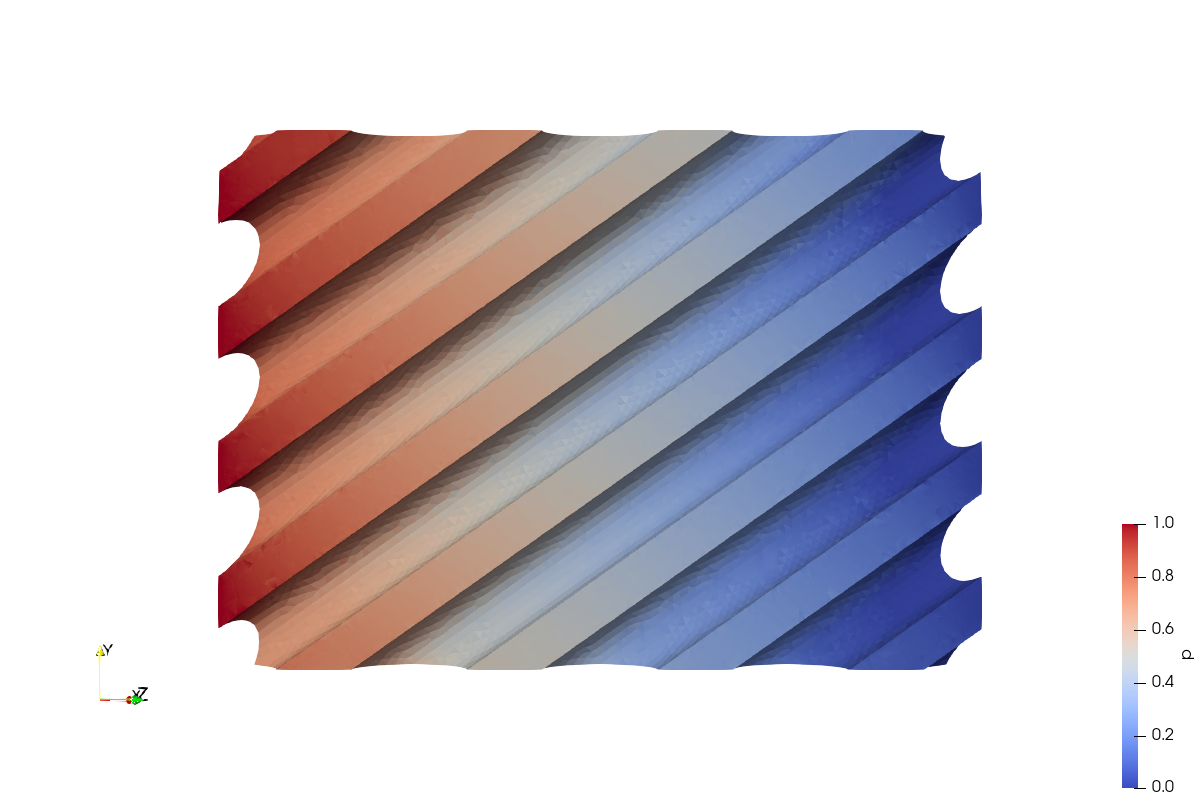}
    \includegraphics[width=0.3\linewidth, trim=70mm 40mm 70mm 40mm, clip]{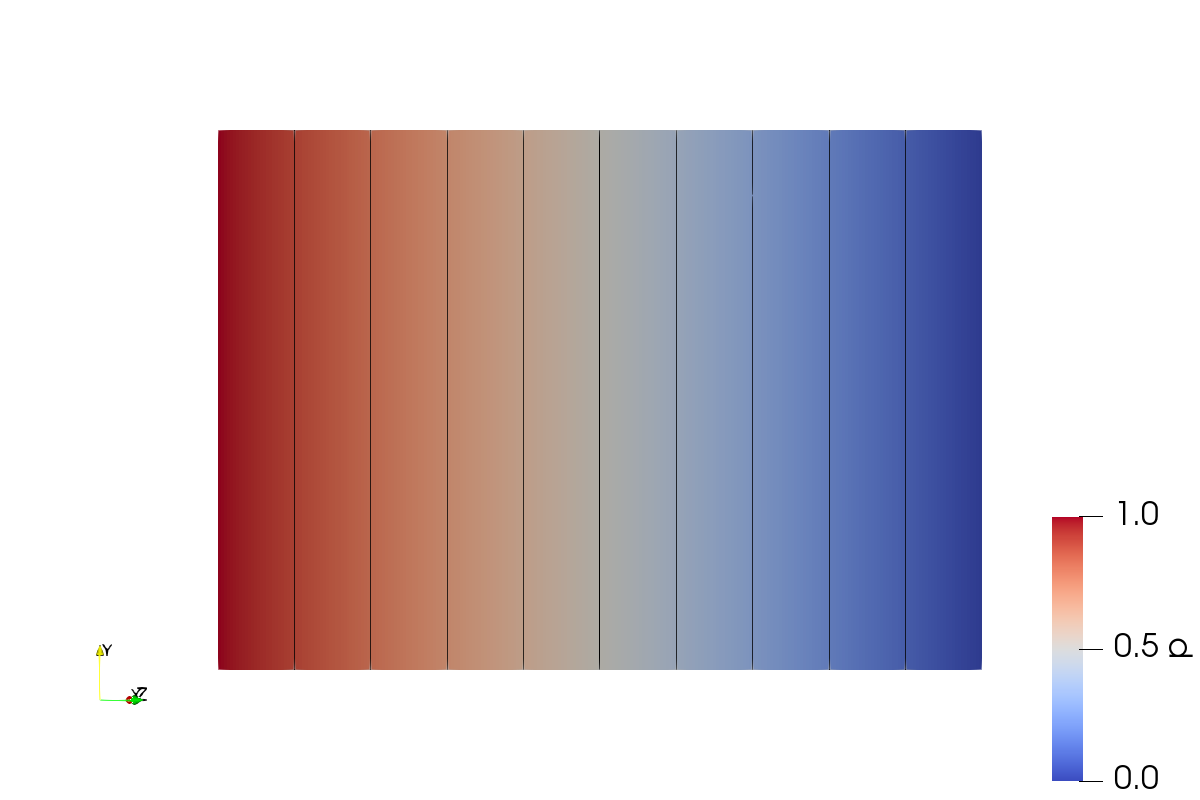}
    \includegraphics[width=0.3\linewidth, trim=70mm 40mm 70mm 40mm, clip]{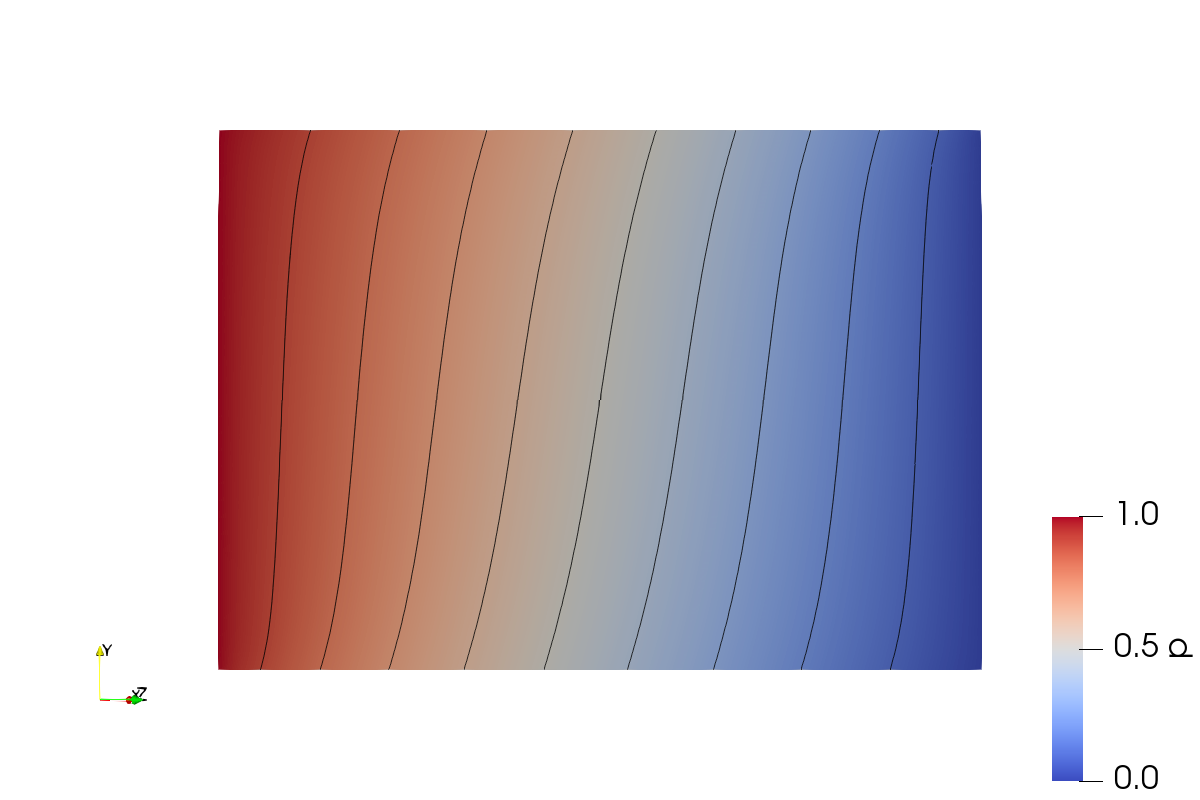}
    \hspace*{0.05\linewidth}

    \parbox{0.3\textwidth}{\centering Anisotropic Darcy ($\mathbf{K}_{\rm p}$)}
    \parbox{0.3\textwidth}{\centering Anisotropic Darcy ($\mathbf{K}_{\rm v}$)}
    \parbox{0.3\textwidth}{\centering Anisotropic Darcy ($\mathbf{K}_{\rm hom}^\varepsilon$)}\\
    \includegraphics[width=0.3\linewidth, trim=70mm 40mm 70mm 40mm, clip]{  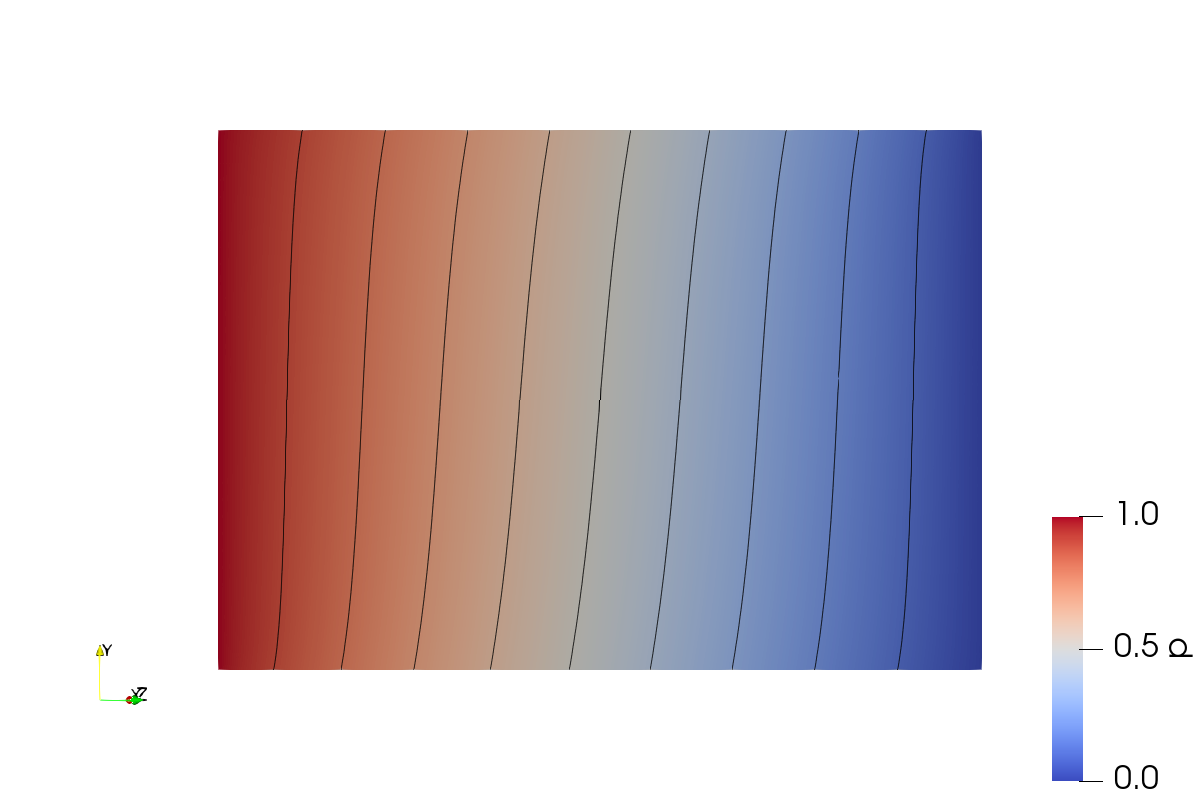}
    \includegraphics[width=0.3\linewidth, trim=70mm 40mm 70mm 40mm, clip]{  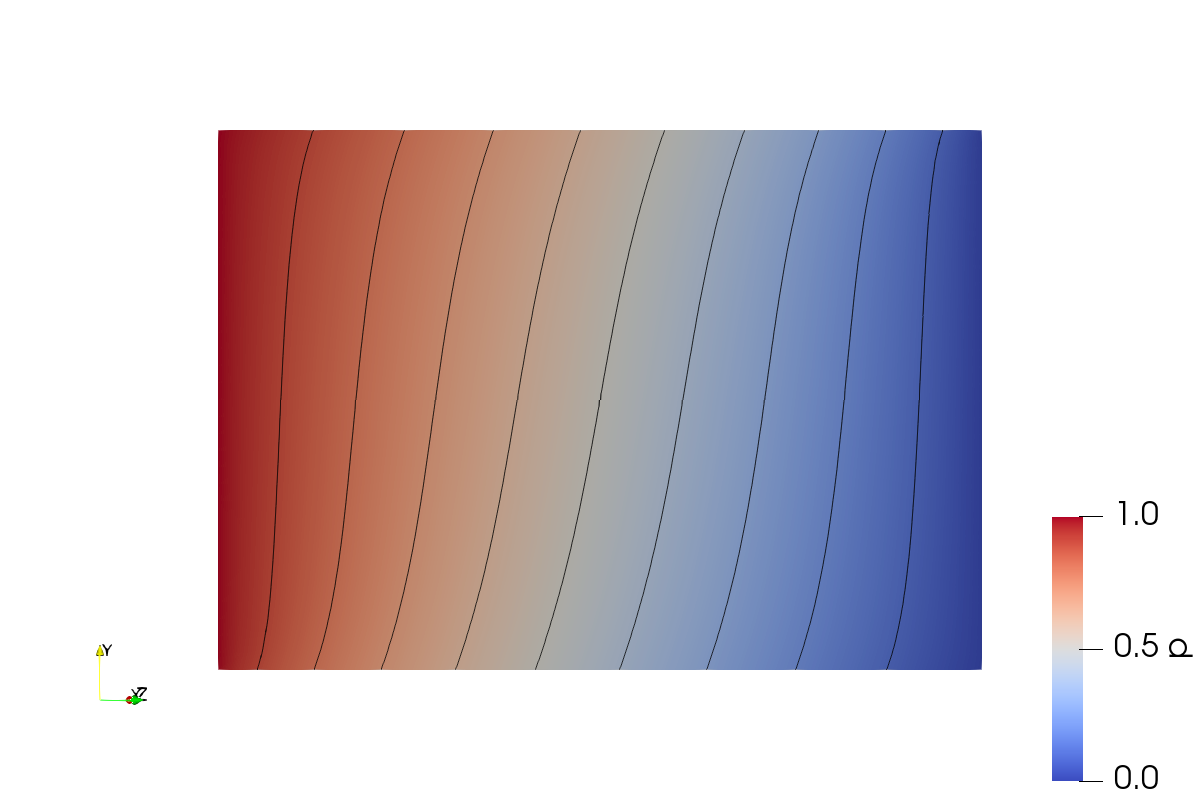}
    \includegraphics[width=0.3\linewidth, trim=70mm 40mm 70mm 40mm, clip]{  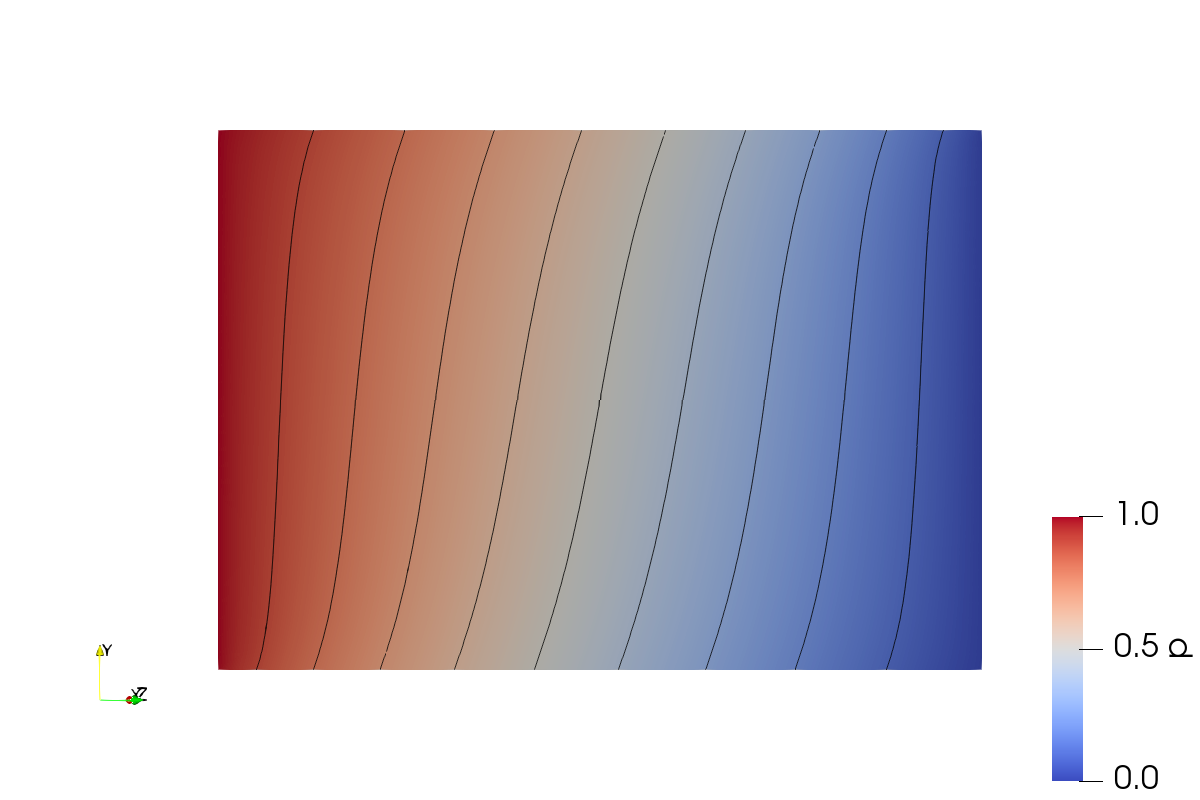}
    \includegraphics[width=0.05\linewidth, trim=37cm 0cm 12mm 17cm, clip]{  figures/darcy_cube_eps0.25_homogenized_pressure.png}

    \caption{Diagonal cut through the pressure solutions (with iso-lines $0.1, 0.2, \dots, 0.9$) for setting 1 (cube cut by cylinders).}
    \label{fig:case1:pressure}
\end{figure}

\subsection{Setting 2: Cylinder cut by spirals}

Next, we consider a domain based on a cylinder of radius one and length $L=3$ where two spirals are cut out, cf. \cref{fig:stokes:mesh} (right).
Both spirals have wire radius $R=0.25$, and the rotation radii $r_o = 0.65$, $r_i = 0.25$ with $3.5$ rotations within the cylinder.
The resulting total porosity (w.r.t.~the cylinder) is $\phi \approx 0.559$.
The tetrahedral meshes for the Stokes and Darcy problems are generated using gmsh, see \cref{tab:mesh} for details.

In a preprocessing step for the calculation of the averaged permeability and porosity, we generate a discretisation of the domain according to \cref{alg:rev_determination} (see \cref{app.Algo}).
First, a structured Cartesian grid is generated with respect to the bounding box of the domain $\mathcal{M}_{\rm a}$.
From this, a constant spacing $\Delta x$ is then calculated, depending on the smallest bounding box side length and a sampling rate $N_s = 40$.
We then generate mesh nodes in the bounding box, which is extended in length by the REV radius $r_{\rm REV}$, and centre them around the bounding box by adding $\Delta x / 2$.
The extension by $r_{\rm REV}$ ensures that each of the nodes contained in $\mathcal{M}_{\rm a}$ has neighbour points in the box with radius $r_{\rm REV}$.
For the averaging of porosity, we employ \cref{alg:porosity_wire} (see \cref{app.Algo}).
We assume that the coil $\mathcal{M}_{\rm c}$ is fully contained in the aneurysm sack $\mathcal{M}_{\rm a}$ (the domain to be homogenised), which itself is fully contained in the vessel $\mathcal{M}_{\rm v}$.
As a first step, we define the volume fraction of fluid and solid $\rho'(x)$ in $\mathcal{M}$.
For each point in the the grid, we assign a value $0\leq\rho_w\leq 1$ if it is contained in the vessel wall, a value of 1 (solid) if it s contained in the coil and a value of 0 (fluid) otherwise.
We then average the volume fraction $\rho'(x)$ in the grid points to $\rho(x)$ by either the mean of the grid-points values contained in the box of radius $r_{\rm REV}$, or by applying a Gaussian filter with a standard deviation of $r_{\rm REV}/2$.
The final porosity is then given by $1-\rho(x)$ in the points $x \in \mathcal{M}\cap\mathcal{M}_{\rm a}$.
For the averaging of the wire direction, we proceed by \cref{alg:direction_wire} (see \cref{app.Algo}).
We first extract the tangent vector $t_{{\rm c},k}$ of coil $k$ from its centreline.
This vector is then transformed into a sign-invariant shape tensor by taking the outer product $t_{{\rm c},k}\otimes t_{{\rm c},k}$.
The shape-tensor field $\mathbf{T}'(x)$ is then defined for each grid point by assigning either the closest shape-tensor field on the spline if the point is within the radius of coil $k$ or zero otherwise.
We then average each component of $\mathbf{T}'(x)$ in the same way as in \cref{alg:porosity_wire}, by employing a Gaussian of standard deviation of $r_{\rm REV}/2$ to obtain the averaged field $\mathbf{T}(x)$.
By extracting for each grid point the eigenvector of the largest absolute eigenvalue from $\mathbf{T}(x)$, we reconstruct the averaged wire direction.

\begin{figure}[tbp]
    \centering\small%
    \parbox{0.3\textwidth}{\centering Box filter with $r_{\rm REV} = 0.25$}
    \parbox{0.3\textwidth}{\centering Gaussian filter with $r_{\rm REV} = 0.25$}
    \parbox{0.3\textwidth}{\centering Gaussian filter with $r_{\rm REV} = 0.5$}
    \parbox{0.03\textwidth}{~}\\
    \includegraphics[width=0.3\linewidth, trim=28mm 30mm 56mm 30mm, clip]{  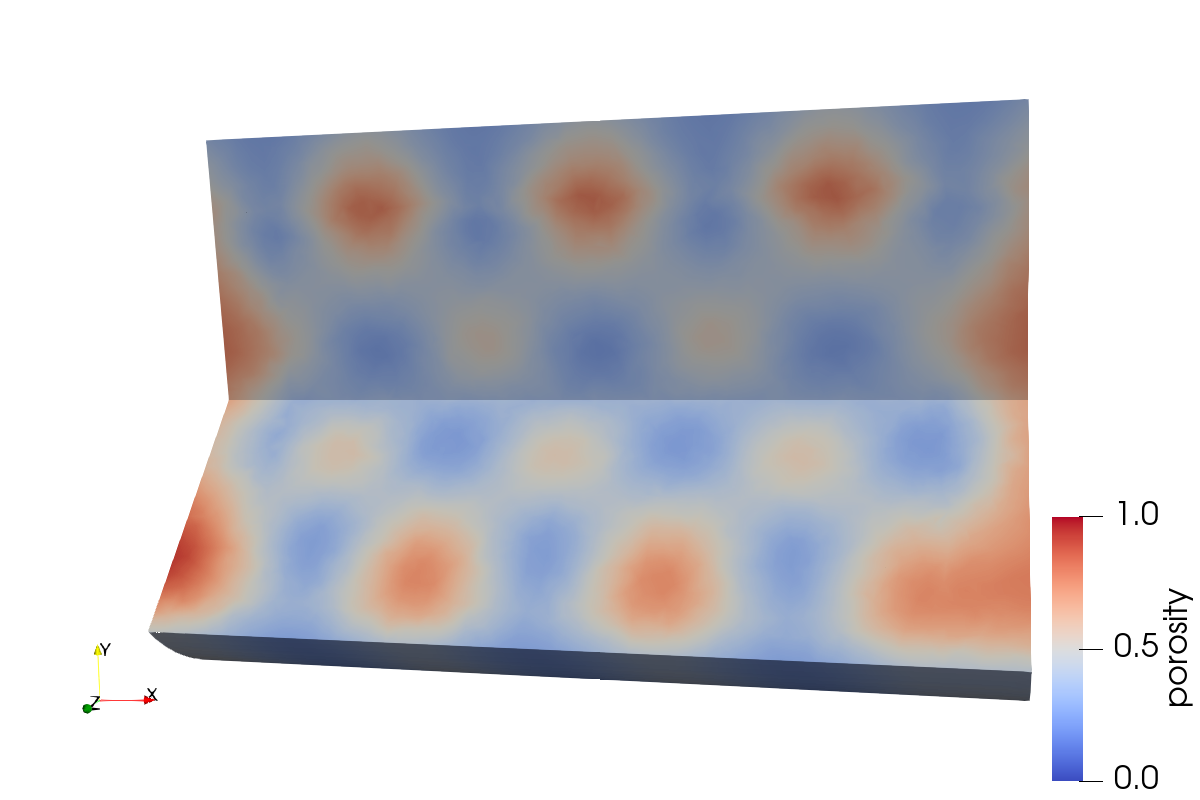}
    \includegraphics[width=0.3\linewidth, trim=28mm 30mm 56mm 30mm, clip]{  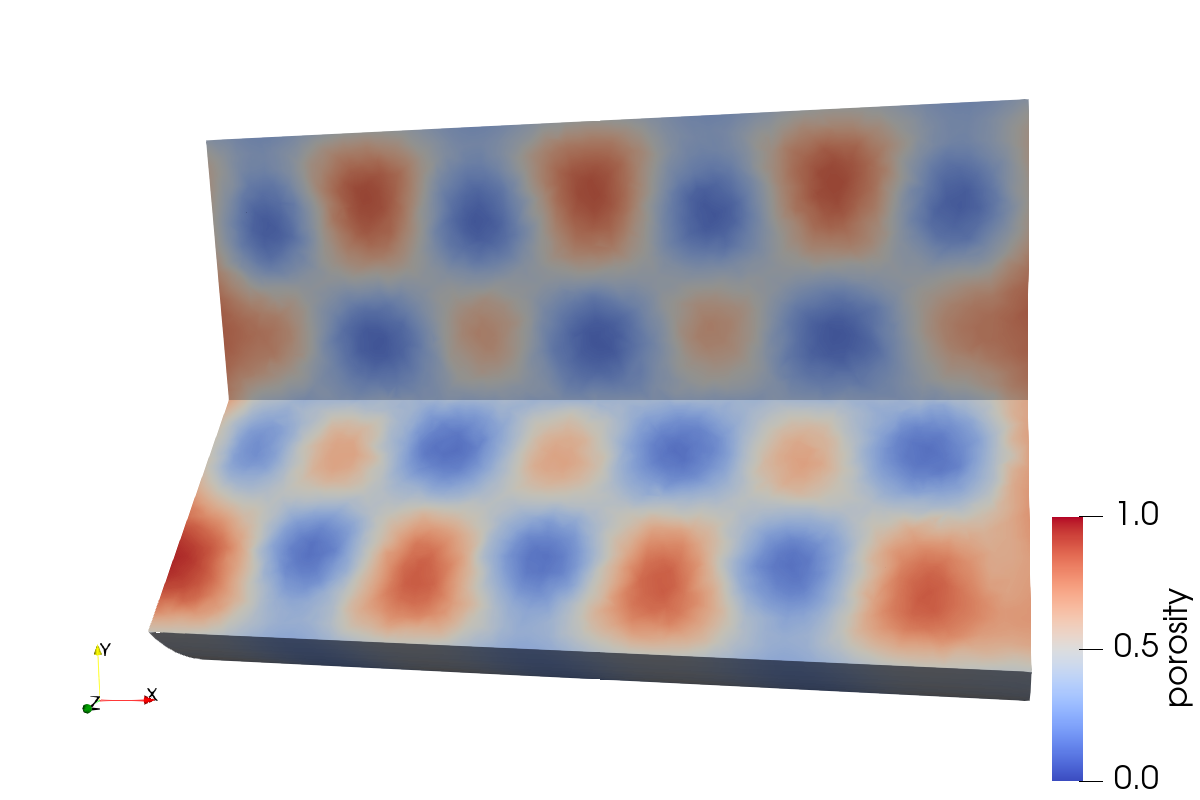}
    \includegraphics[width=0.3\linewidth, trim=28mm 30mm 56mm 30mm, clip]{  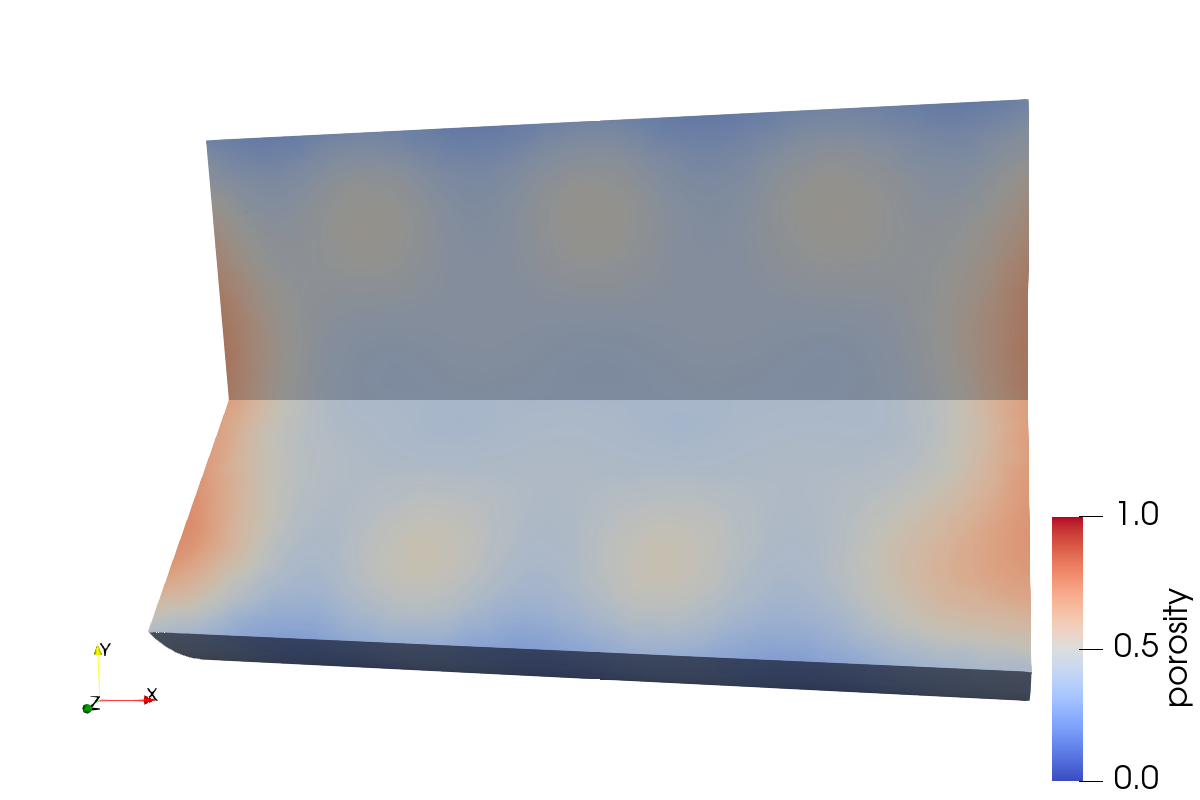}
    \includegraphics[width=0.05\linewidth, trim=37cm 0cm 14mm 17cm, clip]{  figures/cylinder_spiral_porosity_box_0.25_wall.png}
    \caption{Porosity fields of setting 2 generated by \cref{alg:porosity_wire} using $N_s=40$, $\rho_{\rm w} = 1$ and different filters.}
    \label{fig:case2:porosity}
\end{figure}

\begin{figure}[tbp]
    \centering\small%
    \parbox{0.03\textwidth}{~}
    \parbox{0.38\textwidth}{\centering $r_{\rm REV} = 0.25$}
    \parbox{0.38\textwidth}{\centering $r_{\rm REV} = 0.5$}
    \parbox{0.06\textwidth}{~}\\
    \includegraphics[width=0.035\linewidth, trim=28mm 20mm 365mm 150mm, clip]{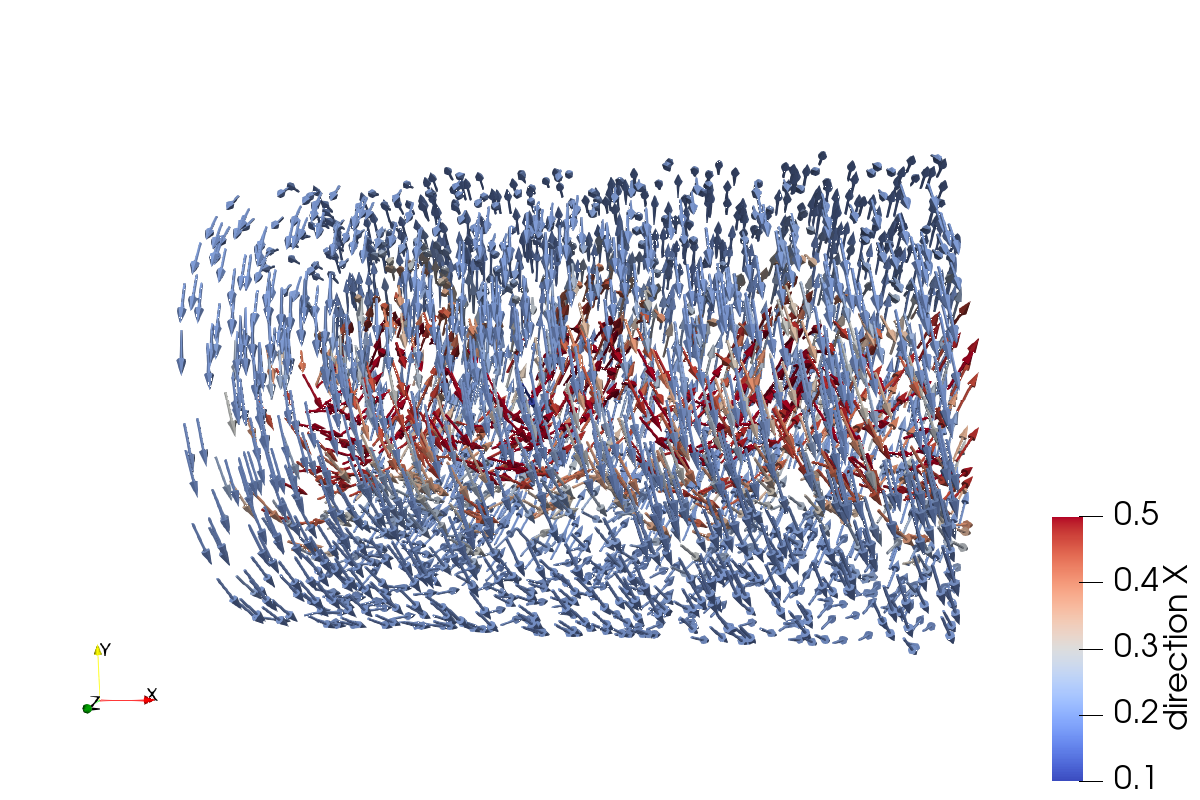}
    \includegraphics[width=0.38\linewidth, trim=60mm 45mm 56mm 50mm, clip]{figures/cylinder_spiral_direction_0.25.png}
    \includegraphics[width=0.38\linewidth, trim=60mm 45mm 56mm 50mm, clip]{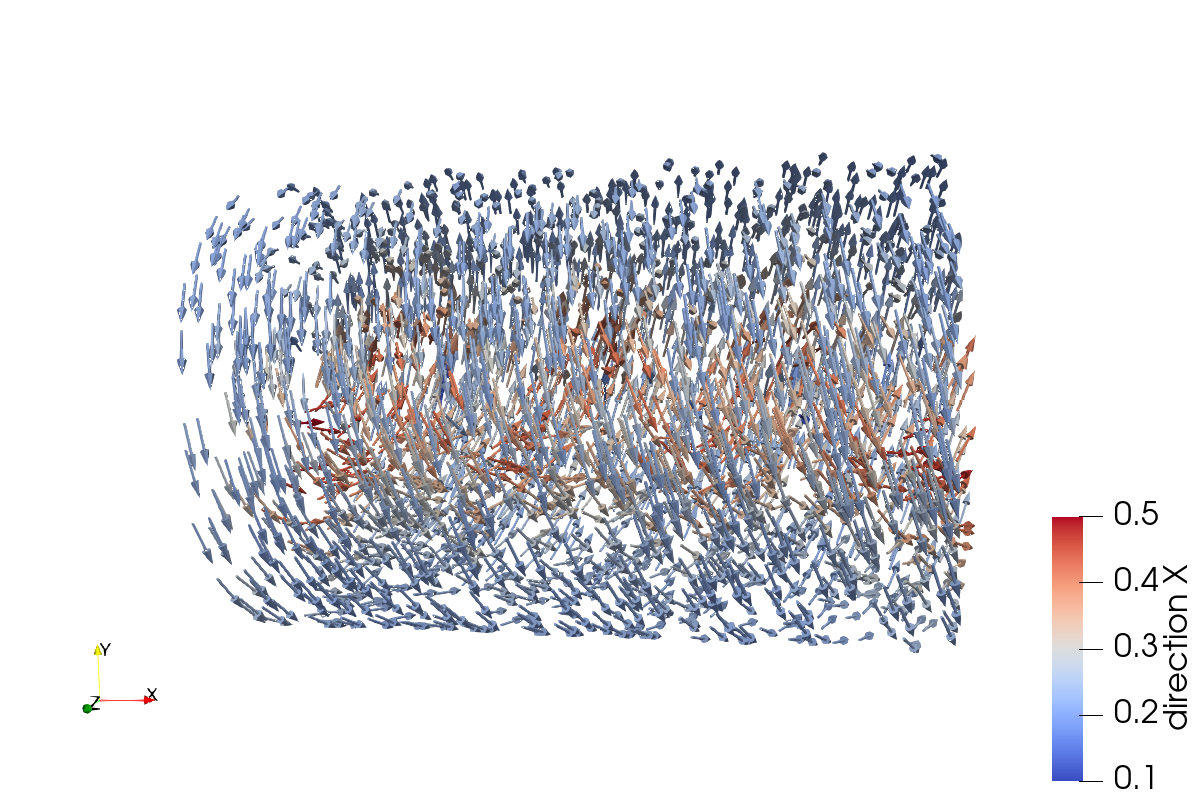}
    \includegraphics[width=0.06\linewidth, trim=37cm 0cm 4mm 17cm, clip]{figures/cylinder_spiral_direction_0.25.png}
    \caption{Wire-direction fields of setting 2 generated by \cref{alg:direction_wire} using $N_s=40$ and different averaging REV radii.}
    \label{fig:case2:wireDir}
\end{figure}

\begin{figure}[tbp]
    \centering
    \includegraphics[page=2, width=0.55\linewidth]{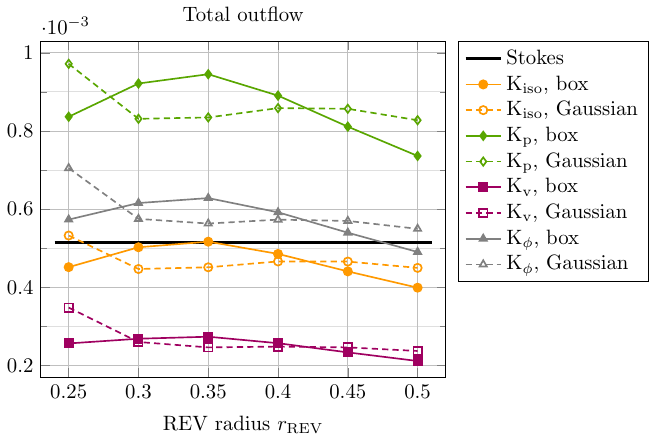}
    \includegraphics[page=4, width=0.4\linewidth, trim=0 0 33mm 0, clip]{figures/cylinder_spiral_average_velocity.pdf}
    \caption{Average velocity components $\bar{v}_1$ (left) and $\bar{v}_3$ (right) in setting 2 for Stokes and Darcy flow problem (with different permeability parametrisations $\textbf{K}_{\bullet}$ and averaging using different filters with $\rho_{\rm w} = 1$ and $N_s=40$).}
    \label{fig:case2:average}
\end{figure}

The resulting porosity and wire-direction fields are exemplarily depicted in \cref{fig:case2:porosity,fig:case2:wireDir}.
With a small averaging radius of $r_{\rm REV} = 0.25$, i.e.~the wire radius of the spirals, microscale details are resolved, while larger averaging radii lead to a more uniform distribution.
Likewise, the Gaussian filter yields more pronounced porosity gradients than the box filter.
We observed that the averaging using $\rho_{\rm w} = 1$, i.e.~padding by solid outside the cylinder, yields the best results, while padding by the average porosity $\rho_{\rm w} = 1 - \bar{\phi}$ or fluid ($\rho_{\rm w} = 0$) over-predicts the porosity close to the boundary, hence leading to a large Darcy flow along the cylinder wall.

\begin{figure}[tbp]
    \centering\small%
    \parbox{0.3\textwidth}{\centering Stokes}
    \parbox{0.3\textwidth}{\centering isotropic Darcy ($\mathbf{K}_{\rm iso}$)}
    \parbox{0.3\textwidth}{\centering anisotropic Darcy ($\mathbf{K}_\phi$)}
    \parbox{0.05\textwidth}{}\\
    \includegraphics[width=0.3\linewidth, trim=28mm 30mm 56mm 35mm, clip]{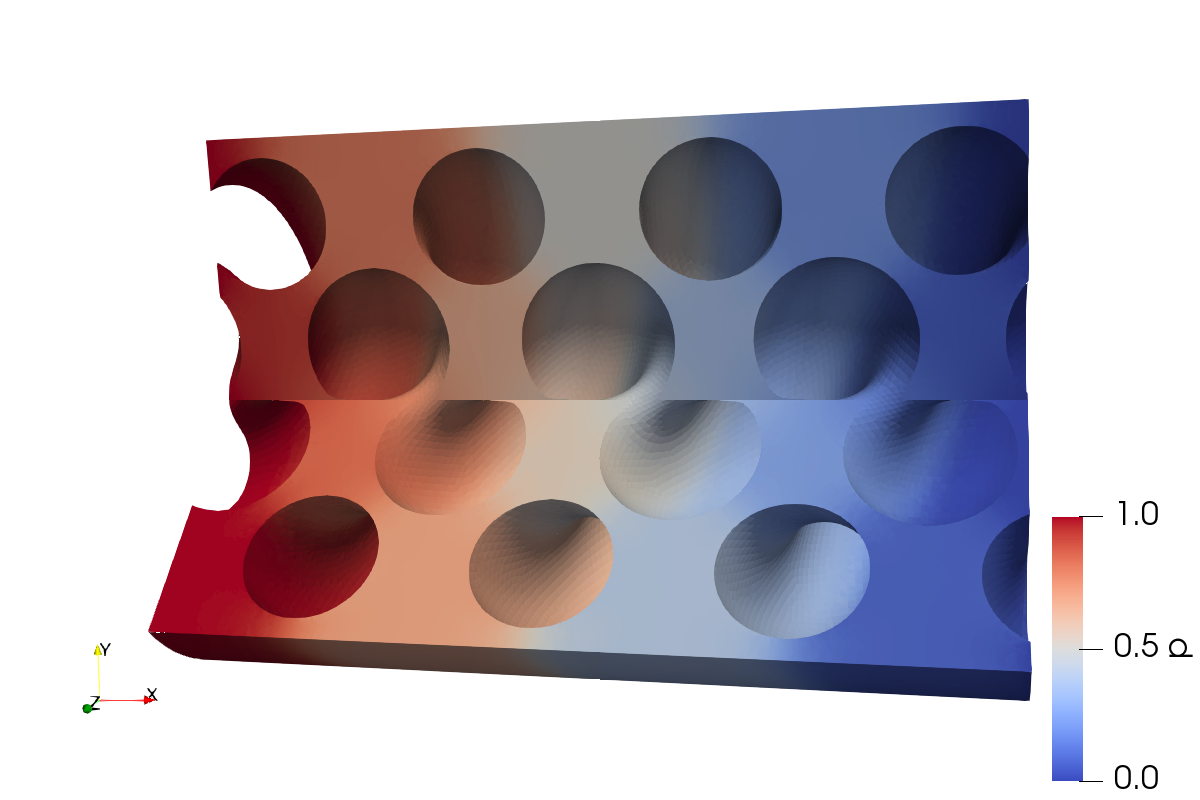}
    \includegraphics[width=0.3\linewidth, trim=28mm 30mm 56mm 35mm, clip]{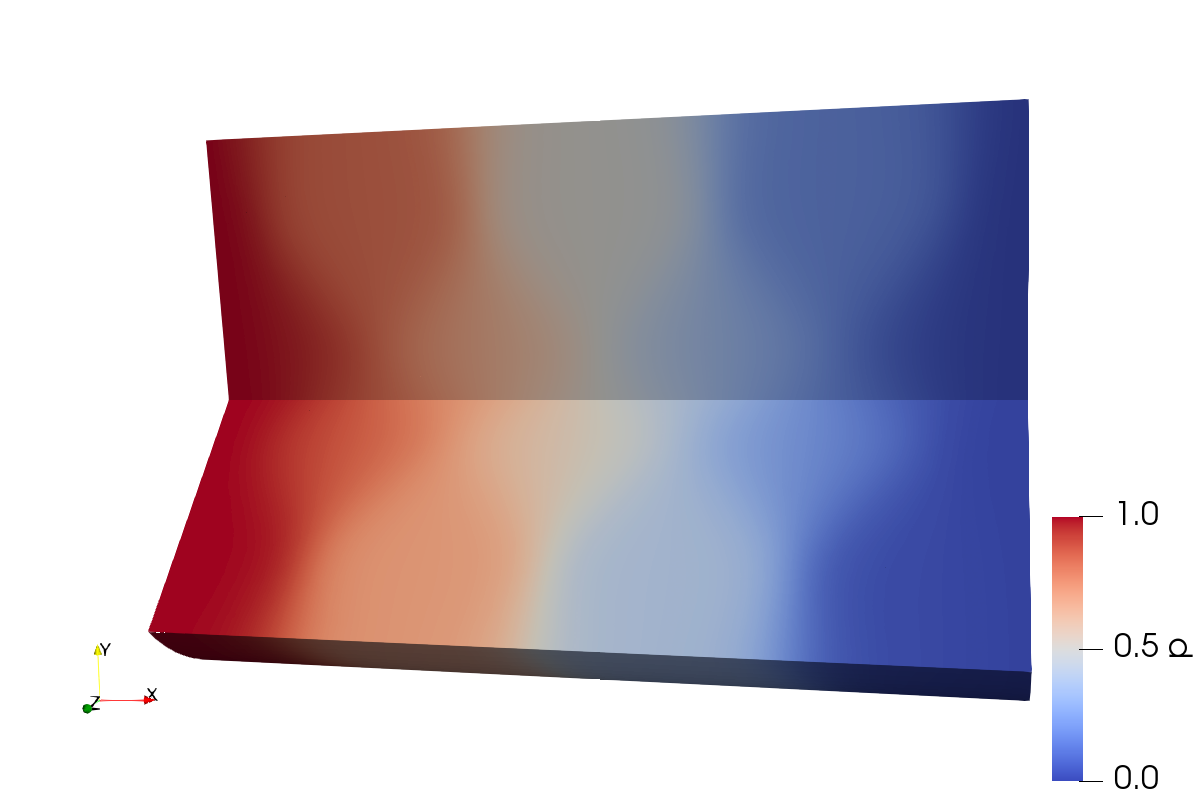}
    \includegraphics[width=0.3\linewidth, trim=28mm 30mm 56mm 35mm, clip]{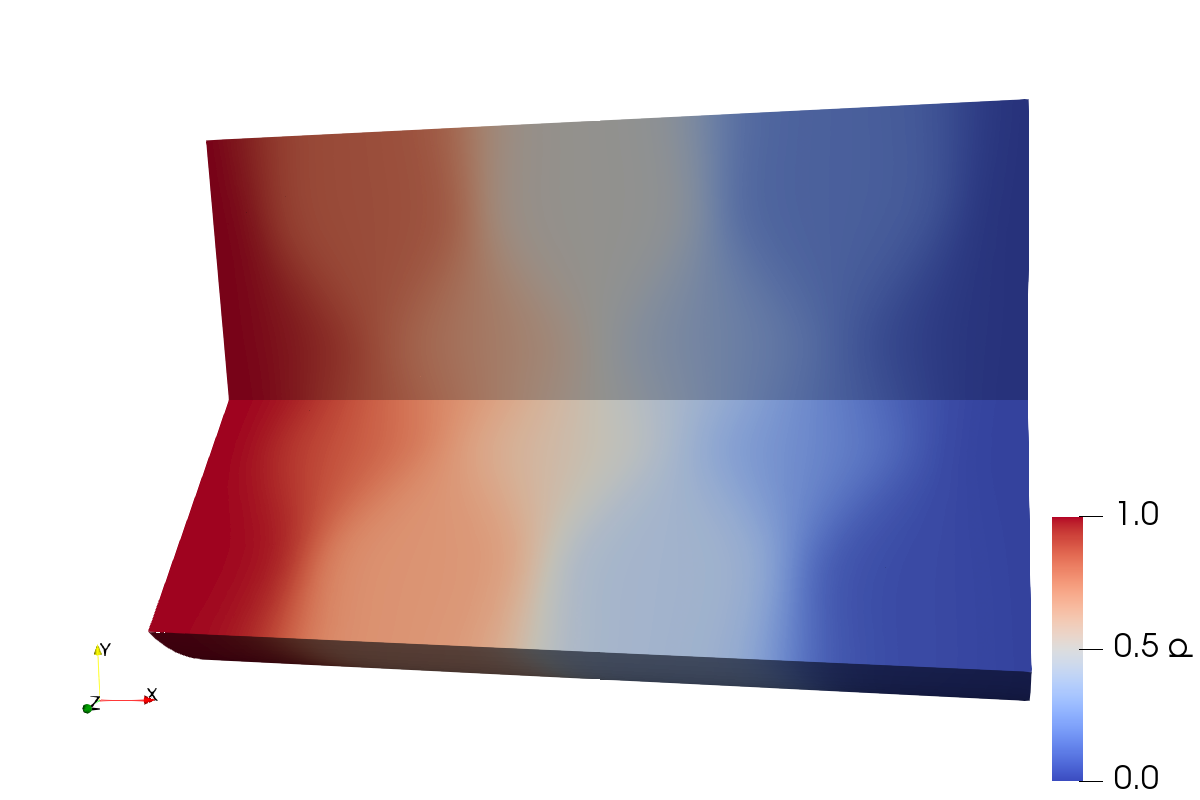}
    \includegraphics[width=0.05\linewidth, trim=37cm 0cm 14mm 17cm, clip]{figures/stokes-cylinder_spiral_pressure.png}

    \includegraphics[width=0.03\linewidth, trim=28mm 30mm 435mm 150mm, clip]{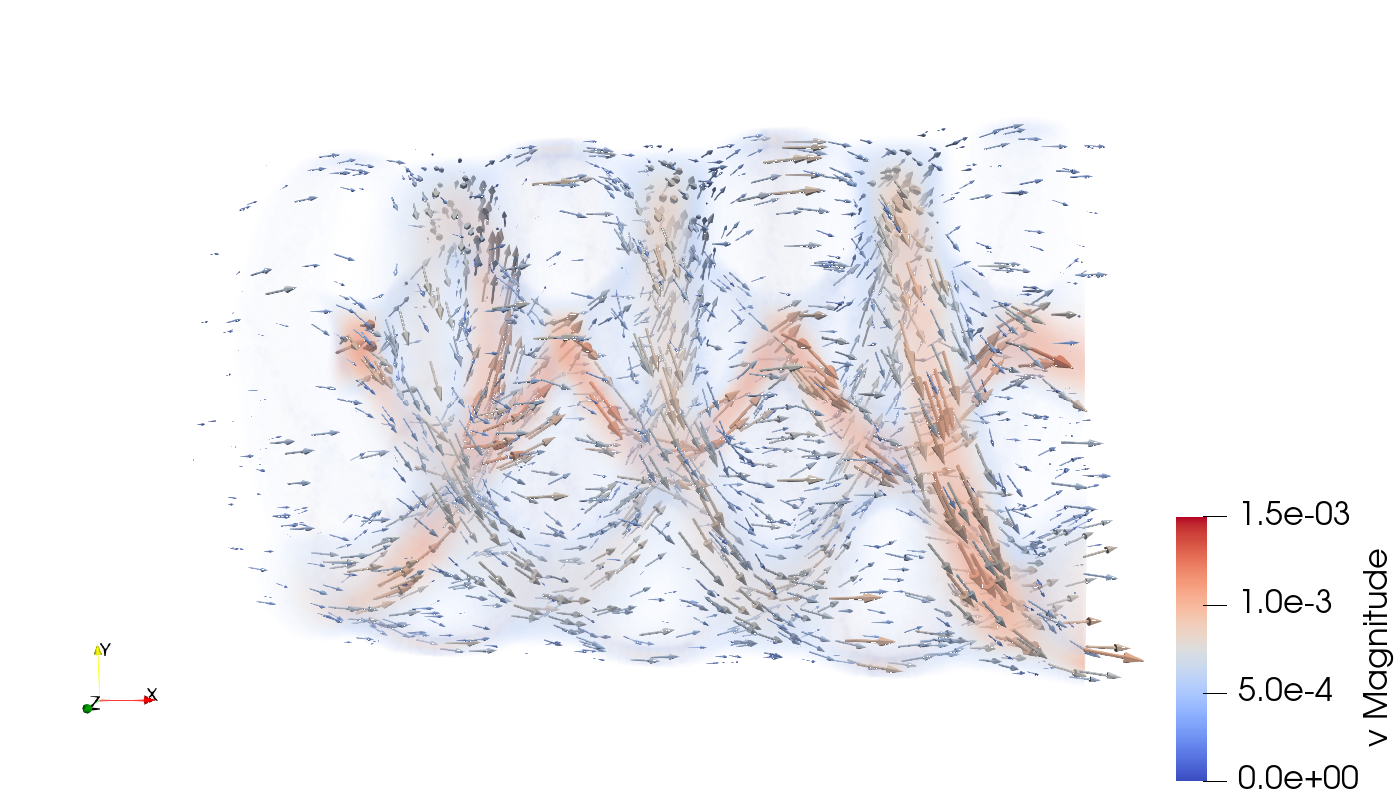}
    \hspace{-5mm}
    \includegraphics[width=0.305\linewidth, trim=65mm 35mm 93mm 35mm, clip]{figures/stokes-cylinder_spiral_velocity2.png}
    \hspace{-3mm}
    \includegraphics[width=0.305\linewidth, trim=65mm 35mm 93mm 35mm, clip]{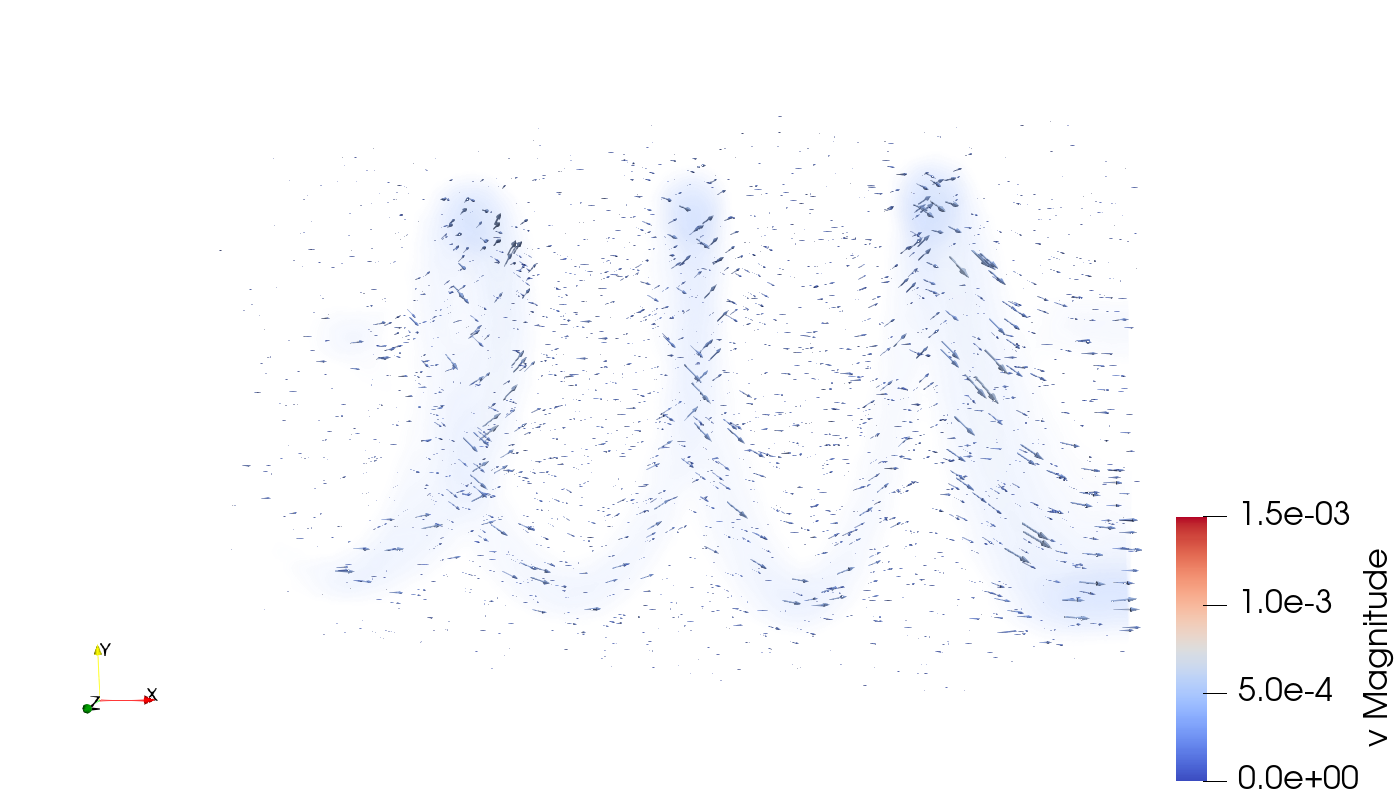}
    \includegraphics[width=0.305\linewidth, trim=65mm 35mm 93mm 35mm, clip]{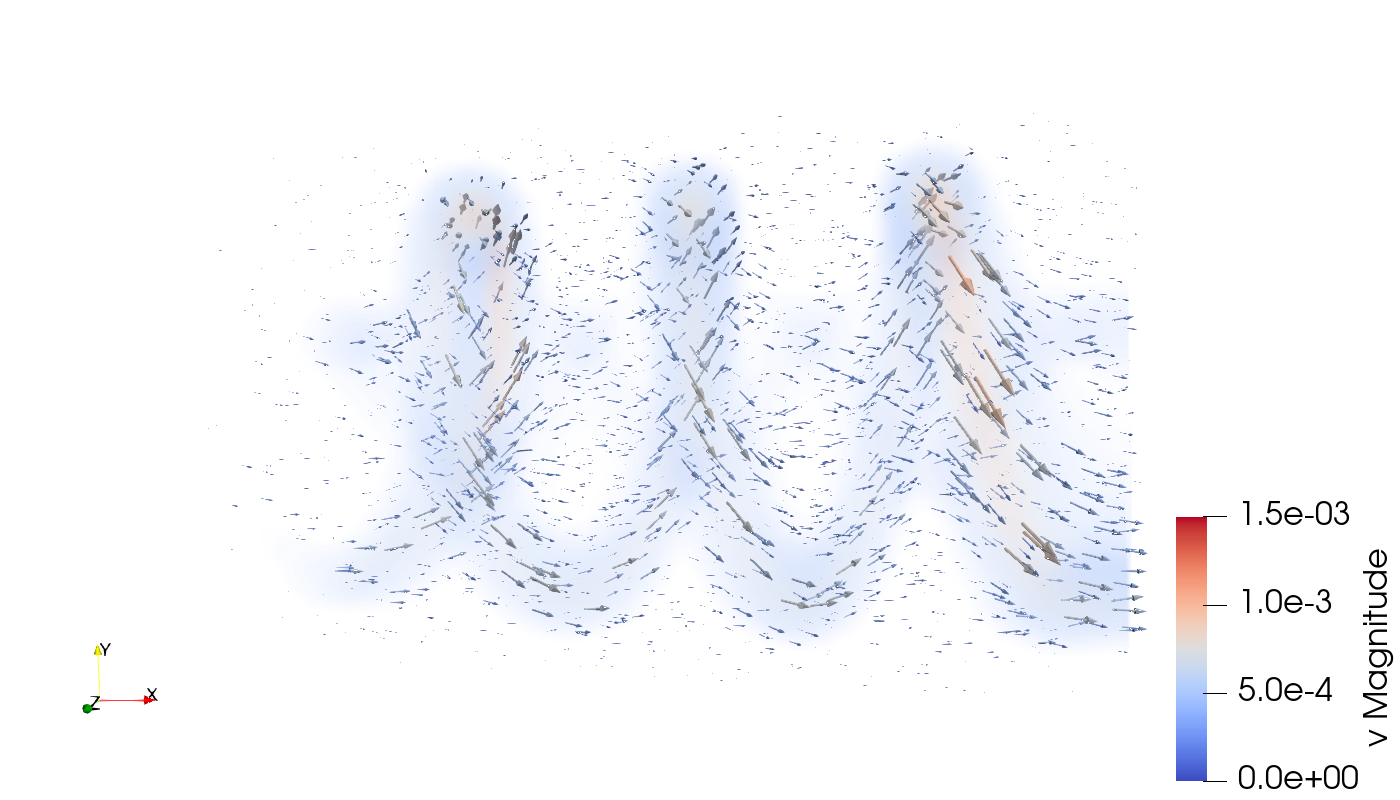}
    \includegraphics[width=0.06\linewidth, trim=415mm 0cm 14mm 17cm, clip]{figures/stokes-cylinder_spiral_velocity2.png}
    \caption{Pressure (top line) and velocity (bottom line) solutions for setting 2. For the averaging in the Darcy cases, \cref{alg:porosity_wire,alg:direction_wire} are used with a Gaussian filter of REV radius of $r_{\rm REV} = 0.35$, $N_s=40$ and $\rho_{\rm w} = 1$.}
    \label{fig:case2:solution}
\end{figure}

Overall, the effects of the different filters and their REV sizes on the average velocity is minor, as visible in \cref{fig:case2:average} for the average velocity components $\bar{v}_1$ and $\bar{v}_3$, while the average velocity component $\bar{v}_2$ is always below $10^{-7}$.
Note that the large deviations in the third component are likely due to the finite-size effects of the Neumann boundary conditions for Stokes flow combined with the boundary layers at in- and out-flow of the averaging (cf. \cref{fig:case2:porosity}), which decrease for increasing REV radius.
As before, the parametric cases ($\mathbf{K}_{\rm p}$ and $\mathbf{K}_{\rm v}$) significantly over- and under-predict the values, while the isotropic ($\mathbf{K}_{\rm iso}$) and the weighted average ($\mathbf{K}_{\phi}$) cases match quite well.
Comparing the pressure and velocity solutions of the Stokes problem and the latter two Darcy cases, as depicted in \cref{fig:case2:solution}, one observes close correspondence between all pressure fields, but a more localised flow pattern for the anisotropic Darcy case, similar to the Stokes flow.

\subsection{Setting 3: Realistic coil in an artificial aneurysm}

Finally, we consider an artificial aneurysm occluded by the coil presented in \cref{fig:coil}, resulting in the geometry depicted in \cref{fig:case3:param} (left).
This geometry does not allow for a fully resolved Stokes simulation on a typical computer due to the excessive number of DOFs required for an appropriate three-dimensional mesh. In particular, automated meshing tools, such as gmsh used here, easily fail due to coil--coil and coil--aneurysm intersections for triangulations at relatively ``low'' resolutions ($h\approx 0.03$, aiming for about 10 million mesh elements).
Even if they succeed, the occurrence of any bad-shaped elements impedes the convergence of iterative solvers necessary at that resolution, thus significantly increasing the computational time.
Nevertheless, we can apply the averaging strategy and resolve the Darcy flow,
as the domain only consists of the aneurysm.
To this end, we apply the averaging \cref{alg:porosity_wire,alg:direction_wire} (see \cref{app.Algo}) using $N_s=40$, a Gaussian filter with REV radius $r_{\rm REV} = 0.4$ and solid padding ($\rho_{\rm w} = 1$), see \cref{fig:case3:param} (right) for the result.
For the permeability parametrisation, we choose $\mathbf{K}_\phi$ based on the previously observed good results, together with $R = 0.1524$, which is the average of the two coil radii.
The tetrahedral mesh of the aneurysm for the Darcy problem is generated using gmsh, see \cref{tab:mesh} for details.
The resulting solution is depicted in \cref{fig:case3:darcy}.
The main features of the coil structure are recovered by the porosity field, and thus also the Darcy solution, visible in the heterogeneous pressure field and low velocities in densely packed coil regions.

\begin{figure}[htbp]
    \centering
    \includegraphics[width=0.35\linewidth, trim=33mm 3mm 51mm 5mm, clip]{  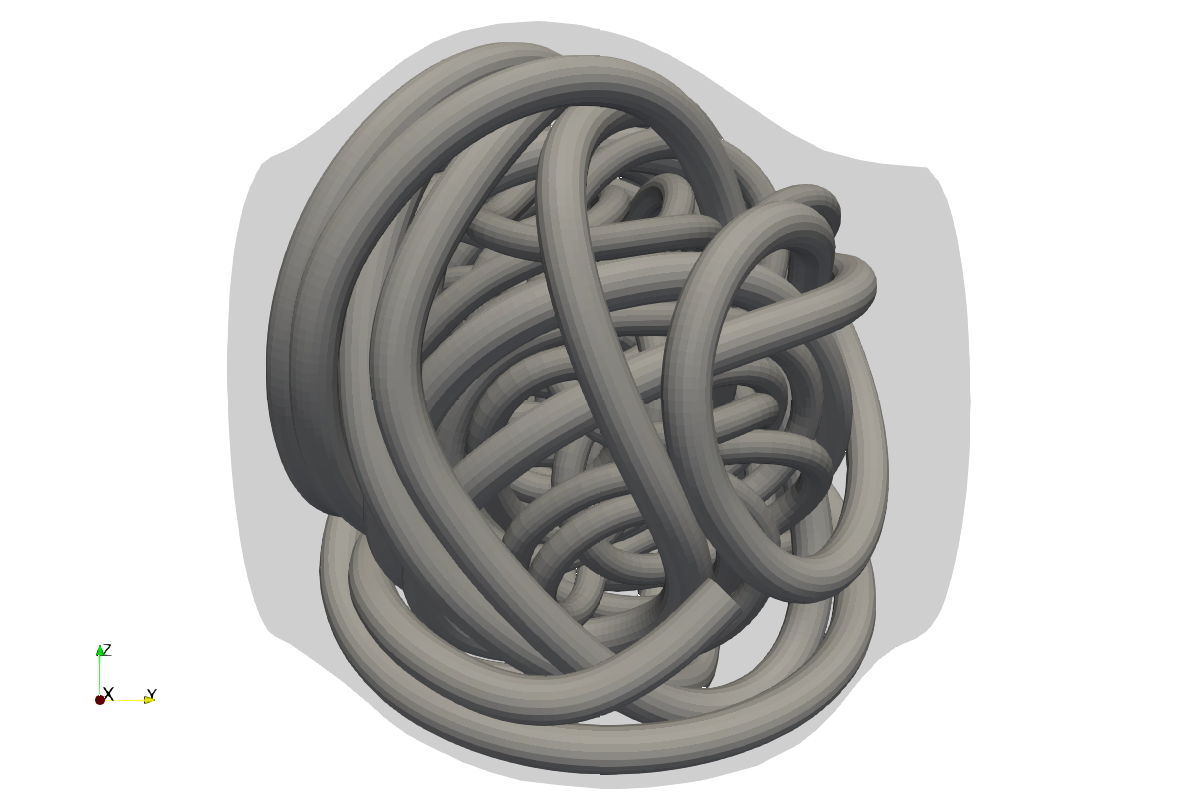}
    \includegraphics[width=0.35\linewidth, trim=80mm 3mm 4mm 5mm, clip]{  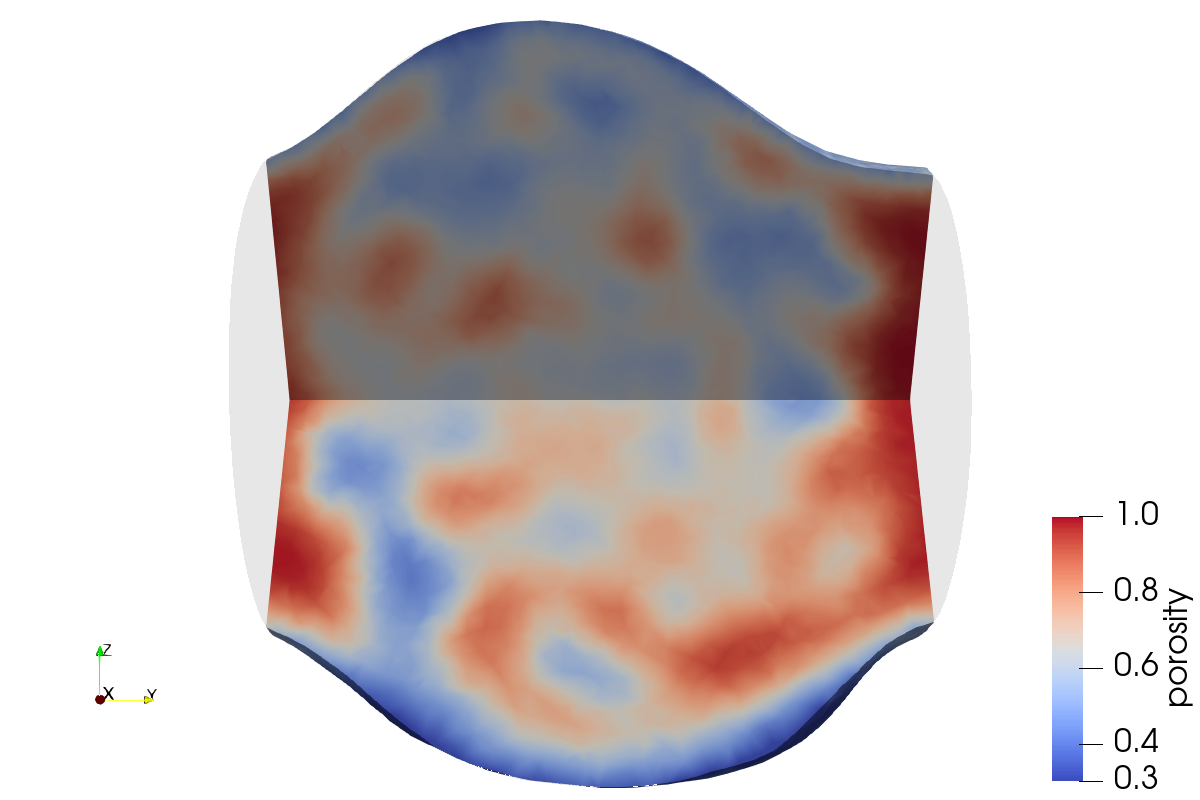}
    \caption{Left: Geometry of a coil in an artificial aneurysm. Right: Averaged porosity field generated by \cref{alg:porosity_wire}.}
    \label{fig:case3:param}
\end{figure}

\begin{figure}[htbp]
    \centering
    \includegraphics[width=0.04\linewidth, trim=33mm 23mm 360mm 100mm, clip]{  figures/artificial_vessel_geometry.png}
    \hspace{-6mm}
    \includegraphics[width=0.32\linewidth, trim=80mm 3mm 4mm 5mm, clip]{  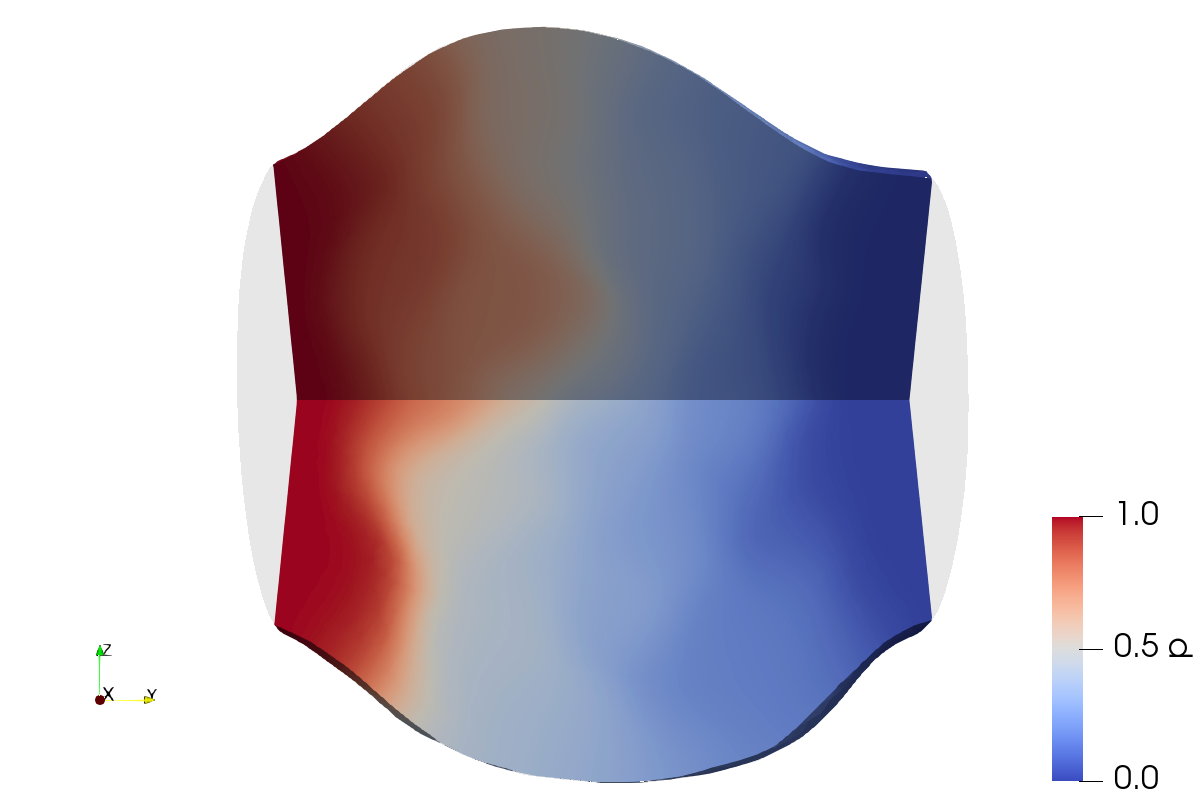}
    \includegraphics[width=0.32\linewidth, trim=80mm 3mm 4mm 5mm, clip]{  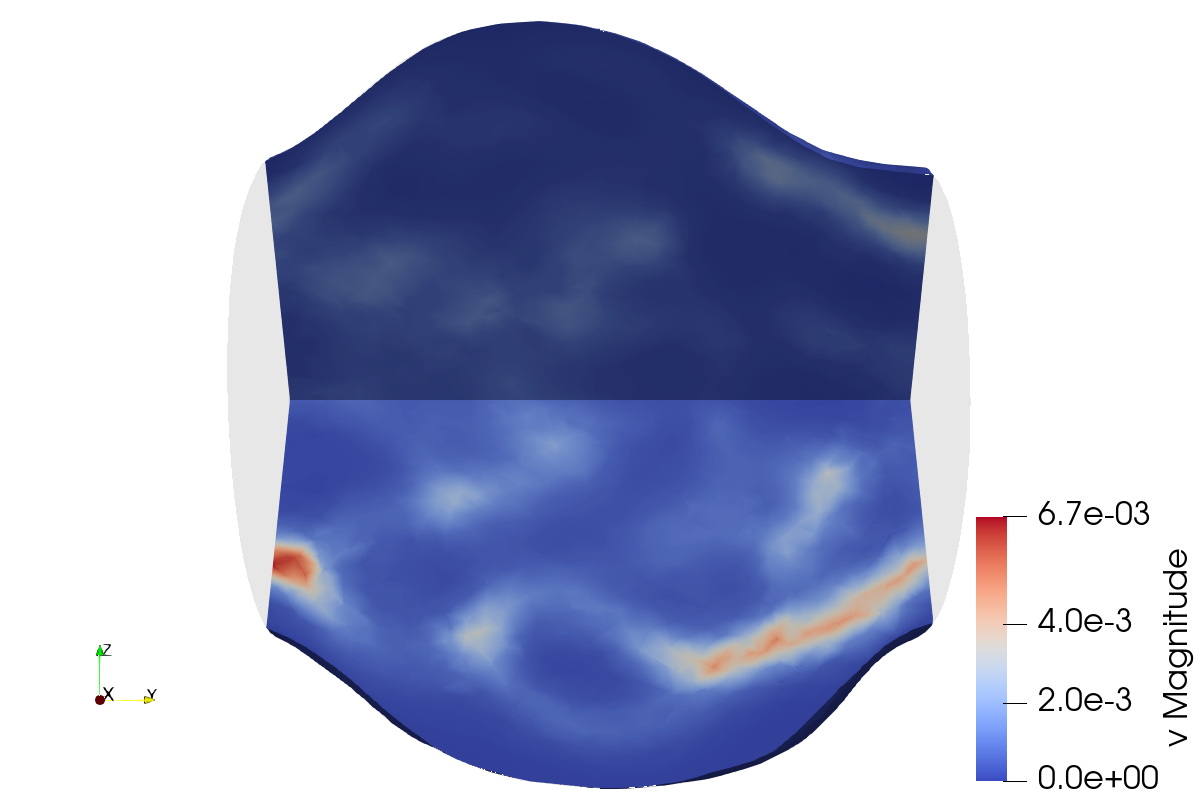}
    \includegraphics[width=0.32\linewidth, trim=80mm 3mm 4mm 5mm, clip]{  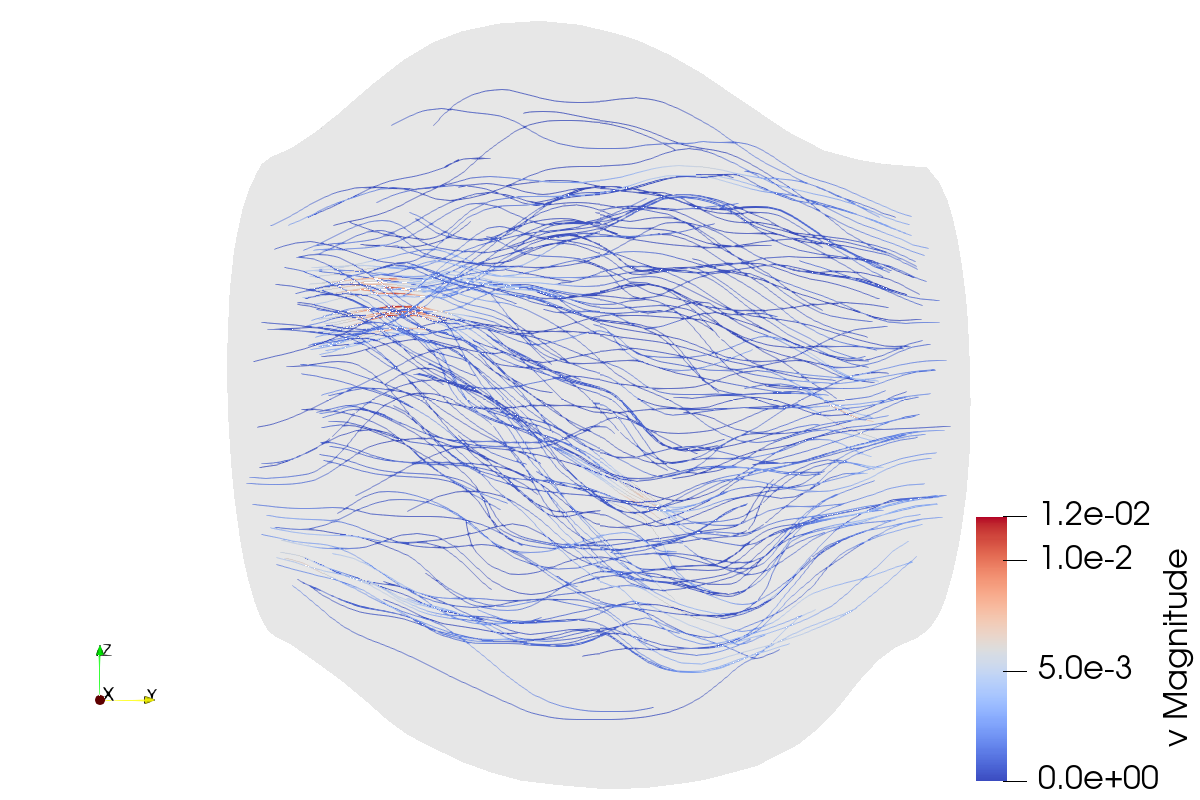}
    \caption{Pressure (left), velocity magnitude (centre) and streamlines (right) of the anisotropic Darcy simulation.}
    \label{fig:case3:darcy}
\end{figure}

\section{Conclusion}

This study introduces a comprehensive multiscale framework for determining effective permeabilities in anisotropic microscopic geometries containing dense, fibre-like obstacles, with high relevance to flow prediction in coiled aneurysm. By combining homogenisation theory with simulation in Representative Elementary Volumes, we obtain permeability tensors that capture directional effects imposed by the underlying microstructures. While commonly used isotropic
models can result in significant inaccuracies, anisotropic permeability tensors
can improve macroscopic simulation results significantly.  Darcy flow simulations show that anisotropy 
substantially influences macroscopic flow patterns and pressure distributions.
Applying the methodology to a realistic coil  geometry further demonstrates its flexibility and predictive capability. Overall, our approach provides a robust pathway for linking microscopic coil arrangements to clinically meaningful flow characterisations, and offers a flexible foundation which can be extended to other application fields such as fibre-reinforced materials, fractured porous media and biomedical flow configurations.

\appendix

\section{Algorithms for homogenisation by spatial averaging}
\label{app.Algo}
In this section, we present the \cref{alg:rev_determination,alg:rev_generatiom,alg:porosity_wire,alg:direction_wire} employed in this study and cited in the text.

\begin{algorithm}[H]
    \caption{Representative Elementary Volumes (REVs) generation}
    \label{alg:rev_generatiom}
    \begin{algorithmic}[1]
    \Require Surface mesh \texttt{.obj} of the coil.  \Comment{Made of quadrilateral elements.}
    \Ensure Tetrahedral volume mesh of an REV.  \Comment{Readable by a finite element software e.g. FreeFEM.}
    \State Read the surface mesh \texttt{.obj} of the coil with \texttt{blender}.
    \State Select an REV of size 1 by computing the intersection of the coil with a cube of size 1. Re-mesh the REV with appropriate mesh size. \Comment{At this stage, we have a surface mesh (made of quadrilateral elements) of the coil inside a unit cell.}  
    \State Export the REV in format \texttt{.obj}.
    \State Open the REV mesh with \texttt{meshlab}. Turn the mesh into a pure triangular mesh and re-mesh the coil with appropriate mesh size. Remove spurious components of the mesh and export each component in format \texttt{.obj}. \Comment{Surface mesh made of triangles.}
    \State Convert each mesh component \texttt{.obj} in a surface mesh \texttt{.mesh} using the python plugin \texttt{meshio} \cite{meshio}. \newline \Comment{Format \texttt{.mesh} readable by FreeFEM.}
    \State Import the surface mesh of each component in FreeFEM and glue then together to create a global surface mesh $\mathcal{C}_{h,S}$.
    \State Create the surface mesh of the unit box (oversampled with a ratio $\kappa=1.1$) around the obstacles, denoted $\mathcal{B}_{h,S}$. \Comment{Ensure that the mesh is compatible with periodic boundary conditions.}
    \State Combine $\mathcal{C}_{h,S}$ and $\mathcal{B}_{h,S}$ as $\mathcal{T}_{h,S}=\mathcal{B}_{h,S} \cup \mathcal{C}_{h,S}$.
    \State Create the three-dimensional volume mesh $\mathcal{T}_{h}$ (composed of tetrahedra) of $\mathcal{T}_{h,S}$, i.e.~filling the space between the bounding box and the coil with tetrahedra, using \texttt{TetGen} \cite{tetgen}.
    \end{algorithmic}
\end{algorithm}

\begin{algorithm}[H]
\caption{Uniform mesh generation}
\label{alg:rev_determination}
\begin{algorithmic}[1]
\Require Aneurysm mesh $\mathcal{M}_{\rm a}$; REV radius $r_{\rm REV}>0$; sampling rate $N_s\in \mathbb{N}_+$.
\Ensure Uniform mesh $\mathcal{M}$.



\State Compute the bounding box centre $x_{\rm c}$ and length vector $L$ of $\mathcal{M}_{\rm a}$
\State $\Delta x = \min\{L_1, L_2, L_3\} / (2N_s)$ \Comment{mesh spacing}
\ForAll{$i \in \{1,2,3\}$}
    \State $N_i = \left\lceil (L_i/2 + r_{\rm REV}) / \Delta x\right\rceil$
    \State $\mathcal{S}_i = \left\{ x_{{\rm c}, i} + (n+1/2) \Delta x \ \middle|\ n = -N_i, \ldots, N_i-1 \right\}$
\EndFor
\State $\mathcal{M} = \mathcal{S}_1 \times \mathcal{S}_2 \times \mathcal{S}_3$
\end{algorithmic}
\end{algorithm}

\begin{algorithm}[H]
\caption{Averaging procedure for the porosity field}
\label{alg:porosity_wire}
\begin{algorithmic}[1]
\Require Coil mesh $\mathcal{M}_{\rm c}$, aneurysm mesh $\mathcal{M}_{\rm a}$, and vessel mesh $\mathcal{M}_{\rm v}$ with $\mathcal{M}_{\rm c} \subseteq \mathcal{M}_{\rm a} \subseteq \mathcal{M}_{\rm v}$; wall boundary condition $\rho_{\rm w} \in [0, 1]$; REV radius $r_{\rm REV}$; and uniform mesh $\mathcal{M}$ from \cref{alg:rev_determination}.
\Ensure Porosity field $\phi(x)$ for $x\in \mathcal{M} \cap \mathcal{M}_{\rm a}$.

\LComment{\textbf{Initialisation and boundary conditions for solid fraction}}
\ForAll{$x \in \mathcal{M}$}
    \State $\rho'(x) = \begin{cases}
            \rho_{\rm w} & \text{if } x \in \mathcal{M}_{\rm v}^\complement \text{ (wall BC)} \\
            1            & \text{if } x \in \mathcal{M}_{\rm c} \text{ (coil region)} \\
            0            & \text{otherwise (fluid region)}
        \end{cases} $
\EndFor

\LComment{\textbf{Apply averaging by convolution over local REV neighbourhood (box filter)}}
\ForAll{$x \in \mathcal{M} \cap \mathcal{M}_{\rm a}$}
    \State $\mathcal{N}(x) = \big\{ x' \in \mathcal{M} \ \big|\ \|x - x'\|_\infty \le r_{\mathrm{REV}} \big\}$
    \State $\displaystyle\rho(x) = \frac{1}{|\mathcal{N}(x)|}
        \sum_{x' \in \mathcal{N}_{\rm REV}(x)} \rho'(x')$
    \LComment{Alternative Gaussian filter with $\sigma = r_{\rm REV}/2$:\\
     $\displaystyle
         \rho(x) =
         \left(\sum_{x' \in \mathcal{M}} \rho'(x') \exp\left(-\tfrac{\|x - x'\|_2^2}{2\sigma^2}\right) \right) \Big/ \left(\sum_{x' \in \mathcal{M}} \exp\left(-\tfrac{\|x - x'\|_2^2}{2\sigma^2}\right) \right)$}
    \State $\phi(x) = 1 - \rho(x)$
\EndFor
\end{algorithmic}
\end{algorithm}

\begin{algorithm}[H]
\caption{Averaging procedure for the direction field}
\label{alg:direction_wire}
\begin{algorithmic}[1]
\Require $K$ pairs of coil meshes $\mathcal{M}_{{\rm c},k}$ and coil centrelines $\mathcal{C}_k$, and aneurysm mesh $\mathcal{M}_{\rm a}$ such that $\mathcal{M}_{{\rm c},k} \subseteq \mathcal{M}_{\rm a}$ for all $k = 1, \dots, K$; REV radius $r_{\rm REV}$; and uniform mesh $\mathcal{M}$ from \cref{alg:rev_determination}.
\Ensure Tangent field $t(x)$ for $x\in \mathcal{M} \cap \mathcal{M}_{\rm a}$.

\LComment{\textbf{Generate shape tensor field}}
\ForAll{$x \in \mathcal{M}$}
    \State $\mathbf{T}'(x) = \mathbf{0}$
\EndFor
\For{$k = 1, \dots, K$}
    \State Generate tangent vector $t_{{\rm c}, k}(s)$ along the centreline $s \in \mathcal{C}_k$
    \ForAll{$x \in \mathcal{M} \cap \mathcal{M}_{{\rm c},k}$}
        \State Find $s_{x} = \operatorname{argmin}_{s \in \mathcal{C}_k} \|x - s\|_2$
        \State $\mathbf{T}'(x) = t_{{\rm c},k}(s_{x}) \otimes t_{{\rm c},k}(s_{x})$
    \EndFor
\EndFor

\LComment{\textbf{Average shape tensor field}}
\ForAll{$x \in \mathcal{M} \cap \mathcal{M}_{\rm a}$}
    \State Apply Gaussian-weighted averaging with $\sigma = r_{\rm REV} / 2$:
    \Statex $\displaystyle
        \mathbf{T}(x) = \left(\sum_{x' \in \mathcal{M}} \exp\left(-\tfrac{\|x - x'\|_2^2}{2\sigma^2}\right) \mathbf{T}'(x') \right) \Big/ \left(\sum_{x' \in \mathcal{M}} \exp\left(-\tfrac{\|x - x'\|_2^2}{2\sigma^2}\right)\right)$
    \LComment{\textbf{Extract wire direction}}
    \State Compute normalised eigenvector $v(x)$ corresponding to the largest absolute eigenvalue of $\textbf{T}(x)$
    \State $t(x) = v(x)$
\EndFor
\end{algorithmic}
\end{algorithm}

\section{Sensitivity analysis of the wire direction}
\label{sec.SensWire}
A sensitivity analysis of \cref{alg:direction_wire} with respect to the sampling rate $N_s \in \{30,\,40,\,50,\,60\}$ is depicted in \cref{fig:SamplingSensitivity}. 
For each pair of successive rates ($N_s \rightarrow N_s'$), we compute the relative change in porosity and the relative Frobenius norm of the difference in the permeability tensor, both with respect to the coarser grid.
These quantities are evaluated for all REVs, and the mean and standard deviation across all REVs is reported.
\begin{figure}[htbp]
    \centering
    \usepgfplotslibrary{fillbetween}
\begin{tikzpicture}

\begin{axis}[
name=rev_sensitivity,
ymode=log,
log basis y={10},
xlabel={Refinement step},
ylabel={Relative change [\%]},
xmin=28, xmax=52,
ymin=1e-1, ymax=1e2,
xtick={30,40,50},
xticklabels={{$30\rightarrow 40$},{$40\rightarrow 50$},{$50\rightarrow 60$}},
xmajorgrids,
xtick style={color=black},
ymajorgrids,
yminorgrids,
minor y grid style={
    opacity=0.5,    
    line width=0.5pt
},
tick align=outside,
tick pos=left,
legend cell align={left},
legend style={
anchor=north east,
at={(0.95,0.95)},
fill opacity=1,
draw opacity=1,
text opacity=1
}
]

\addplot[name path=phi_upper, draw=none, forget plot] table {
N y
30 5.2963
40 2.0812
50 0.9119
};

\addplot[name path=phi_lower, draw=none, forget plot] table {
N y
30 1.0000
40 0.1660
50 0.2457
};

\addplot[
    black,
    fill=unia-orange,
    fill opacity=0.12,
    draw=none,
    forget plot
] fill between[of=phi_upper and phi_lower];

\addplot[name path=K_upper, draw=none, forget plot] table {
N y
30 34.8608
40 16.6030
50 9.6846
};

\addplot[name path=K_lower, draw=none, forget plot] table {
N y
30 10.7326
40 1.7370
50 1.1536
};

\addplot[
    unia-green,
    fill=unia-green,
    fill opacity=0.12,
    draw=none,
    forget plot
] fill between[of=K_upper and K_lower];

\addplot[
    color=unia-orange,
    mark=o,
    line width=1pt
] table {
N mean
30 3.1481
40 1.1236
50 0.5788
};
\addlegendentry{Porosity \(\phi\)}

\addplot[
    color=unia-green,
    mark=square,
    line width=1pt
] table {
N mean
30 22.7967
40 9.1700
50 5.4191
};
\addlegendentry{Permeability \(\mathbf K\)}

\end{axis}

\end{tikzpicture}
    \caption{REV resolution sensitivity. Mean relative changes and corresponding standard deviation in porosity $\phi$ and permeability $\mathbf{K}$ between successive refinements ($N_s \rightarrow N_s'$) over all REVs.}
    \label{fig:SamplingSensitivity}
\end{figure}
We attribute the larger variation in permeability mainly to the discrete grid representation of the geometry. While the directional field is derived from a continuous centreline, the geometry itself is approximated on a Cartesian grid, resulting in a staircase-like interface. Consequently, small pores and narrow gaps may open or close depending on the resolution. This has only a minor effect on porosity, but a strong impact on permeability, as the underlying analytical relations exhibit near-cancellation of leading-order terms, making permeability highly sensitive to small geometric changes.

For the considered application in endovascular coiling, the level of variation at $N_s=40$ is acceptable, as it remains within the typical uncertainty range encountered in medical applications, including imaging, measurement and modelling uncertainties.

\section{Sensitivity analysis of the computed permeability tensor}
\label{sec.Sens}

To assess the sensitivity, and thus the robustness, of our approach with respect to inaccuracies in the rotation matrix $\mathbf{R}_\theta$ given by \cref{alg:direction_wire}, we apply small perturbations to this rotation matrix and analyse their influence. To this end, let $\mathbf{R}$  be a rotation matrix and $\mathbf{K}_{\rm eff}$ a diagonal tensor. As is standard in sensitivity analyses of rotations, we model small perturbations using a skew-symmetric matrix representation, 
\begin{equation*}
    \boldsymbol{\widehat{\omega}} = \begin{pmatrix}
        0 & - \omega_3 & \omega_2 \\
        \omega_3 & 0 & -\omega_1 \\
        -\omega_2 & \omega_1 & 0
    \end{pmatrix}.
\end{equation*}
Then, for a small parameter $\eta>0$, define the perturbed rotation matrix $\mathbf{R}_\eta$ as
\begin{equation*}
    \mathbf{R}_\eta = \mathbf{R}(\mathbf{I}_3+\eta  \boldsymbol{\widehat{\omega}}).
\end{equation*}
It should be noted that $\mathbf{R}_\eta$ remains orthogonal up to first order in $\eta$. Now, let 
\begin{equation*}
  \mathbf{K}=\mathbf{R} \mathbf{K}_{\rm eff} \mathbf{R}^\top,
\end{equation*}
and consider the perturbed permeability tensor,
\begin{equation*}
    \mathbf{K}_\eta = \mathbf{R}_\eta  \mathbf{K}_{\rm eff}  \mathbf{R}_\eta^\top  =  \mathbf{R} (\mathbf{I}_3+\eta \boldsymbol{\widehat{\omega}} )  \mathbf{K}_{\rm eff}  (\mathbf{I}_3+\eta \boldsymbol{\widehat{\omega}} )^\top \mathbf{R}^\top .
\end{equation*}
Since $ \boldsymbol{\widehat{\omega}}^\top = -  \boldsymbol{\widehat{\omega}} $, this becomes 
\begin{equation*}
    \mathbf{K}_\eta =\mathbf{R} (\mathbf{I}_3+\eta \boldsymbol{\widehat{\omega}} )  \mathbf{K}_{\rm eff}  (\mathbf{I}_3-\eta \boldsymbol{\widehat{\omega}} ) \mathbf{R}^\top .
\end{equation*}
Now expand, 
\begin{equation*}
   \mathbf{K}_\eta  = \mathbf{K} + \eta\mathbf{R}( \boldsymbol{\widehat{\omega}}\mathbf{K}_{\rm eff}-\mathbf{K}_{\rm eff}\boldsymbol{\widehat{\omega}})\mathbf{R}^\top+ O(\eta^2).
\end{equation*}
At the end, the first-order change is
\begin{equation*}
   \delta \mathbf{K} = \eta \boldsymbol{R}[\boldsymbol{\widehat{\omega}},\mathbf{K}_{\rm eff}  ]\boldsymbol{R}^\top,
\end{equation*}
where $[\boldsymbol{\widehat{\omega}},\mathbf{K}_{\rm eff}  ] = \boldsymbol{\widehat{\omega}}\mathbf{K}_{\rm eff} - \mathbf{K}_{\rm eff}\boldsymbol{\widehat{\omega}}$ is the commutator. By denoting  $\mathbf{K}_{\rm eff}={\rm diag}(K_1, K_2, K_3)$, it follows
\begin{equation*}
   [\boldsymbol{\widehat{\omega}},\mathbf{K}_{\rm eff}  ]_{ij} = (K_j - K_i) \boldsymbol{\widehat{\omega}}_{ij}.
\end{equation*}
This last equality shows that the sensitivity depends on the difference $K_i-K_j$. In other words, if the eigenvalues are close then the perturbation has small effect; on the contrary, if the eigenvalues are widely separated then the perturbation has a larger effect. 

For a numerical illustration, we incorporate the small parameter $\eta$ directly into the matrix $\boldsymbol{\widehat{\omega}}$, and denote the resulting matrix by $\boldsymbol{\widehat{\omega}}_\eta$. Let $\omega_\eta = (\omega_1, \omega_2, \omega_3)$. We choose $\omega_\eta=\eta u$, with $u$ a direction unit vector. Also instead, of considering $\mathbf{R}_\eta = \mathbf{R}(\mathbf{I}_3+ \boldsymbol{\widehat{\omega}}_\eta)$, we use the exponential map
\begin{equation*}
    \mathbf{R}_\eta = \mathbf{R}\exp(\boldsymbol{\widehat{\omega}}_\eta),
\end{equation*}
which preserves orthogonality exactly and is more stable numerically. For small perturbation, this quantity is easy to compute using the Rodrigues formula. The full procedure is summarised in \cref{alg:sens}.

\begin{algorithm}[htbp]
\caption{Sensitivity analysis of the permeability computation}
\label{alg:sens}
\begin{algorithmic}[1]
\Require Diagonal tensor $\mathbf{K}_{\rm eff}$, Rotation matrix $\mathbf{R}$.
\State Choose the small parameter $\eta$.
\State Sample randomly unit vector $u$, and set $\omega_\eta=\eta u$.
\State Build $\boldsymbol{\widehat{\omega}}_\eta$.
\State Compute $\mathbf{R}_\eta = \mathbf{R} \exp(\boldsymbol{\widehat{\omega}}_\eta)$. \Comment{Using the Rodrigues Formula}
\State Evaluate $\mathbf{K}_\eta = \mathbf{R}_\eta \mathbf{K}_{\rm eff} \mathbf{R}_\eta^\top$.
\State Evaluate $\mathbf{K} = \mathbf{R}\mathbf{K}_{\rm eff} \mathbf{R}^\top$.
\State Measure $\lVert \mathbf{K}_\eta - \mathbf{K} \rVert / \lVert \mathbf{K} \rVert$.
\end{algorithmic}
\end{algorithm}
Using the same notations as in \eqref{eq:Kalgo}, and denoting by $\mathbf{K}^\kappa_\mathrm{coil,algo,\eta}$ the perturbed permeability tensor, we compute the following relative error
\begin{equation*}
{\rm err}_{\eta} = \frac{ \lVert  \mathbf{K}^\kappa_\mathrm{coil,algo,\eta} - \mathbf{K}^\kappa_\mathrm{coil,algo}\rVert_F} {\lVert \mathbf{K}^\kappa_\mathrm{coil,algo} \rVert_F}.    
\end{equation*} 
To ensure robustness with respect to rotational direction, sensitivity measures have been averaged over 100 random perturbation directions. The results are presented in \cref{fig:Sensitivity}. They indicate that the proposed approach is robust with respect to rotational perturbations. This robustness is explained by the relatively clustered eigenvalues of the permeability tensors, as shown in \cref{fig:PermBoutin1}, which limits the sensitivity of the transformed tensor to changes in orientation. Furthermore, the approximate trend of the relative error with respect to the perturbation magnitude $\eta$ in log--log scale confirms a first-order dependence of $\mathbf{K}$ on rotational perturbations.

\begin{figure}[H]
    \centering
    \begin{subfigure}[b]{0.49\textwidth}
        \begin{tikzpicture}[scale=0.8]

\begin{axis}[
name=plot1, 
tick align=outside,
tick pos=left,
xlabel={REV index},
xmajorgrids,
xminorgrids,
xtick style={color=black},
xtick={1,2,3,4,5,6,7,8,9,10,11,12},
ylabel={Relative Error ${\rm err}_{\eta}$},
ymin=7.e-5, ymax=2e-1,
ymajorgrids,
yminorgrids,
minor y grid style={
    opacity=0.5,    
    line width=0.5pt
},
ymode=log,
ytick style={color=black},
legend cell align={left},
legend style={
anchor=north,
at={(0.5,1.15)},
fill opacity=1, 
draw opacity=1, 
text opacity=1,
legend columns=3,
column sep=5pt
}
]
\addplot[
    color=unia-orange,
    mark=o,
    only marks,
    mark size=2pt,
    every mark/.append style={solid},
    error bars/.cd,
        y dir=both,
        y explicit
] table[
    x=Matrix  ,
    y=mean1e-03,
    y error=std1e-03
]{figures/PLOTS/matrix_errors_index.dat};
\addlegendentry{$\eta=10^{-3}$}
\addplot[
    color=unia-green,
    mark=triangle,
    only marks,
    mark size=3pt,
    every mark/.append style={solid},
    error bars/.cd,
        y dir=both,
        y explicit
] table[
    x=Matrix  ,
    y=mean1e-02,
    y error=std1e-02
]{figures/PLOTS/matrix_errors_index.dat};
\addlegendentry{$\eta=10^{-2}$}
\addplot[
    color=unia-pink,
    mark=square,
    only marks,
    mark size=2pt,
    every mark/.append style={solid},
    error bars/.cd,
        y dir=both,
        y explicit
] table[
    x=Matrix  ,
    y=mean1e-01,
    y error=std1e-01
]{figures/PLOTS/matrix_errors_index.dat};
\addlegendentry{$\eta=10^{-1}$}
\end{axis}

\node at (plot1.south)  [yshift=-45pt, text width=7cm, align=center]  {(a) Relative error for each REV for different perturbation magnitude $\eta$.};

\end{tikzpicture}
    \end{subfigure}\hfill
    \begin{subfigure}[b]{0.49\textwidth}
        \begin{tikzpicture}[scale=0.8]

\begin{axis}[
name=plot2,
tick align=outside,
tick pos=left,
xlabel={$\eta$},
xmajorgrids,
xtick style={color=black},
ylabel={Relative Error ${\rm err}_{\eta}$},
ymajorgrids,
yminorgrids,
minor y grid style={
    opacity=0.5,    
    line width=0.5pt
},
ymode=log,
xmode=log,
ymin=7.e-5, ymax=2e-1,
ytick style={color=black},
legend cell align={left},
legend style={
anchor=north west,
at={(-0.05,1.15)},
fill opacity=1, 
draw opacity=1, 
text opacity=1,
legend columns=3,
column sep=5pt
}
]
\addplot[
    color=unia-pink, mark=o, mark size=2pt, line width=1pt] table[
    x=eta,
    y=mean1
    ]{figures/PLOTS/matrix_errors_eta.dat};

\addplot[
    color=unia-pink, mark=o, mark size=2pt, line width=1pt] table[
    x=eta,
    y=mean2
    ]{figures/PLOTS/matrix_errors_eta.dat};

\addplot[
    color=unia-pink, mark=o, mark size=2pt, line width=1pt] table[
    x=eta,
    y=mean3
    ]{figures/PLOTS/matrix_errors_eta.dat};

\addplot[
    color=unia-pink, mark=o, mark size=2pt, line width=1pt] table[
    x=eta,
    y=mean4
    ]{figures/PLOTS/matrix_errors_eta.dat};

\addplot[
    color=unia-pink, mark=o, mark size=2pt, line width=1pt] table[
    x=eta,
    y=mean5
    ]{figures/PLOTS/matrix_errors_eta.dat};

\addplot[
    color=unia-pink, mark=o, mark size=2pt, line width=1pt] table[
    x=eta,
    y=mean6
    ]{figures/PLOTS/matrix_errors_eta.dat};

\addplot[
    color=unia-pink, mark=o, mark size=2pt, line width=1pt] table[
    x=eta,
    y=mean7
    ]{figures/PLOTS/matrix_errors_eta.dat};

\addplot[
    color=unia-pink, mark=o, mark size=2pt, line width=1pt] table[
    x=eta,
    y=mean8
    ]{figures/PLOTS/matrix_errors_eta.dat};

\addplot[
    color=unia-pink, mark=o, mark size=2pt, line width=1pt] table[
    x=eta,
    y=mean9
    ]{figures/PLOTS/matrix_errors_eta.dat};

\addplot[
    color=unia-pink, mark=o, mark size=2pt, line width=1pt] table[
    x=eta,
    y=mean10
    ]{figures/PLOTS/matrix_errors_eta.dat};

\addplot[
    color=unia-pink, mark=o, mark size=2pt, line width=1pt] table[
    x=eta,
    y=mean11
    ]{figures/PLOTS/matrix_errors_eta.dat};

\addplot[
    color=unia-pink, mark=o, mark size=2pt, line width=1pt] table[
    x=eta,
    y=mean12
    ]{figures/PLOTS/matrix_errors_eta.dat};

\end{axis}

\node at (plot2.south)  [yshift=-45pt, text width=8cm, align=center]  {(b) Relative error with respect to the perturbation magnitude $\eta$ for all the 12 chosen REVs in log-log scale.};

\end{tikzpicture}
    \end{subfigure}\hfill
\caption{Mean sensitivity with variability over 100 random perturbation directions.}
  \label{fig:Sensitivity}
\end{figure}

\bibliographystyle{plain}
\bibliography{referencesPaper}

 \end{document}